\newcommand{\IGNORE}[1]{}
\tikzstyle{block}=[draw opacity=0.7,line width=1.4cm]
\tikzstyle{graphnode}=[circle, draw, fill=black!20, inner sep=0pt, minimum width=6pt]
\tikzstyle{point}=[circle, draw, fill=black!30, inner sep=0pt, minimum width=1pt]
\tikzstyle{input}=[rectangle, draw, fill=black!75,inner sep=3pt, inner ysep=3pt, minimum width=4pt]
\tikzstyle{unmatched}=[graphnode,fill=black!0]
\tikzstyle{shaded}=[graphnode,fill=black!20]
\tikzstyle{matched}=[graphnode,fill=black!100]  	
\tikzstyle{matching} = [ultra thick]
\tikzset{
	>=stealth',
	pil/.style={
		->,
		thick,
		shorten <=2pt,
		shorten >=2pt,}
}
\tikzset{->-/.style={decoration={
			markings,
			mark=at position .5 with {\arrow{>}}},postaction={decorate}}}
\newcommand{\todo}[1]{{\color{red}\textsl{\small[#1]}\marginpar{\large\textbf{To Do!}}}}
\newtheorem{theorem}{Theorem}[section]
\newtheorem{lemma}[theorem]{Lemma}
\newtheorem{corollary}[theorem]{Corollary}
\theoremstyle{definition}
\newtheorem{defn}[theorem]{Definition}
\newtheorem{rem}[theorem]{Remark}
\newcommand{\E}{\mathbb{E}}
\newcommand{\OPT}{\textsc{OPT}}
\newcommand{\calD}{\mathcal{D}}
\def\eps {\epsilon}
\def \reals {\mathbb{R}}
\newif\ifFULL
\newcounter{note}[section]
\newcommand{\snote}[1]{\refstepcounter{note}$\ll${\bf Sahil~\thenote:}
  {\sf \color{red}  #1}$\gg$\marginpar{\tiny\bf SS~\thenote}}
\newcommand{\rnote}[1]{\refstepcounter{note}$\ll${\bf Ravi~\thenote:}
  {\sf \color{orange}  #1}$\gg$\marginpar{\tiny\bf RR~\thenote}}
\newcommand{\p}{\ensuremath{\mathbf{p}\xspace}}
\newcommand{\emax}{\ensuremath{\mathsf{MinEMax}}\xspace}
\newcommand{\hybrid}{\ensuremath{\mathsf{Hybrid}}\xspace}
\newcommand{\univ}{U\xspace}
\newcommand{\cost}{\ensuremath{\mathsf{cost}}\xspace}
\newcommand{\bec}[1]{&{\text{$\left(\text{by #1} \right)$}}}
\newcommand{\val}{\text{val}}
\newcommand{\ct}{\cost_{\textsf{Trunc}}}
\newcommand{\Pe}{P_{\textsf{EMax}}}
\newcommand{\Pt}{P_{\textsf{Trunc}}}
\newcommand{\Ph}{P_{\textsf{Hyb}}}
\newcommand{\ce}{\cost_{\textsf{EMax}}}
\newcommand{\ch}{\cost_{\textsf{Hyb}}}
\newcommand{\cro}{\cost_{\textsf{Rob}}}
\newcommand{\cst}{\cost_{\textsf{Stoch}}}
\newcommand{\met}{\ensuremath{\mathcal{D}}}
\newcommand{\trunc}{\ensuremath{\mathsf{TruncatedTwoStage}}\xspace}
\newcommand{\med}{\ensuremath{\mathcal{M}\mathcal{E}\mathcal{D}}\xspace}
\def\drawScen[#1](#2,#3)[#4,#5] {
\draw (#2,#3) circle [radius=1]; \node [below] at (#2, #3) {#1};
\draw [fill=red!30] (#2,#3) rectangle (#2-.75,.4+#3+#4*.35); \node [below] at (#2-.375,#3 + #4*.35 + .4) {\small #4};
\draw [fill=blue!30] (#2,#3) rectangle (#2+.75,#3 +#5*3+.35); \node [below] at (#2+.375,#3 +#5*3 + .35) {\small #5};
}
\DeclareMathOperator*{\argmin}{arg\,min}
\title{
Prepare for the Expected Worst: \\
Algorithms for Reconfigurable Resources Under Uncertainty
}
\author[1]{D Ellis Hershkowitz\thanks{Supported in part by NSF grants CCF-1527110, CCF-1618280, CCF-1814603, CCF-1910588, NSF CAREER award CCF-1750808 and a Sloan Research Fellowship.}}
\author[1]{R. Ravi\thanks{Supported in part by the U. S. Office of
		Naval Research award N00014-18-1-2099, and the U. S. NSF award CCF-1527032.}}
\affil[1]{Carnegie Mellon University}
\author[2]{Sahil Singla\thanks{Much of this work was done while at Carnegie Mellon University. 
Supported in part by Schmidt Foundation and NSF awards CCF-1319811, CCF-1536002, and CCF-1617790.}}
\affil[2]{Princeton University and Institute for Advanced Study}
\date{ \today}
\begin{document}
\maketitle

\setlength{\abovedisplayskip}{2pt}
\setlength{\belowdisplayskip}{2pt}

\begin{abstract}
	In this paper we study how to optimally balance cheap inflexible resources with more expensive, reconfigurable resources despite uncertainty in the input problem. Specifically, we introduce the \emax model to study ``build versus rent'' problems. In our model different scenarios appear independently. Before knowing which scenarios appear, we may build rigid resources that cannot be changed for different scenarios. Once we know which scenarios appear, we are allowed to rent reconfigurable but expensive resources to use across scenarios. Although computing the objective in our model might seem to require enumerating exponentially-many possibilities, we show it is well-estimated by a surrogate objective which is representable by a polynomial-size LP. In this surrogate objective we pay for each scenario only to the extent that it exceeds a certain threshold. Using this objective we design algorithms that approximately-optimally balance inflexible and reconfigurable resources for several NP-hard covering problems. For example, we study minimum spanning and Steiner trees, minimum cuts and facility location variants. Up to constants our approximation guarantees match those of previous algorithms for the previously-studied demand-robust and stochastic two-stage models. Lastly, we demonstrate that our problem is sufficiently general to smoothly interpolate between previous demand-robust and stochastic two-stage problems.
\end{abstract}
\thispagestyle{empty}
\newpage

\setcounter{page}{1}


\section{Introduction}

\IGNORE{
\textbf{Story 1 (Ski rental)}
Consider the algorithmic challenges faced by a skier. To go skiing on a particular day a skier must have skis. However, the skier only has a rough sense of the days on which they will go skiing. At the beginning of the skiing season a skier can choose to pre-rent skis for some subset of the days; however, skis are in high demand and so the skier must reserve these skis now since, come the day of skiing, all rental skis will already be reserved. If a day comes when the skier choose to ski but has not rented skis, a skier can simply buy a personal pair of skis which can be reused to go skiing any given day. Thus, the skier must balance preemptively purchasing one-time-use resources (renting skis) with a later purchase of a reusable resource (buying skis). Overall, then, the total price that the skier pays is the price of pre-renting skis plus the cost of buying skis if there is some day where they want to ski but did not reserve skis. Stated otherwise, a skier pays for renting skis plus the \emph{expected maximum} additional cost of skiing on any given day where this expectation is taken over the probability that the skier skis. Balancing such one-off commodities with more expensive but flexible commodities extends to various problems in network design and facility location.
}

Optimizing for reconfigurable resources under uncertainty formalizes the challenges of balancing expensive, flexible resources with cheap, inflexible ones. For example, such optimization problems formalize the challenges in ``build versus rent'' problems. Concretely, consider the algorithmic challenges faced by an Internet service provider (ISP). An ISP must provide content to its customers while balancing between rigid and reconfigurable resources. In particular, it can build out its own network---a rigid resource---or choose to support traffic on a competitor's network---a flexible resource---at a marked up premium. This latter resource is reconfigurable since an ISP can change which edges in a competitor's network it uses at any given time. To minimize the additional load on its network, the competitor charges the ISP for the maximum extra bandwidth it must support at any given moment.
Furthermore, an ISP only has probabilistic knowledge of where customer demands will occur: Based on where previous demands have occurred an ISP estimates future demands, but it does not exactly know the future demands. If a demand occurs which the ISP's network cannot service, it
must use the competitor's network to support it.
Thus, an ISP  balances between rigid and flexible resources in the face of uncertainty, and pays for the cost of its own network plus the cost of supporting the \emph{expected maximum} traffic routed on its competitor's network.

In this paper, we introduce the \emax model to study the algorithmic challenges associated with optimizing reconfigurable resources under uncertainty. In our model we are given a set of \emph{scenarios} that might occur. In the preceding example these scenarios were the sets of possible demands.
We think of problems in our model as being divided between a first stage where we  ``build'' rigid resources and a second stage where we ``rent'' flexible resources. In particular, in the first stage we can build non-reconfigurable resources without knowing which scenarios occur. In the second stage, each scenario independently realizes according to its specified Bernoulli probability, and we can rent reconfigurable resources at an increased cost to use among any of our scenarios.
~For instance, in the preceding example the ISP first built its own network and then, once it learned where demands occurred, it could rent bandwidth to support different demands over time. In fact, this example is exactly our \emax Steiner tree problem.  Thus, the objective we  minimize is the first stage cost plus the \emph{expected maximum} cost of additional reconfigurable resources required for any realized scenario; hence the name of our model.

Since every scenario is an independent Bernoulli, there are exponentially-many ways in which scenarios realize. It is not even clear, then, how to compute the expected second-stage cost. Nonetheless, we provide techniques to simplify and reason about the \emax cost and, therefore, solve various \emax problems.

The primary contributions of our work are as follows.
\begin{enumerate}[itemsep=0ex,label=(\roman*),topsep=0pt,parsep=0pt]
	\item We introduce the \emax model for optimization of reconfigurable resources under uncertainty.
	\item We show that, although evaluating the \emax objective function may seem difficult, a \emax problem can be approximately reduced to a ``\trunc'' problem whose objective is representable by an LP.  \label{contrib:EMaxToTrunc}
	\item Armed with \ref{contrib:EMaxToTrunc}, we adapt various rounding techniques to give approximation algorithms for a variety of two-stage \emax problems including spanning and Steiner trees, cuts and facility location problems.
	\item Lastly, we show that the \emax model  captures the commonly studied two-stage models for optimization under uncertainty: the stochastic and demand-robust models. Indeed, we show that it generalizes a ``\hybrid'' problem that smoothly interpolates between the stochastic and demand-robust models.
\end{enumerate}

\subsection{Related Work}
Significant prior work has been done in two-stage optimization under uncertainty. The two most commonly studied models are the stochastic model~\cite{RS-IPCO04,GPRS-STOC04,SS2006} and the demand-robust model~\cite{DGRS-FOCS05,AGGN-SODA08,GNR-ICALP10,Golovin2015}. In the \textbf{stochastic two-stage model} a probability distribution is given over scenarios, and our objective is to minimize the \emph{expected} total cost. In the {\bf demand-robust} two-stage model we always pay for the \emph{worst-case} scenario given our first stage solution.

\emph {Distributionally robust optimization} (DRO)~\cite{Scarf1958,GohSim2010,DelageYe2010,BBC2011} captures a problem similar to our own. In the DRO model we are given a distribution along with a ball of ``nearby'' distributions and must pay the \emph{worst-case expectation} over all distributions. Similarly to our own model, DRO generalizes both the stochastic and demand-robust two-stage models. Our model can be seen as a ``flip'' of the DRO model: while the DRO model takes the worst-case over distributions our model takes a distribution over worst cases. Like DRO, our model is also sufficiently general to capture stochastic and demand-robust optimization. An exciting forthcoming result \cite{LS-STOC19}---which shows that approximation algorithms are possible in DRO---complements our approximation algorithms in \emax.

A well studied measure for risk-aversion from stochastic programming is \emph{conditional value at risk (CVaR)} \cite{acerbi2002expected}. Roughly, CVaR gives the average cost in the worst-case case $\alpha$ tail of a distribution.  A notable recent work in CVaR presents a data-driven approach to two-stage risk aversion \cite{jiang2018risk}. 
Theorem 1 of this work is reminiscent of our reduction of \emax to \hybrid; this theorem shows that their objective function can be reformulated as a combination of the CVaR cost and the worst-case distribution. We emphasize that while CVaR might appear similar to the \trunc metric studied in this work, these two metrics are distinct and not readily comparable. Two salient differences are (1) the threshold in the \trunc objective is the minimizing threshold while in CVaR the threshold is fixed and (2) the \trunc objective sums up the truncated cost over a set of Bernoulli random variables whereas CVaR takes a truncated average cost with respect to a single distribution. We also note that, to our knowledge, CVaR has not been studied in an approximation algorithms context, the focus of this work.

Several additional models for optimization under uncertainty---some of which even interpolate between stochastic and demand-robust---have also been studied. A series of papers~\cite{S-SODA07,S-OR09,S-SODA11} has also examined various models of two-stage optimization which capture risk-aversion. Notably, the model of~\cite{S-SODA11} interpolates between stochastic and demand-robust while also accommodating \emph{black-box} distributions. Other papers~\cite{GPRS-STOC04} have also studied algorithms for stochastic optimization given access to black-box distributions. There has also been work on two-stage stochastic models in which---as in our model---independent stochastic outcomes factor prominently. For example, Immorlica et al.~\cite{IKMM-SODA04} study a two-stage stochastic model in which ``clients'' each activate independently and the realized scenario consists of all activated clients. The primary difference between this model and our own is that in our model entire scenarios---rather than clients---activate independently. Moreover, reconfigurability of resources does not factor whatsoever into this model.


Lastly, a series of work \cite{lai1976maximally,meilijson1979convex,bertsimas2004probabilistic,dhara2017polynomial} have made use of the bound we use in our reduction of \emax to \trunc in settings distinct from our own. For example, \cite{bertsimas2004probabilistic} shows that this bound tightly estimates the value of the optimal solution in an optimization problem where the cost function is random and only marginal distributions for the coefficients of the cost function are known. Unlike our work, these works are not concerned with approximation algorithms for two-stage NP-hard problems.

\subsection{Models}\label{sec:models}
We now formally define  our new \emax model and the prior models that we generalize. We study two-stage covering problems, defined as follows.

\paragraph{\textbf{Two-Stage Covering.}}
Let $\univ$ be the universe of \emph{clients} (or demand requirements), and let $X$ be the set of \emph{elements} that we can purchase. Every scenario $S_1,S_2,\ldots,S_m$ is a subset of clients. Let $sol(S_s)$ for $s \in [m]$ denote the sets in $2^X$ which are feasible to \emph{cover} scenario $S_s$. In covering problems if $A \subseteq B$ and $A \in sol(S_s)$, then $B \in sol(S_s)$. We are also given a cost function $\cost: 2^X \times 2^X \rightarrow \reals$. For a given a specification of $\cost$,  scenarios, clients, and feasibility constraints, we must find a set of elements $X_1\subseteq X$ to be bought in the first stage, and a set of elements $X_2^{(s)}\subseteq X$ to be bought in the second stage s.t.  $X_1 \cup  X_2^{(s)} \in sol(S_s)$ for every $s$. Our goal is to find a solution of minimal cost where the cost of a solution is discussed below.

This paper makes the common assumption that $\cost$ is \emph{linear}, i.e.,  $\cost(X_1, X_2^{(s)})$ equals $\cost(\emptyset, X_2^{(s)}) + \cost(X_1,  \emptyset)$ for any $X_1, X_2^{(s)} \subseteq 2^X$.
Let $\pmb{X_2} := (X_2^{(1)}, \ldots, X_2^{(m)})$; throughout the paper a bold variable  denotes a vector.


We now describe and discuss how different cost functions yield different two-stage covering models.

\paragraph{\textbf{Prior Models.}}
In the \emph{demand-robust} two-stage covering model the cost of solution $(X_1, \pmb{X_2})$ is the maximum cost  over all  the scenarios:
\begin{align} \label{eq:RobModel}
\cro(X_1, \pmb{X_2}) := \max_{s \in [m]}\Big\{\cost(X_1,X_2^{(s)}) \Big\}.
\end{align}

In the \emph{stochastic} two-stage covering model we are given a probability distribution $\calD$ over $m$ scenarios with which exactly one of them realizes; i.e.\ $\sum_{s \in [m]} \mathcal{D}(s) = 1$. The cost of solution $(X_1, \pmb{X_2})$ is the expected cost:
\begin{align} \label{eq:StochModel}
\cst(X_1, \pmb{X_2}) := \E_{s\sim \calD}[\cost(X_1,X_2^{(s)})].
\end{align}

\paragraph{\textbf{Our New \emax Model.}} In the $\emax$ two-stage covering model we are given probabilities  $\p = \{p_1, \ldots, p_m\}$ with which each scenario {\em independently} realizes. The cost of solution $(X_1, \pmb{X_2})$ is the expected maximum cost among the realized scenarios:
\begin{align} \label{eq:emaxModel}
\ce(X_1, \pmb{X_2}) := \E_{A \sim \p} \Big[\max_{s \in A} \Big\{\cost(X_1,X_2^{(s)}) \Big\}\Big]
\end{align}
where $A$ contains each $s$ independently w.p. $p_s$. To avoid confusion, we reiterate that  unlike the stochastic model, in \emax multiple scenarios may simultaneously appear in $A$ because each of them independently realizes. We shall assume without loss of generality that $\sum_s p_s \geq 1$ throughout this paper since one can always ensure this without affecting solutions to the problem by adding dummy scenarios of cost $0$ and probability $1$.

As a concrete example these models, consider the following star covering problem. We are given a star graph with root $r$ and leaves $v_1, \ldots, v_m$. Each edge $e_i = (r, v_i)$ can be purchased in the first stage at cost $c_i$ and in the second stage at an inflated  cost  $\sigma \cdot c_s$ for $\sigma > 1$. Our goal is to connect $r$ to an  \emph{unknown} vertex $v_s$ with minimum total two-stage cost. In particular, $v_s$ is only revealed after we purchase our first-stage edges, $X_1$, at which point we must purchase $e_s$ in a second stage at cost $\sigma \cdot c_s$ if $e_s$ was not already purchased in the first stage. In all three models we initially buy some set of edges. In the stochastic version of this problem a single $v_s$ then appears according to a distribution and we must pay to connect $v_s$ if we have not already. In the demand-robust version of this problem, $v_s$ is always chosen so as to maximize our second stage cost. However, in our \emax version of this problem several $v_s$ appear and we must pay for a budget of reconfigurable edge resource to be reused for every $v_s$. See \Cref{fig:star} for an illustration.

\tikzset{nodeStyle/.style={circle,draw=black, minimum size=17,inner sep=2, fill=white}}
\tikzset{edgeStyle/.style={draw=black, above, text=red!70}}

\newcommand{\prMinSize}{10.5}
\newcommand{\opacVal}{.1}
\newcommand{\nodeStyleDefault}{circle} 
\newcommand{\edgeStyleDefault}{edgeStyle/.style={above}}

\begin{figure}
	\centering
	\begin{subfigure}[t]{.24\linewidth}
		\centering
		\begin{tikzpicture}[scale=.19,every node/.style={scale=0.8}]
		\draw [thick] (0,0) rectangle (20,20); \node [below right] at (0,20) {One Stage};
		
		\draw (0,17.5)--(20,17.5);
		
		\node[nodeStyle] (r) at (10,11) {r};
		\node[nodeStyle] (S1) at (5,10) {\small $v_1$}; \draw[edgeStyle]  (r) edge node{\small 3} (S1); \draw[edgeStyle, green!30, ultra thick]  (r) edge (S1);
		\node[nodeStyle, opacity=\opacVal] (S2) at (7.5,6.5) {\small $v_2$};  \draw[edgeStyle, left]  (r) edge node{\small 2} (S2);
		\node[nodeStyle, opacity=\opacVal] (S3) at (12.5,6.5) {\small $v_3$};  \draw[edgeStyle, right]  (r) edge node{\small 4} (S3);
		\node[nodeStyle, opacity=\opacVal] (S4) at (15,10) {\small $v_4$};  \draw[edgeStyle]  (r) edge node{\small 9} (S4);
		\end{tikzpicture}
		\caption{If the scenario to be covered is known to be $v_1$, the problem is trivial.}\label{sfig:oneStage}
	\end{subfigure}
	\begin{subfigure}[t]{.24\linewidth}
		\centering
		\begin{tikzpicture}[scale=.19,every node/.style={scale=0.8}]
		\draw [thick] (0,0) rectangle (20,20); \node [below right] at (0,20) {Stochastic};
		
		\draw (0,17.5)--(20,17.5);
		
		\node[nodeStyle] (r) at (11,15.5) {r};
		\node[nodeStyle] (S1) at (6,14.5) {\small $v_1$}; \draw[edgeStyle]  (r) edge node{\small 3} (S1); \node [draw=black, fill=blue!30, left=10, inner sep = 0, minimum size =\prMinSize] at (S1) {\small 0};
		\node[nodeStyle] (S2) at (8.5,11) {\small $v_2$};  \draw[edgeStyle, left]  (r) edge node{\small 2} (S2); \node [draw=black, left=10,fill=blue!30, inner sep = 0, minimum size =\prMinSize] at (S2) {\small .3};
		\node[nodeStyle] (S3) at (13.5,11) {\small $v_3$};  \draw[edgeStyle, right]  (r) edge node{\small 4} (S3); \node [draw=black, right=10,fill=blue!30, inner sep = 0, minimum size =\prMinSize] at (S3) {\small .6}; \draw[edgeStyle, green!30, ultra thick]  (r) edge (S3);
		\node[nodeStyle] (S4) at (16,14.5) {\small $v_4$};  \draw[edgeStyle]  (r) edge node{\small 9} (S4); \node [draw=black, right=10,fill=blue!30, inner sep = 0, minimum size =\prMinSize] at (S4) {\small .1};
		
		\draw [dashed] (0,9)--(20,9);
		
		\node [rotate=90, below] at (0,4.5) {\small 2nd Stage};
		\node[nodeStyle] (r) at (11,6.5) {r};
		\node[nodeStyle, opacity=\opacVal] (S1) at (6,5.5) {\small $v_1$}; \draw[edgeStyle]  (r) edge node{\small 6} (S1);
		\node[nodeStyle] (S2) at (8.5,2) {\small $v_2$};  \draw[edgeStyle, left]  (r) edge node{\small 4} (S2); \draw[edgeStyle, green!30, ultra thick]  (r) edge (S2);
		\node[nodeStyle, opacity=\opacVal] (S3) at (13.5,2) {\small $v_3$};  \draw[edgeStyle, right]  (r) edge (S3); \draw[edgeStyle, green!30, ultra thick]  (r) edge (S3);
		\node[nodeStyle, opacity=\opacVal] (S4) at (16,5.5) {\small $v_4$};  \draw[edgeStyle]  (r) edge node{\small 18} (S4);
		
		\node [rotate=90, below] at (0,13) {\small 1st Stage};
		\end{tikzpicture}
		\caption{Exactly one scenario realizes according to a probability distribution.} \label{sfig:stochastic}
	\end{subfigure}
	\begin{subfigure}[t]{.24\linewidth}
		\centering
		\begin{tikzpicture}[scale=.19,every node/.style={scale=0.8}]
		\draw [thick] (0,0) rectangle (20,20); \node [below right] at (0,20) {Demand Robust};
		
		\draw (0,17.5)--(20,17.5);
		
		\node[nodeStyle] (r) at (10,15.5) {r};
		\node[nodeStyle] (S1) at (5,14.5) {\small $v_1$}; \draw[edgeStyle]  (r) edge node{\small 3} (S1);
		\node[nodeStyle] (S2) at (7.5,11) {\small $v_2$};  \draw[edgeStyle, left]  (r) edge node{\small 2} (S2);
		\node[nodeStyle] (S3) at (12.5,11) {\small $v_3$};  \draw[edgeStyle, right]  (r) edge node{\small 4} (S3);
		\node[nodeStyle] (S4) at (15,14.5) {\small $v_4$};  \draw[edgeStyle]  (r) edge node{\small 9} (S4); \draw[edgeStyle, green!30, ultra thick]  (r) edge (S4);
		
		\draw [dashed] (0,9)--(20,9);
		
		\node [rotate=90, below] at (0,4.5) {\small 2nd Stage};
		\node[nodeStyle] (r) at (10,6.5) {r};
		\node[nodeStyle] (S1) at (5,5.5) {\small $v_1$}; \draw[edgeStyle]  (r) edge node{\small 6} (S1);
		\node[nodeStyle] (S2) at (7.5,2) {\small $v_2$};  \draw[edgeStyle, left]  (r) edge node{\small 4} (S2);
		\node[nodeStyle, color=red!80, fill=white, densely dashed] (S3) at (12.5,2) {\small $v_3$};  \draw[edgeStyle, right]  (r) edge node{\small 8} (S3); \draw[edgeStyle, green!30, ultra thick]  (r) edge (S3);
		\node[nodeStyle] (S4) at (15,5.5) {\small $v_4$};  \draw[edgeStyle]  (r) edge (S4); \draw[edgeStyle, green!30, ultra thick]  (r) edge (S4);
		
		\node [rotate=90, below] at (0,13) {\small 1st Stage};
		\end{tikzpicture}
		\caption{Given a first stage solution, adversary chooses the costliest scenario.}\label{sfig:demandRob}
	\end{subfigure}
	\begin{subfigure}[t]{.24\linewidth}
		\centering
		\begin{tikzpicture}[scale=.19,every node/.style={scale=0.8}]
		\draw [thick] (0,0) rectangle (20,20); \node [below right] at (0,20) {\emax};
		
		\draw (0,17.5)--(20,17.5);
		
		\node[nodeStyle] (r) at (11,15.5) {r};
		\node[nodeStyle] (S1) at (6,14.5) {\small $v_1$}; \draw[edgeStyle]  (r) edge node{\small 3} (S1); \node [draw=black, left=10,fill=blue!30, inner sep = 0,, minimum size=\prMinSize] at (S1) {\small .1};
		\node[nodeStyle] (S2) at (8.5,11) {\small $v_2$};  \draw[edgeStyle, left]  (r) edge node{\small 2} (S2); \node [draw=black, left=10,fill=blue!30, inner sep = 0, minimum size=\prMinSize] at (S2) {\small .8}; \draw[edgeStyle, green!30, ultra thick]  (r) edge (S2);
		\node[nodeStyle] (S3) at (13.5,11) {\small $v_3$};  \draw[edgeStyle, right]  (r) edge node{\small 4} (S3); \node [draw=black, right=10,fill=blue!30, inner sep = 0, minimum size=\prMinSize] at (S3) {\small .3}; \draw[edgeStyle, green!30, ultra thick]  (r) edge (S3);
		\node[nodeStyle] (S4) at (16,14.5) {\small $v_4$};  \draw[edgeStyle]  (r) edge node{\small 9} (S4); \node [draw=black, right=10,fill=blue!30, inner sep = 0, minimum size=\prMinSize] at (S4) {\small 1};
		
		\draw [dashed] (0,9)--(20,9);
		
		\node [rotate=90, below] at (0,4.5) {\small 2nd Stage};
		\node[nodeStyle] (r) at (10,6.5) {r};
		\node[nodeStyle, opacity=\opacVal] (S1) at (5,5.5) {\small $v_1$}; \draw[edgeStyle]  (r) edge node{\small 6} (S1);
		\node[nodeStyle] (S2) at (7.5,2) {\small $v_2$};  \draw[edgeStyle, left]  (r) edge (S2);  \draw[edgeStyle, green!30, ultra thick]  (r) edge (S2);
		\node[nodeStyle, opacity=\opacVal] (S3) at (12.5,2) {\small $v_3$};  \draw[edgeStyle, right]  (r) edge (S3);  \draw[edgeStyle, green!30, ultra thick]  (r) edge (S3);
		\node[nodeStyle, color=red!80, fill=white, densely dashed] (S4) at (15,5.5) {\small $v_4$};  \draw[edgeStyle]  (r) edge node{\small 18} (S4);  \draw[edgeStyle, green!30, ultra thick]  (r) edge (S4);
		
		\node [rotate=90, below] at (0,13) {\small 1st Stage};
		\end{tikzpicture}
		\caption{Given a first stage solution, adversary chooses the costliest realized scenario.}\label{sfig:emax}
	\end{subfigure}
	\caption{Star graph \emax for $m=4$. Green edges: edges bought by solution. $e_i$ labeled by its cost in each stage for $\sigma = 2$. Non-opaque second-stage node: realized scenario. Blue square: probability of scenario. Dashed red nodes: nodes chosen by an adversary.}\label{fig:star}
\end{figure}

\IGNORE{
 Let $S_1,
S_2,\ldots,S_m \subset U$ be all the scenarios.
For every scenario $S_k$, let $sol(S_k)$ denote the sets in $2^X$ which are feasible to cover $S_i$: the covering formulation require that $A \subseteq B$ and $A \in sol(S_k)$ $\Rightarrow B
\in sol(S_k)$.
The cost of an
element $x\in X$ in the first stage is $c_0(x)$. In the $k^{th}$
scenario, it becomes costlier by a factor $\sigma_k$ i.e. $c_k(x)=
\sigma_k c_0(x)$. In the second stage, one of the scenarios is
realized i.e. one of the subsets $S_i$ realizes and the
corresponding requirements need to be satisfied.  Now, a feasible
solution specifies the elements $X_0$ to be bought in the first
stage, and set of elements $X_k$ to be bought in the recourse stage
if scenario $k$ is realized, such that $X_0 \cup X_k$ contains a
feasible solution for client set $S_k$. Furthermore, the cost of
covering scenario $k$ is $c_0(X_0) + c_k(X_k)$. In the demand-robust
two-stage problem, the objective is to minimize the maximum cost of
over all  scenarios.
Note that we pay for all the elements in $X_0$ even though some of them may
not be required in the solution for any one fixed scenario.
The stochastic version of the problem involves minimizing $c_0(X_0^*) + \sum_{i=1}^m p_i\sigma_i c_0(X_i^*$ when we are given the probabilities $p_i$ for scenarios $i$.

As an example, the demand-robust ``rooted'' min-cut problem has $X
= $ the edge set of an undirected graph, a prespecified root and
each $S_k$ speficied by a terminal $t_k$. $sol(S_k)$ is the set of
all edge sets that separate $t_k$ from $r$. As another example, in
the demand-robust ``rooted'' Steiner tree problem, we have $X = $
the edges of an undirected graph, a prespecified root $r$ and each
scenario $S_k$ specified by a set of terminals $t_1^k,
t_2^k,\ldots$. $sol(S_k)$ is the set of all edge sets that connect
all $t_1^k, t_2^k, \ldots$ to the root $r$.

We define a new problem that involves specifying both the stochastic input distribution $p_i$ over the scenarios and a caution parameter $\rho \in [0,1]$.

\begin{defn}
We are given a two-stage covering problem on the universe $U$ of demands and set $X$ of elements that can be purchased in scenarios $i = 0,1,\ldots k$ with costs $c_k$,  as well as a probability distribution $p_i$ for $i = 1,\ldots,k$ over second stage scenarios, and a caution parameter $0 \leq \rho \leq 1$.
The realization model for the second stage scenarios is that w.p. $1-\rho$, the second stage scenario will realize from the given distribution $p_i$, and w.p. $\rho$, the second stage scenario can be chosen adversarially from the set of given $k$ scenarios after the adversary examines the first stage solution.
The goal is to design an algorithm that will find solutions $x_0$ for the first stage, and a set of solutions for the second stage: $x_i$ for $i=1,\ldots,k$ for the second stage which will complete the first stage solution when the corresponding scenario arises. In other words, $x_0 + x_i$ is feasible for scenario $i$.
\end{defn}

Note that there is no need to allow for different solution completions of $x_0$ when the realization of scenario $i$ occurs in the stochastic model (w.p. $(1-\rho) \cdot p_i$) or supplied in the remaining $\rho$ probability adversarial situation as the input. This is because we could always use the cheaper of the two as the feasible solution for the other situation, thus making do with a single completion per scenario.

The model can be generalized to allow for the set of stochastic and robust scenarios to be different, in that we are preparing for a different set of adversarial scenarios than in the average case.

\rnote{Generalize the set of scenarios to those that arise in the stochastic case and those in the robust case? The only advantage would be that the set of robust scenarios can be curtailed to be less than that considered in the stochastic case, which seems unnatural.}

Our goal is to design approximation algorithms matching the best known guarantees for the base problems (i.e. the worse of the stochastic and robust versions) for any given value of $\rho$. Our hope is that these algorithms will need to adapt themselves according to the specification of $\rho$ and thus lead to new algorithm design ideas for these hybrid situations. In this context, our work is inspired by the previous work of~\cite{Suporn-reu-2002}.
}

\subsection{Technical Results and Intuition}
We now  discuss our technical results.
As earlier noted, capturing the \emax objective is challenging: scenarios may realize in exponentially-many ways and so it seems that even computing the objective is computationally infeasible. We solve this issue by showing that to solve a \emax problem, $\Pe$, it suffices to solve its \trunc version, $\Pt$. A \trunc problem is identical to a \emax problem but the cost of a solution $(X_1, \pmb{X_2})$ is its truncated sum:
\begin{align}
\textstyle{ \ct(X_1, \pmb{X_2}) := \min_B \Big[ B + \sum_{s \in [m]} p_s\cdot (\cost(X_1,X_2^{(s)}) - B)^+\Big]. } \label{eq:truncObjective}
\end{align}
We will later see that $\Pt$ can be represented by an LP and, therefore, can be efficiently approximated by various rounding techniques. The following theorem shows that to approximate a \emax problem, it suffices to consider its \trunc version.

\begin{restatable}{theorem}{EmaxToCostB}\label{thm:EmaxToCostB} Let $\Pe$ be a \emax problem and let $\Pt$ be its corresponding \trunc problem. An $\alpha$-approximation algorithm for $\Pt$ is a $\left ( \frac{\alpha}{1-1/e} \right)$-approximation algorithm for $\Pe$.
\end{restatable}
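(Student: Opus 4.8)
The plan is to reduce the theorem to a purely \emph{pointwise} comparison of the two objectives at a common feasible solution, and then chain that comparison with the approximation guarantee. Fix any feasible solution $(X_1,\pmb{X_2})$ and abbreviate the scenario costs $c_s := \cost(X_1,X_2^{(s)}) \ge 0$. The key claim I would establish is the two-sided pointwise bound
\[
(1-1/e)\,\ct(X_1,\pmb{X_2}) \;\le\; \ce(X_1,\pmb{X_2}) \;\le\; \ct(X_1,\pmb{X_2}).
\]
Granting this, the theorem follows from a four-step chain. Both $\Pe$ and $\Pt$ share the same feasible region (they differ only in the objective), so letting $\mathrm{ALG}$ be the solution returned by the $\alpha$-approximation for $\Pt$ and letting $O_T,O_E$ be optima for $\Pt,\Pe$ respectively, I would write
\[
\ce(\mathrm{ALG}) \;\le\; \ct(\mathrm{ALG}) \;\le\; \alpha\,\ct(O_T) \;\le\; \alpha\,\ct(O_E) \;\le\; \tfrac{\alpha}{1-1/e}\,\ce(O_E),
\]
using in turn the pointwise upper bound, the approximation guarantee, optimality of $O_T$ for $\ct$, and the pointwise lower bound.

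The upper bound $\ce \le \ct$ is elementary. For any threshold $B \ge 0$ and any realized set $A$ one has $\max_{s\in A} c_s \le B + \sum_{s\in A}(c_s - B)^+$: if the max is at most $B$ this is clear, and otherwise the single maximizing term already contributes its full excess. Taking expectation over $A\sim\p$ gives $\ce \le B + \sum_s p_s (c_s-B)^+$ for every $B$, and minimizing over $B$ yields $\ce \le \ct$.

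The crux is the lower bound $\ce \ge (1-1/e)\,\ct$, which is where the factor $1-1/e$ enters. I would sort scenarios so that $c_1 \ge c_2 \ge \cdots \ge c_m \ge c_{m+1}:=0$ and use a layer-cake decomposition. Writing $P_k := \sum_{i\le k}p_i$ and $q_k := \prod_{i\le k}(1-p_i)$, the telescoping identity $\max_{s\in A}c_s = \sum_k (c_k-c_{k+1})\,\mathbbm{1}[\{1,\dots,k\}\cap A \ne \emptyset]$ gives, after taking expectations over the independent realization,
\[
\ce \;=\; \sum_{k=1}^m (c_k-c_{k+1})\,(1-q_k),
\]
while evaluating the convex function $g(B):=B+\sum_s p_s(c_s-B)^+$ at its minimizing breakpoint yields the companion expression $\ct = \sum_{k=1}^m (c_k-c_{k+1})\,\min(P_k,1)$. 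Since the increments $c_k-c_{k+1}$ are nonnegative, it then suffices to prove the termwise inequality $1-q_k \ge (1-1/e)\min(P_k,1)$. I would deduce this from $q_k \le e^{-P_k}$: when $P_k \le 1$ it reduces to $1-e^{-x} \ge (1-1/e)x$ on $[0,1]$, which holds because $1-e^{-x}$ is concave and agrees with that line at $x=0$ and $x=1$; when $P_k > 1$ it reduces to $q_k \le 1/e$, immediate from $q_k \le e^{-P_k} < e^{-1}$.

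The main obstacle is this lower bound, and within it the identification of $\ct$ with the ``fill-to-probability-one'' sum $\sum_k (c_k-c_{k+1})\min(P_k,1)$; equivalently, recognizing $\ct$ as the concave closure (worst-case correlated expectation with marginals $\p$) of the monotone submodular function $A\mapsto \max_{s\in A}c_s$, so that the whole bound is an instance of the $1-1/e$ correlation gap for such functions. Verifying that the minimizer of $g$ sits at the breakpoint where $P_k$ crosses $1$, and that $g$ there equals the claimed layer-cake sum, is the one calculation requiring genuine care; the remaining steps are routine once that closed form is in hand.
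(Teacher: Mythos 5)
Your proposal is correct, and its top-level skeleton coincides with the paper's: both establish the pointwise sandwich $(1-1/e)\,\ct(X_1,\pmb{X_2}) \le \ce(X_1,\pmb{X_2}) \le \ct(X_1,\pmb{X_2})$ and then chain it with the $\alpha$-guarantee exactly as you do (the paper packages $\ct(O_T)\le\ct(O_E)\le\frac{1}{1-1/e}\ce(O_E)$ as its Lemma \ref{lem:TOptToEOpt} and $\ce\le\ct$ as Lemma \ref{lem:xEToTrunc}). Where you genuinely diverge is in how the sandwich is proved. The paper splits it into two lemmas: Lemma \ref{lem:eTrick} bounds $\E[\max_s Y_s v_s]$ between $(1-1/e)$ times and $1$ times the quantity $v_b+\sum_s p_s(v_s-v_b)^+$ via a probabilistic argument (truncate the probabilities so the top block $M=[b]$ has mass exactly $1$, observe that some element of $M$ realizes with probability at least $1-1/e$, and condition on that event), and Lemma \ref{lem:altTruncForm} separately shows by a convexity/perturbation argument that $v_b$ is the minimizer of $B\mapsto B+\sum_s p_s(v_s-B)^+$, so that this quantity equals $\ct$. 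Your route instead writes both sides in layer-cake form, $\ce=\sum_k(c_k-c_{k+1})(1-q_k)$ and $\ct\le\sum_k(c_k-c_{k+1})\min(P_k,1)$, and reduces everything to the termwise inequality $1-q_k\ge(1-1/e)\min(P_k,1)$, which follows from $q_k\le e^{-P_k}$ and concavity of $1-e^{-x}$ on $[0,1]$. This buys two things: the source of the constant $1-1/e$ becomes completely transparent (it is the chord of $1-e^{-x}$ at $x=1$), and you do not actually need the exact identification of the minimizing $B$ for the lower-bound direction, only $\ct\le g(c_b)$, which is free; the paper's conditioning step $\E[\max\mid\text{at least one elt of }M]\ge\E[\max\mid\text{exactly one elt of }M]$ is also avoided entirely. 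The paper's probabilistic argument, on the other hand, generalizes more readily (it is the standard correlation-gap argument for submodular functions, which the paper notes in passing), whereas your telescoping identity is specific to the $\max$. One small caveat: your claimed equality $\ct=\sum_k(c_k-c_{k+1})\min(P_k,1)$ does require verifying the breakpoint is the minimizer (the paper's Lemma \ref{lem:altTruncForm}), but as noted only the inequality $\le$ is used, so this does not affect correctness.
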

\noindent\textbf{Intuition.} The main observation we use to show this theorem is that a set of expensive scenarios with large total probability mass dominates the cost of a given \emax solution. We illustrate this observation with an example. Let $(X_1, \pmb{X_2})$ be a solution for a \emax problem. Now WLOG let $\cost(X_1, X_2^{(s)} ) \geq \cost(X_1, X_2^{(s+1)})$ for all $s$, i.e., the $s$th scenario is more expensive than the $(s+1)$th scenario for our solution. Let $M:= [k]$ be the indices of the first $k$ scenarios such that $\sum_{s \leq k} p_s$ is large; say, at least $1$. Let the border $B:=\cost(X_1, X_2^{(k)} )$ be the cost of the least expensive scenario with an index in $M$. Because there is a great deal of probability mass among scenarios in $M$ we know that with large probability some scenario in $M$ will always appear. Whenever a scenario of cost less than $B$ appears we know that with good probability something in $M$ has also appeared of greater cost. Thus, as far as the expected max is concerned, a scenario that costs less than $B$ can be ignored. Lastly, while it is not immediately clear how to represent $\ct$ function  in an LP, we show using a simple convexity argument how this can accomplished.

Next, we design approximation algorithms for two-stage covering problems in the \emax model.

\begin{restatable}{theorem}{Applications}\label{thm:Applications}
	For two-stage covering problems there exist polynomial-time approximation algorithms with the following guarantees. \footnote{
		The $O(1)$ in the $k$-center approximation is roughly $57$.}
	\def\arraystretch{1.4}
	
	\begin{center}
		\begin{tabular}{|c|c|c|c|c|c|}
			\hline
			\emax Problem  &  UFL & Steiner tree & MST & Min-cut & $k$-center \\
			\hline
			Approximation \ \quad & ${ \frac{8}{1-1/e}}$ & ${\frac{30}{1-1/e}}$ & ${O(\log n + \log m})$  & ${\frac{4}{1-1/e}} $ &  ${O(1)}$ \\
			\hline
		\end{tabular}
	\end{center}
\end{restatable}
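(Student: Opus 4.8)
The plan is to use \Cref{thm:EmaxToCostB} as a black box: for each covering problem it reduces approximating the \emax objective $\ce$ to approximating the corresponding \trunc objective $\ct$, at the price of a $\frac{1}{1-1/e}$ factor. So it suffices to give an $\alpha$-approximation for the \trunc version $\Pt$ of each problem, with $\alpha = 8$ for UFL, $30$ for Steiner tree and $4$ for min-cut (the MST and $k$-center bounds absorb their constants into the big-$O$). The first step, common to all problems, is to write $\Pt$ as a linear program. Recall
\[
\ct(X_1, \pmb{X_2}) = \min_B \Big[ B + \sum_{s \in [m]} p_s\,(\cost(X_1, X_2^{(s)}) - B)^+ \Big].
\]
Since $(\cdot)^+$ is convex and $\cost$ is linear in the element-selection variables, I would linearize by introducing a threshold variable $B$ and excess variables $z_s \ge 0$ with constraints $z_s \ge \cost(X_1, X_2^{(s)}) - B$, and minimize $B + \sum_s p_s z_s$ over the fractional indicators for $X_1$ and each $X_2^{(s)}$, subject to that problem's covering constraints (connectivity for Steiner/MST, cut constraints for min-cut, assignment constraints for UFL and $k$-center). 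For fixed element variables the inner optimum over $(B,\pmb{z})$ recovers $\ct$ exactly, so this is a valid LP relaxation.

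The rounding then exploits the structure forced by the shared threshold. Let $B^\star$ be the threshold in an optimal LP solution; it splits the scenarios into an \emph{expensive} group, whose fractional cost is comparable to $B^\star$, and a \emph{cheap} group, charged only through the excess terms $p_s z_s$ (mirroring the set $M$ of dominating scenarios in the intuition after \Cref{thm:EmaxToCostB}). I would round the expensive group \emph{robustly}: build a single integral first stage $X_1$ that covers all expensive scenarios simultaneously and charge its cost against the $B^\star$ term, as in demand-robust rounding. The cheap group I would round \emph{stochastically}: round each recourse $X_2^{(s)}$ essentially independently and bound the expected rounded excess by a constant times $\sum_s p_s z_s$. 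Composing the robust rounding of the $B^\star$ term with the stochastic rounding of the excess term, and plugging in the standard per-problem LP-rounding primitives (e.g.\ filtering/clustering for UFL and $k$-center, a first-stage tree augmented per scenario for Steiner tree, and ball-growing on the cut LP for min-cut), yields the constants $8$, $30$, $4$ for \trunc UFL, Steiner tree and min-cut, which become $\frac{8}{1-1/e}$, $\frac{30}{1-1/e}$, $\frac{4}{1-1/e}$ after \Cref{thm:EmaxToCostB}.

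For MST the extra $O(\log n + \log m)$ is expected rather than lost in the reduction: spanning the terminal sets across scenarios forces a group-Steiner-style rounding, where I would obtain the $\log n$ from a probabilistic tree embedding of the metric and the $\log m$ from randomized rounding/union bound over the $m$ scenarios. For $k$-center the hard constraint of opening at most $k$ centers blocks a clean covering-LP argument, so I would instead guess the optimal radius, build the threshold graph, and combine a robust dominating-set argument on the expensive scenarios with a stochastic clustering of the cheap ones; the several layers of this reduction are what inflate the constant to roughly $57$.

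The main obstacle is the rounding of the \trunc LP itself rather than any individual problem. Unlike an ordinary two-stage stochastic or demand-robust LP, here all scenarios are coupled through the single threshold $B$, so one cannot round scenario-by-scenario and simply sum costs; the rounded solution must have small truncated cost against a threshold comparable to $B^\star$. The delicate points are to commit to one threshold (namely $B^\star$ up to a constant scaling), to round the first stage \emph{once} so its cost is charged against $B^\star$ and not re-paid per scenario, and to bound the expected rounded excess of the cheap scenarios against $\sum_s p_s z_s$ without double-counting the shared first-stage cost. Approximating the threshold term and the excess term simultaneously is exactly where each per-problem constant is pinned down, and is the step I expect to require the most care.
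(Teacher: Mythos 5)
There is a genuine gap, and it sits exactly where you flagged the difficulty. You assert that because all scenarios are coupled through the single threshold $B$, ``one cannot round scenario-by-scenario and simply sum costs,'' and you therefore propose splitting scenarios into an expensive group (rounded robustly, with a single first stage covering all of them charged against $B^\star$) and a cheap group (rounded stochastically against the excess terms). The paper's central observation is the opposite: scenario-by-scenario rounding \emph{does} suffice. Its \Cref{lem:scenByScen} shows that if a rounding guarantees $\cost(X_1,X_2^{(s)}) \leq c\cdot\cost(Y_1,Y_2^{(s)})$ for every scenario $s$, then $\ct(X_1,\pmb{X_2}) \leq c\cdot\ct(Y_1,\pmb{Y_2})$; the proof is one line --- plug the candidate threshold $c\cdot B(Y_1,\pmb{Y_2})$ into the outer $\min_B$ and use \Cref{lem:altTruncForm} (that $B(Y_1,\pmb{Y_2})$ attains the minimum for $(Y_1,\pmb{Y_2})$). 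The coupling through $B$ is thus not an obstacle at all, because the minimization over $B$ lets the threshold rescale along with the per-scenario costs. With this lemma in hand, the paper simply imports existing per-scenario-faithful roundings from the two-stage literature (Ravi--Sinha for UFL giving $8$, Gupta et al.\ plus a root-tree structural lemma for Steiner tree giving $2\times 15=30$, Dhamdhere et al.'s randomized rounding for MST giving $40\log n+16\log m$, and Golovin et al.'s relaxed cut LP for min-cut giving $4$), then applies \Cref{thm:EmaxToCostB}.

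Your proposed workaround does not obviously repair the gap. The threshold $B^\star$ equals the cost of the \emph{cheapest} scenario in the expensive group, so it is not an upper bound on the cost of a single first-stage solution that covers all expensive scenarios simultaneously; jointly covering them can cost on the order of the sum of their individual costs, and the whole point of the two-stage formulation is that distinct scenarios may be served by distinct recourse. So ``charge the robust first stage against $B^\star$'' is not justified, and none of the claimed constants follow from the sketch. Your MST route (tree embedding plus group-Steiner-style rounding) and $k$-center route (threshold graph plus dominating set) are also different from the paper's (randomized rounding of the cut LP; truncating the metric at a guessed $\hat{B}$ and reducing to weighted $k$-median \`a la Chakrabarty--Swamy), but those are secondary: the essential missing idea is that per-scenario cost preservation already preserves the truncated objective.
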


\noindent\textbf{Intuition.} Our earlier Theorem \ref{thm:EmaxToCostB} demonstrated that to solve a \emax problem, $\Pe$, we need to only solve its \trunc version, $\Pt$. While it is not clear how to represent $\Pe$ with an LP, $\Pt$ can be represented with an LP. Furthermore, by adapting previous two-stage optimization rounding techniques to the \trunc setting, we are able to approximately solve the \trunc versions of uncapacitated facility location (UFL), Steiner tree, minimum spanning tree (MST), and min-cut.

We use different techniques to give an approximation algorithm for $k$-center. The intuition for our $k$-center proof is similar to that of Theorem \ref{thm:EmaxToCostB}: Truncated costs approximate \emax cost. However, for $k$-center we truncate more aggressively. Rather than truncating costs of scenarios, we truncate distances in the input metric. To do this, we draw on methods of Chakrabarty and Swamy \cite{CS-arXiv17}.\footnote{We also note here that, unlike the previous problems we study, the cost function in $k$-center is not linear as described in \S\ref{sec:models}.}

It is also worth noting that Anthony et al.~\cite{AGGN-SODA08} proved hardness of approximation for a two-stage $k$-center problem. In particular, they show stochastic $k$-center where scenarios consist of \emph{multiple} clients is as hard to approximate as dense $k$-subgraph. Thus, since our \emax model generalizes the stochastic model, we restrict our attention in $k$-center to scenarios consisting of \emph{single} clients; otherwise our problem would be prohibitively hard to approximate. Since our scenarios consist of single clients the stochastic and demand-robust versions of the $k$-center problem we solve correspond to $k$-median and $k$-center respectively.

Our last theorem shows that \emax  generalizes the stochastic and demand-robust models as well as a \hybrid model which smoothly interpolates between stochastic and demand-robust optimization.

\begin{restatable}{theorem}{RedToEmax}\label{thm:RedToEmax}
	An  $\alpha$-approximation for a two-stage covering algorithm in the \emax model implies an  $\alpha$-approximation for the corresponding two-stage covering problem in the stochastic, demand-robust, and  \hybrid models.
\end{restatable}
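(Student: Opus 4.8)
The plan is to prove the theorem by exhibiting, for each of the stochastic, demand-robust, and \hybrid models, a polynomial-time, approximation-preserving reduction to the \emax model. Given an instance $I$ of one of these models I would build an \emax instance $I'$ on the \emph{same} ground set $X$, clients $\univ$, scenarios, and feasibility families $sol(\cdot)$, changing only the realization probabilities $\p$. Because the feasibility structure is identical across all four models, the correspondence between solutions is just the identity on pairs $(X_1,\pmb{X_2})$, so the entire content is choosing $\p$ so that $\ce$ reproduces the target objective \emph{solution-by-solution}. Once the two objectives agree on every feasible solution we get $\OPT(I')=\OPT(I)$, and feeding $I'$ to the assumed $\alpha$-approximation for $\Pe$ and reading the answer back through the identity correspondence yields an $\alpha$-approximate solution for $I$.

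I would first dispatch the demand-robust model, which is the clean case. Setting $p_s=1$ for every scenario makes the random set $A\sim\p$ equal to $[m]$ with probability one, so for every solution
\[
\ce(X_1,\pmb{X_2})=\Exp{A\sim\p}{\max_{s\in A}\cost(X_1,X_2^{(s)})}=\max_{s\in[m]}\cost(X_1,X_2^{(s)})=\cro(X_1,\pmb{X_2}).
\]
Thus the all-ones \emax instance is literally the demand-robust instance and the reduction is exact, with no loss in the ratio.

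For the stochastic model I would instead set $p_s=\calD(s)$, so that $\sum_s p_s=1$, and route through the truncated surrogate of \Cref{thm:EmaxToCostB}. The key observation is that at total mass one the threshold $B=0$ already minimizes $B+\sum_s p_s\,(\cost(X_1,X_2^{(s)})-B)^+$, since for $B\le \min_s \cost(X_1,X_2^{(s)})$ the expression equals $\sum_s p_s\,\cost(X_1,X_2^{(s)})+B\,(1-\sum_s p_s)$, which is nondecreasing in $B$. Hence $\ct(X_1,\pmb{X_2})=\sum_s\calD(s)\,\cost(X_1,X_2^{(s)})=\cst(X_1,\pmb{X_2})$ for \emph{every} solution, identifying the stochastic objective with the \trunc objective $\Pt$ exactly. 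To handle the \hybrid model I would interpolate the two constructions, using probability-one scenarios to supply the worst-case ($\max$) term and $\calD$-weighted scenarios to supply the averaged term, recovering demand-robust and stochastic as the $\rho=1$ and $\rho=0$ endpoints; along the way I would append a probability-one, zero-cost dummy scenario so that the first-stage cost $\cost(X_1,\emptyset)$ is always charged even when no real scenario realizes.

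The main obstacle is precisely this identification for the \emph{averaging} models. Unlike the $\max$ objective, the additive stochastic and \hybrid objectives do not coincide with an expected maximum solution-by-solution: since $\max_{s\in A}\cost\le\sum_{s\in A}\cost$, the expected max \emph{couples} scenarios whenever several realize simultaneously, and one only has $(1-1/e)\,\ct\le\ce\le\ct$ in general. So the naive composition ``$\cst=\ct$, then pass to $\ce$'' transfers the ratio only up to the $1/(1-1/e)$ factor of \Cref{thm:EmaxToCostB}, not exactly. The delicate step I would focus on is therefore upgrading the exact identity $\cst=\ct$ (valid because the total probability mass is one) into a genuinely $\alpha$-preserving use of the \emax algorithm, rather than settling for $\tfrac{\alpha}{1-1/e}$; this is where the construction must be argued carefully, and it is the part of the proof I expect to require the most attention.
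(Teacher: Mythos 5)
Your demand-robust reduction (all $p_s=1$) is correct and exact, and your observation that $\ct=\cst$ whenever $\sum_s p_s=1$ is also correct. But you have correctly diagnosed, and then left unresolved, the one step that carries the entire weight of the theorem: making the averaging part of the objective survive the passage from $\cst$ (or $\ct$) to $\ce$ without the $1/(1-1/e)$ loss. The paper's resolution is a probability-deflation / cost-inflation device that your proposal never reaches. For each original scenario $\overline{S}_s$ one creates \emph{two} scenarios in $\Pe$: a ``demand-robust copy'' with probability $1$ and second-stage cost $\rho\cdot\chb(X_2^{(s)})$, and a ``stochastic copy'' with probability $\calD(s)/\gamma$ and second-stage cost $\gamma(1-\rho)\cdot\chb(X_2^{(s)})$, where $\gamma$ is chosen so large that every stochastic copy, whenever it realizes, dominates every demand-robust copy in cost (Eq.~\eqref{eq:defnGamma}). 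Shrinking the probabilities by $\gamma$ makes the event that two or more stochastic copies realize simultaneously contribute only an $O(m/\gamma)$ relative error---this is precisely the ``coupling'' of scenarios you worried about---while inflating the costs by $\gamma$ restores each copy's expected contribution to $(1-\rho)\,\calD(s)\,\chb(X_2^{(s)})$. A union bound (Lemma~\ref{lem:hybLB}) then shows the optimal \emax cost is at most the optimal \hybrid cost, and an inclusion--exclusion computation (Lemma~\ref{lem:hybUB}) gives the reverse inequality up to a factor $\left(1-\frac{m}{\gamma}\right)^{-1}$; letting $\gamma\to\infty$ recovers the clean $\alpha$.

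Without this device the stochastic reduction genuinely fails to give $\alpha$: with $p_s=\calD(s)$ the expected maximum over independently realized scenarios is not the expectation under $\calD$, and routing through Theorem~\ref{thm:EmaxToCostB} costs exactly the $\frac{1}{1-1/e}$ factor you identify. Since the theorem claims a lossless $\alpha$-to-$\alpha$ transfer, the proposal as written establishes only the demand-robust case; the stochastic and \hybrid cases are missing their central construction and the accompanying error analysis.
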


\noindent For cleanliness of exposition, we defer a formal definition and discussion of the \hybrid model as well as the intuition and proof for Theorem \ref{thm:RedToEmax} to Section \ref{sec:RedToEmax}. As a corollary of Theorems \ref{thm:Applications} and \ref{thm:RedToEmax},  we immediately recover polynomial-time approximations for \hybrid MST, UFL, Steiner tree and min-cut.\footnote{Though not $k$-center since its cost function is not linear.}\\

\IGNORE{
Results:
A) (1-1/e) idea.
B) Applications to Subadditive Functions:
- Steiner Tree
- Facility Location
- Min Cut
C) Applications to Clustering Functions:
- k-Center: When each scenario is a single point, we give an O(1) approx using Chakrabarty-Swamy. When each scenario contains multiple points then the pb is hard (due to Anthony et al).
- k-median: When each scenario is a single point, this is the same as above; hence, O(1) approx. When each scenario contains multiple points, then can we get an O(log n)-approx using Anthony et al?
}

\IGNORE{
\subsection{Techniques}
REWRITE FLOW:
Two broad streams -  First tries to design algorithms that are good under both cases (best of both worlds). Second tries to design new models that trade-off between the two extremes.
The goal of the first stream is to clarify which algorithms remain resilient under both models.
The second stream tries to find new algorithms that trade off the solution between the two extremes.
Examples of first approach are in bipartite matching, some bandit literature.
Examples of second are in Smoothed analysis, Cvar like Shipra and Ye, Other risk measures like So and Ye, as well as DRO.
Our approach is close to DRO - we try to estimate upper bounds on scenario probabilities and prepare for them - e.g., very costly scenarios that are extremely unlikely and skew the solution in the robust case can still be informed that such situations are rare by putting a low upper bound on their probability of occurrence.

\snote{
The $1-1/e$ trick and a generic way of handling the obtained knapsack constraint where we guess the boundary $B$, zero everything below $B$, and sum over all scenarios.
We show that several existing subadditive and clustering algorithms or their modifications extend to the \emax model.
}
}
\section{Reducing \emax to \trunc}\label{sec:trunc}

In this section, we demonstrate a technique to simplify  both computing and reasoning about $\ce$ by reducing a \emax problem to a \trunc problem with only a small loss in the approximation factor. Specifically, we show the following theorem.
\EmaxToCostB*
\noindent As earlier noted, we show this by observing that a set of expensive scenarios with ``large'' total probability mass dominates the cost of a given \emax solution. 

We begin by observing that the expected max of a set of independent random variables is approximately bounded by the most expensive of these random variables whose probabilities sum to $1$. We remark that this result can be seen to follow from results regarding the ``correlation gap'' \cite{ADSY-OR12,Alaei-SICOMP14} which show a similar bound where instead of $\max$ we have any sub-modular function. We give a different proof in \S\ref{sec:defProofsTrunc} for completeness that we find simpler in our setting where we consider the $\max$ and not any sub-modular function.


\begin{restatable}{lemma}{eTrick}
\label{lem:eTrick}
Let $\pmb{Y} = \{Y_1, \ldots, Y_m\}$ be a set of independent Bernoulli r.v.s, where $Y_s$ is $1$ with probability $p_s$, and $0$ otherwise. Let $v_s \in \reals_{\geq 0}$ be a value associated with $Y_s$. WLOG assume $v_{s} \geq v_{s+1}$ for $s \in [m-1]$. Let $b = \min \{ a: \sum_{s=1}^a
p_s \ge 1 \}$. 
Then 
\begin{align*}
\Big(1 - \frac{1}{e}\Big) \Big(v_b + \sum_{s}p_s \cdot (v_s-v_b)^+ \Big) \leq \E_{\pmb{Y}} \Big[\max_{s }\{Y_s \cdot v_s \} \Big] \leq v_b + \sum_{s}p_s \cdot (v_s - v_b)^+,
\end{align*}
where $x^+ := \max\{x, 0\}$.
\end{restatable}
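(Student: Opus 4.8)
The plan is to rewrite the expected maximum through its layer-cake (tail) integral and then sandwich the integrand between two multiples of a single function that integrates exactly to the claimed quantity. Since the $Y_s$ are independent Bernoullis and the values are sorted decreasingly, for each threshold $t > 0$ the event $\{\max_s Y_s v_s \ge t\}$ is precisely the event that some $s$ with $v_s \ge t$ has $Y_s = 1$ (note $Y_s v_s \ge t > 0$ forces $Y_s = 1$ and $v_s \ge t$). Writing $P(t) := \sum_{s : v_s \ge t} p_s$, independence gives
\begin{align*}
\E_{\pmb{Y}}\Big[\max_s \{Y_s v_s\}\Big] = \int_0^\infty \Pr\Big[\max_s Y_s v_s \ge t\Big]\, dt = \int_0^\infty \Big(1 - \prod_{s : v_s \ge t}(1 - p_s)\Big)\, dt .
\end{align*}

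The key step is a pointwise two-sided bound on the integrand in terms of $\min(1, P(t))$. For the upper bound I would combine the union bound with the trivial fact that a probability is at most $1$, giving $1 - \prod_{s : v_s \ge t}(1-p_s) \le \min(1, P(t))$. For the lower bound I would use $\prod_s(1-p_s) \le e^{-\sum_s p_s}$ together with the elementary inequality $1 - e^{-x} \ge (1 - \tfrac{1}{e})\min(1, x)$ for all $x \ge 0$, which holds on $[0,1]$ by concavity of $1-e^{-x}$ (it lies above the chord through $(0,0)$ and $(1, 1-\tfrac1e)$) and on $[1,\infty)$ since then $1 - e^{-x} \ge 1 - \tfrac1e$. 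Together these yield
\begin{align*}
\Big(1 - \tfrac{1}{e}\Big)\min(1, P(t)) \ \le\ 1 - \prod_{s : v_s \ge t}(1 - p_s) \ \le\ \min(1, P(t)) .
\end{align*}

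It then remains to evaluate $\int_0^\infty \min(1, P(t))\,dt$ and show it equals $v_b + \sum_s p_s (v_s - v_b)^+$; integrating the displayed sandwich over $t$ will give both inequalities of the lemma simultaneously. Here I would invoke the definition of $b$: for $t \le v_b$ the set $\{s : v_s \ge t\}$ contains $\{1, \ldots, b\}$, so $P(t) \ge \sum_{s \le b} p_s \ge 1$ and the integrand is $1$; for $t > v_b$ the set is contained in $\{s : v_s > v_b\} \subseteq \{1,\dots,b-1\}$, so $P(t) \le \sum_{s < b} p_s < 1$ and the integrand equals $P(t)$. Splitting at $v_b$ and applying Tonelli to the tail piece,
\begin{align*}
\int_0^\infty \min(1, P(t))\, dt = v_b + \int_{v_b}^\infty \sum_s p_s\, \mathbbm{1}[v_s \ge t]\, dt = v_b + \sum_s p_s (v_s - v_b)^+ ,
\end{align*}
using $\int_{v_b}^\infty \mathbbm{1}[v_s \ge t]\, dt = (v_s - v_b)^+$.

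I expect the main obstacle to be bookkeeping rather than conceptual: pinning down the threshold split at $t = v_b$ when several $v_s$ tie at the value $v_b$ (so $\min(1,P(t))$ could equal either $1$ or $P(t)$ exactly at $t = v_b$), but since this concerns only a measure-zero set of thresholds it does not affect the integral. The one genuinely non-routine ingredient is the scalar inequality $1 - e^{-x} \ge (1-\tfrac1e)\min(1,x)$, which is exactly where the $(1-\tfrac1e)$ factor---the correlation-gap-type loss---enters the lower bound.
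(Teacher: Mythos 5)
Your proof is correct, but it takes a genuinely different route from the paper's. You rewrite the expected maximum as a tail integral $\int_0^\infty \bigl(1-\prod_{s:v_s\ge t}(1-p_s)\bigr)\,dt$ and sandwich the integrand pointwise between $(1-\tfrac1e)\min(1,P(t))$ and $\min(1,P(t))$, so that a single evaluation of $\int_0^\infty \min(1,P(t))\,dt = v_b + \sum_s p_s(v_s-v_b)^+$ (splitting at $t=v_b$, where the definition of $b$ makes the integrand exactly $1$ below and exactly $P(t)$ above) delivers both inequalities at once; the only nontrivial ingredient is the scalar bound $1-e^{-x}\ge(1-\tfrac1e)\min(1,x)$. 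The paper instead proves the two bounds separately: the upper bound comes from the deterministic truncation inequality $\max(S)\le t+\sum_{s\in S}(s-t)^+$ applied at $t=v_b$ (an inequality it reuses when relating the expected max to the $\ct$ objective), while the lower bound modifies the probabilities of the top $b$ scenarios so they sum to exactly $1$, shows some top scenario realizes with probability at least $1-\tfrac1e$, and then compares conditional expectations given ``at least one'' versus ``exactly one'' top scenario appearing. Your argument is more uniform and avoids those conditional-expectation comparisons (which in the paper rest on a stochastic-domination step stated somewhat informally); the paper's version has the advantage that its upper-bound mechanism is exactly the algebra that later motivates the truncated objective. Your handling of ties at $v_b$ is fine --- at $t=v_b$ the integrand equals $1$ in either reading, so nothing is even measure-zero delicate there.
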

\noindent For a \emph{given solution} $(X_1, \pmb{X}_2)$ to \emax, \Cref{lem:eTrick}  yields a computationally tractable form of $\ce$. Specifically, let our scenarios be indexed such that  $\cost(X_1, X_2^{(s)}) \geq \cost(X_1, X_2^{(s+1)})$ and let $b$ be the smallest positive integer such that $\sum_{s=1}^b p_s \geq 1$. 
We define the following terms   analogous to those in the lemma (see \Cref{fig:MAndB} for an illustration):
\begin{align}\label{defn:MAndB}
M(X_1, \pmb{X}_2) := [b] \qquad  \text{ and } \qquad B(X_1, \pmb{X}_2) := \cost(X_1, X_2^{(b)} ).
\end{align}
Notice that $\sum_{s \in M(X_1, \pmb{X_2})} p_s < 2$. Now, by letting $B(X_1, \pmb{X_2})$ be $v_b$ in \Cref{lem:eTrick}, we can approximate $\ce(X_1, \pmb{X_2})$. 
However, we would like to estimate $\ce(X_1, \pmb{X}_2)$ within an LP where $(X_1, \pmb{X}_2)$ are variables since our algorithms are LP based. Unfortunately, it is not clear how to capture $v_b$ in an LP and so it is not clear how to directly use \Cref{lem:eTrick} to estimate $\ce(X_1, \pmb{X}_2)$ within an LP.

\begin{figure} 
\begin{tikzpicture}[scale=1.45]
\draw [line width = 5] (0, 0)--(10,0);

\draw [dashed](-.3, 0.8)--(3.7, 0.8); \draw [dashed](-.3, 0.8)--(-.3, .5); \draw [dashed](3.7, 0.8)--(3.7, .5); \node[above] at (1.5, 0.8) {$M(X_1, \pmb{X_2})$};

\draw [fill = red!30] (0,0) circle [radius=.3]; \node at (0,0) {.4}; \node [below] at (0, -.3) {\scriptsize $\cost(X_1, X_2^{(1)}) \geq$};
\draw [fill = red!30] (1.5,0) circle [radius=.3]; \node at (1.5,0) {.4}; \node [below] at (1.5, -.3) {\scriptsize $\cost(X_1, X_2^{(2)}) \geq$};
\draw [fill = red!30] (3.4,0) circle [radius=.3]; \node at (3.4,0) {.3}; \node [below] at (3.4, -.3) {\scriptsize $\cost(X_1, X_2^{(3)}) = B(X_1, \pmb{X_2}) \geq$};

\draw [fill = green!30] (5.25,0) circle [radius=.3]; \node at (5.25,0) {.6}; \node [below] at (5.25, -.3) {\scriptsize $\cost(X_1, X_2^{(4)}) \geq$};
\draw [fill = green!30] (6.75,0) circle [radius=.3]; \node at (6.75,0) {.7};\node [below] at (6.75, -.3) {\scriptsize $\cost(X_1, X_2^{(5)}) \geq$};
\draw [fill = green!30] (8.25,0) circle [radius=.3]; \node at (8.25,0) {1}; \node [below] at (8.25, -.3) {\scriptsize $\cost(X_1, X_2^{(6)}) \geq$};
\draw [fill = green!30] (10,0) circle [radius=.3]; \node at (10,0) {.2};\node [below] at (10, -.3) {\scriptsize $\cost(X_1, X_2^{(7)}) $};
\end{tikzpicture}
\caption{$M(X_1, \pmb{X_2})$ and $B(X_1, \pmb{X_2})$. Red circles: scenarios in $M(X_1, \pmb{X_2})$. Green circles: all other scenarios. Numbers in circles: probabilities. Scenarios arranged left to right in descending order of $\cost(X_1, X_2^{(s)} )$.}
\label{fig:MAndB}
\end{figure}

For this reason, we derive an even simpler form of the above approximation of the expected max which can be computed using an LP. In particular, we show that the expected max is approximately the $\ct$ objective. We remind the reader that, as per Eq.\eqref{eq:truncObjective}, $\ct(X_1, \pmb{X_2}):=\min_B [ B + \sum_{s \in [m]} p_s\cdot (\cost(X_1,X_2^{(s)}) - B)^+]$. The following lemma shows that the $B$ achieving the minimum in $\ct(X_1, \pmb{X_2})$ is $B(X_1, \pmb{X_2})$ and therefore shows that $\ct$ is a good approximation of $\ce$.


\begin{restatable}{lemma}{altTruncForm}\label{lem:altTruncForm}
Let $(X_1, \pmb{X_2})$ be a solution to a \trunc or \emax problem. We have
\begin{align*}
&B(X_1, \pmb{X_2}) = \argmin_B \Big[ B + \sum_{s \in [m]} p_s \cdot (\cost(X_1, X_2^{(s)})-B)^+ \Big],
\end{align*}
where the $\argmin$ takes the largest $B$ minimizing the relevant quantity.
\end{restatable}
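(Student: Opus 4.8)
The plan is to treat $f(B) := B + \sum_{s \in [m]} p_s \cdot (c_s - B)^+$ as a one-dimensional convex function of the scalar $B$, writing $c_s := \cost(X_1, X_2^{(s)})$ and recalling that the scenarios are indexed so that $c_1 \geq c_2 \geq \cdots \geq c_m \geq 0$. Each summand $(c_s - B)^+$ is convex and piecewise-linear in $B$, and the leading term is linear, so $f$ is convex and piecewise-linear. Its minimizers therefore form a closed interval, and it suffices to locate where the slope of $f$ transitions from nonpositive to positive; the largest minimizer is precisely the right endpoint of this interval, matching the tie-breaking in the $\argmin$.

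First I would compute the slope of $f$. Away from the breakpoints (the values $c_s$), $f$ is differentiable with $f'(B) = 1 - \sum_{s : c_s > B} p_s$, since $(c_s - B)^+$ contributes $-1$ to the derivative exactly when $c_s > B$. As $B$ increases the set $\{s : c_s > B\}$ only shrinks, so $f'$ is nondecreasing, reconfirming convexity; the minimum is attained where $\sum_{s : c_s > B} p_s$ crosses $1$.

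The key step is to evaluate these slopes immediately to the right and left of $B(X_1, \pmb{X_2}) = c_b$. Just to the right of $c_b$, the contributing set is $\{s : c_s > c_b\}$, which is contained in $\{1, \ldots, b-1\}$ by the descending ordering; since $b$ is the smallest index with $\sum_{s \le b} p_s \ge 1$, we have $\sum_{s \le b-1} p_s < 1$, so the right-slope $1 - \sum_{s : c_s > c_b} p_s$ is strictly positive and $f$ is strictly increasing on $(c_b, \infty)$. Just to the left of $c_b$, the contributing set is $\{s : c_s \ge c_b\} \supseteq \{1, \ldots, b\}$, whose probability mass is at least $\sum_{s \le b} p_s \ge 1$, so the left-slope $1 - \sum_{s : c_s \ge c_b} p_s$ is nonpositive and $f$ is nonincreasing up to $c_b$. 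Together these show that $c_b$ minimizes $f$ and, because $f$ is strictly increasing beyond $c_b$, that $c_b$ is the largest minimizer, which is exactly the claim.

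The only real subtlety I anticipate is the bookkeeping around ties in the costs: when several scenarios share the value $c_b$, the sets $\{s : c_s > c_b\}$ and $\{s : c_s \ge c_b\}$ differ, and one must feed the strict inequality $\sum_{s \le b-1} p_s < 1$ into the right-slope while feeding the weak inequality $\sum_{s \le b} p_s \ge 1$ into the left-slope. Phrasing the argument through one-sided slopes (equivalently, subgradients at the breakpoint) rather than a naive derivative handles this cleanly, and the ``largest $B$'' convention in the $\argmin$ corresponds to the right endpoint of the possibly-degenerate flat interval that arises when $\sum_{s : c_s \ge c_b} p_s = 1$ exactly.
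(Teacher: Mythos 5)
Your proof is correct and takes essentially the same route as the paper's: both arguments establish that $f(B) = B + \sum_{s} p_s (c_s - B)^+$ is convex and that $c_b$ is its largest minimizer by feeding the strict bound $\sum_{s : c_s > c_b} p_s < 1$ into the behavior to the right and the bound $\sum_{s \le b} p_s \ge 1$ into the behavior to the left. The paper phrases this via the finite differences $f(\bar{B}+\epsilon) - f(\bar{B}) > 0$ and $f(\bar{B}-\epsilon) - f(\bar{B}) \ge 0$ for every $\epsilon > 0$ rather than via one-sided slopes, but that is only a cosmetic difference.
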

\gdef\convexityLemmaProof{
\begin{proof}
To clear our notation we let $\bar{B} := B(X_1, \pmb{X_2})$, $c_s : = \cost(X_1, X_2^{(s)})$ and $\bar{M} := M(X_1, \pmb{X_2})$. Let $f(B) := B + \sum_{s \in [m]} p_s \cdot (c_s-B)^+$. We argue that $\bar{B}$ is the largest global minimum of $f$ by showing that for any $\eps > 0$ we know that $f(\bar{B}) < f(\bar{B} + \eps)$ and $f(\bar{B}) \leq f(\bar{B} - \eps)$. 

We begin by noting that for any reals $a \leq b$ we have
\begin{align}
a^+ - b^+ \geq a-b \label{eq:truncAdditive}
\end{align}
by casing on which of $a$ and $b$ are larger than $0$.


Let $\hat{M} := \{s \in \bar{M} : c_s > \bar{B}\}$. Notice that $\sum_{s \in \hat{M}} p_s < 1$. For fixed and arbitrary $\eps > 0$ consider the relative values of $f(\bar{B})$ and $f(\bar{B} + \eps)$. We have
\begin{align}
f(\bar{B}+ \eps) - f(\bar{B}) &= \eps + \sum_{s} p_s \cdot \left((c_s - \bar{B} - \eps)^+ - (c_s - \bar{B})^+ \right) \nonumber \\
&= \eps + \sum_{s \in \hat{M}} p_s \cdot \left((c_s - \bar{B} - \eps)^+ - (c_s - \bar{B})^+ \right), \label{eq:useHatM}
\end{align}
where \eqref{eq:useHatM} follows since for $s \not \in \hat{M}$ we have $c_s \leq \bar{B}$ and so $\left((c_s - \bar{B} - \eps)^+ - (c_s - \bar{B})^+ \right) = 0$ for  $s \not \in \hat{M}$. Now noticing that for every $s$ we have $(c_s - \bar{B} - \eps) \leq (c_s - \bar{B})$, applying \eqref{eq:truncAdditive} to \eqref{eq:useHatM} gives
\begin{align*}
f(\bar{B}+ \eps) - f(\bar{B})  &\geq  \eps + \sum_{s \in \hat{M}} p_s \cdot \left(-\eps\right) 
 = \eps \left(1 - \sum_{s \in \hat{M}} p_s \right)
 > 0, 
\end{align*}
where the last inequality uses $\sum_{s \in \hat{M}} p_s < 1$. Thus, we have $f(\bar{B} + \eps) > f(\bar{B})$.

Now consider the relative values of $f(\bar{B})$ and $f(\bar{B} - \eps)$. We have 
\begin{align*}
f(\bar{B} - \eps) - f(\bar{B})  &= -\eps + \sum_s p_s \cdot \left((c_s - \bar{B} + \eps)^+ - (c_s - \bar{B} )^+ \right) \nonumber \\
 &\geq -\eps + \sum_{s \in \bar{M}} p_s \cdot \left((c_s - \bar{B} + \eps)^+ - (c_s - \bar{B} )^+ \right) \bec{$(c_s - \bar{B} + \eps)^+ \geq (c_s - \bar{B} )^+$} \\
  &\geq -\eps + \sum_{s \in \bar{M}} p_s \cdot \left((c_s - \bar{B} + \eps) - (c_s - \bar{B} ) \right) \bec{$c_s \geq \bar{B}$ for $s \in \bar{M}$}\nonumber\\
&\geq \eps \Big(1 - \sum_{s \in \bar{M}} p_s \Big) \quad \geq \quad 0 \bec{$\sum_{s \in \bar{M}} p_s \geq 1$}.
\end{align*}
Thus, for any $\eps > 0$ we know that $f(\bar{B}) < f(\bar{B} + \eps)$ and $f(\bar{B}) \leq f(\bar{B} - \eps)$. It follows that, not only is $\bar{B}$ a global minimum of $f$ but it is the largest global minimum. The lemma follows immediately.
\end{proof}
}

\ifFULL
\convexityLemmaProof
\else
\begin{proof}[Proof Sketch]
The rough idea of the proof is to show that $B + \sum_s {p_s} (\cost(X_1, X_2^{(s)} -B )^+$ is convex in $B$ and that $B(X_1, \pmb{X_2})$ is a local minimum. In particular, imagine that $B$ is currently set at $B(X_1, \pmb{X_2})$ and consider what happens to $B + \sum_s {p_s} (\cost(X_1, X_2^{(s)} -B )^+$ if we shift $B$ to be smaller. Recall that we have at least one probability mass across elements which are larger than $B$ by definition of $B(X_1, \pmb{X_2})$. Thus, when we shift $B$ to be smaller, $B$ decreases slower than $\sum_s {p_s} (\cost(X_1, X_2^{(s)} - B)^+$ increases and so $B + \sum_s {p_s} (\cost(X_1, X_2^{(s)} -B )^+$ becomes larger overall. The case when $B$ is made larger is symmetric. The full proof is available in \S\ref{sec:defProofsTrunc}.
\end{proof}
\fi
Using \Cref{lem:eTrick} and \Cref{lem:altTruncForm}, it is easy to show the following two lemmas. These lemmas---proved in \S\ref{sec:defProofsTrunc}---upper and lower bound the \emax cost of a solution with respect to its \trunc solution respectively.

\begin{restatable}{lemma}{xEToTrunc}\label{lem:xEToTrunc}
For feasible solution $(X_1, \pmb{X}_2)$ of any $\Pe$ we have, $\ce(X_1, \pmb{X}_2) \leq \ct(X_1, \pmb{X}_2)$.
\end{restatable}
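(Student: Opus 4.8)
The plan is to obtain this bound as a direct corollary of the two preceding lemmas: I would apply the upper bound half of \Cref{lem:eTrick} to rewrite $\ce$, and then invoke \Cref{lem:altTruncForm} to recognize the resulting expression as exactly $\ct$.

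First I would fix a feasible solution $(X_1, \pmb{X}_2)$ and set $v_s := \cost(X_1, X_2^{(s)})$, indexing the scenarios so that $v_s \geq v_{s+1}$ for all $s$, matching the hypothesis of \Cref{lem:eTrick}. Let $Y_s$ be the indicator of the event that scenario $s$ realizes, so the $Y_s$ are independent Bernoullis with $\pr{}{Y_s = 1} = p_s$. The key rewriting is $\max_{s \in A}\{\cost(X_1, X_2^{(s)})\} = \max_s\{Y_s \cdot v_s\}$: since every $v_s \geq 0$, a non-realized scenario contributes $Y_s v_s = 0$ and hence cannot affect the max whenever some scenario realizes. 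This yields $\ce(X_1, \pmb{X}_2) = \E_{\pmb{Y}}\big[\max_s\{Y_s v_s\}\big]$.

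Next I would apply the right-hand (upper bound) inequality of \Cref{lem:eTrick}, giving $\E_{\pmb{Y}}\big[\max_s\{Y_s v_s\}\big] \leq v_b + \sum_s p_s (v_s - v_b)^+$, where $b = \min\{a : \sum_{s=1}^a p_s \geq 1\}$. The crucial observation is that this $b$ is precisely the index defining $B(X_1, \pmb{X}_2)$ in \eqref{defn:MAndB}, since both are computed after sorting scenarios in descending order of cost; hence $v_b = \cost(X_1, X_2^{(b)}) = B(X_1, \pmb{X}_2)$. Substituting, $\ce(X_1, \pmb{X}_2) \leq B(X_1, \pmb{X}_2) + \sum_s p_s \big(\cost(X_1, X_2^{(s)}) - B(X_1, \pmb{X}_2)\big)^+$. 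Finally, \Cref{lem:altTruncForm} identifies $B(X_1, \pmb{X}_2)$ as a minimizer of $B \mapsto B + \sum_s p_s (\cost(X_1, X_2^{(s)}) - B)^+$, so the right-hand side equals $\min_B\big[B + \sum_s p_s (\cost(X_1, X_2^{(s)}) - B)^+\big] = \ct(X_1, \pmb{X}_2)$, closing the chain.

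Since the statement is essentially a corollary of the two prior lemmas, there is no genuine analytic obstacle; the only points needing care are bookkeeping. I would verify that the descending-cost indexing used here matches both the hypothesis of \Cref{lem:eTrick} and the definition of $B(X_1, \pmb{X}_2)$, so that the two values of $b$ coincide. I would also check the edge case where $A$ is empty: the standing assumption $\sum_s p_s \geq 1$ (enforced by a dummy probability-$1$, cost-$0$ scenario) guarantees that some scenario always realizes, so the $\max$ is well defined and the rewriting $\max_{s \in A} = \max_s\{Y_s v_s\}$ is exact under the convention that the empty max equals $0$.
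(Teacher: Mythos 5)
Your proposal is correct and follows exactly the paper's own argument: apply the upper bound of \Cref{lem:eTrick} with $v_s = \cost(X_1, X_2^{(s)})$ to bound $\ce$ by $B(X_1, \pmb{X_2}) + \sum_s p_s (\cost(X_1, X_2^{(s)}) - B(X_1, \pmb{X_2}))^+$, then invoke \Cref{lem:altTruncForm} to identify this with $\ct(X_1, \pmb{X_2})$. Your additional bookkeeping (matching the descending-cost indexing and handling the empty-$A$ case via $\sum_s p_s \geq 1$) is sound and merely makes explicit what the paper leaves implicit.
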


\begin{restatable}{lemma}{TOptToEOpt}\label{lem:TOptToEOpt}
Let $\Pe$ be a \emax problem and $\Pt$ be its truncated version. Let $(E_1, \pmb{E_2})$ and $(T_1, \pmb{T_2})$ be optimal solutions to $\Pe$ and $\Pt$ respectively. We have $\ct(T_1, \pmb{T_2}) \leq \left( \frac{1}{1 - 1/e} \right )\ce(E_1, \pmb{E_2})$.
\end{restatable}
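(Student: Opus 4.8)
The plan is to chain together the two earlier lemmas and then exploit the optimality of the \trunc solution. The guiding observation is that Lemmas~\ref{lem:eTrick} and~\ref{lem:altTruncForm} together give a clean two-sided bound relating $\ct$ and $\ce$ on \emph{any single feasible solution}; the lemma then follows by applying this to $(E_1, \pmb{E_2})$ and comparing against the optimal \trunc solution.

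First I would fix an arbitrary feasible solution $(X_1, \pmb{X_2})$ and index scenarios so that $\cost(X_1, X_2^{(s)}) \geq \cost(X_1, X_2^{(s+1)})$, matching the setup of \Cref{lem:eTrick} with $v_s := \cost(X_1, X_2^{(s)})$ and $v_b = B(X_1, \pmb{X_2})$. By \Cref{lem:altTruncForm}, $B(X_1, \pmb{X_2})$ is exactly the minimizer in the definition of $\ct$, so
\begin{align*}
\ct(X_1, \pmb{X_2}) = B(X_1, \pmb{X_2}) + \sum_{s \in [m]} p_s \cdot \big(\cost(X_1, X_2^{(s)}) - B(X_1, \pmb{X_2})\big)^+ = v_b + \sum_s p_s (v_s - v_b)^+.
\end{align*}
Plugging this equality into the \emph{lower} bound of \Cref{lem:eTrick} (noting $\ce(X_1, \pmb{X_2}) = \E_{\pmb{Y}}[\max_s \{Y_s \cdot v_s\}]$) yields $\big(1 - \tfrac{1}{e}\big)\, \ct(X_1, \pmb{X_2}) \leq \ce(X_1, \pmb{X_2})$, and hence $\ct(X_1, \pmb{X_2}) \leq \big(\tfrac{1}{1-1/e}\big)\, \ce(X_1, \pmb{X_2})$, valid for every feasible $(X_1, \pmb{X_2})$.

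I would then instantiate this inequality at the optimal \emax solution $(E_1, \pmb{E_2})$ to get $\ct(E_1, \pmb{E_2}) \leq \big(\tfrac{1}{1-1/e}\big)\, \ce(E_1, \pmb{E_2})$. The final step uses that $\Pe$ and $\Pt$ share the same feasibility constraints---they differ only in their objective---so $(E_1, \pmb{E_2})$ is a feasible solution to $\Pt$. Since $(T_1, \pmb{T_2})$ is \emph{optimal} for $\Pt$, we get $\ct(T_1, \pmb{T_2}) \leq \ct(E_1, \pmb{E_2})$. Chaining the two inequalities gives $\ct(T_1, \pmb{T_2}) \leq \big(\tfrac{1}{1-1/e}\big)\, \ce(E_1, \pmb{E_2})$, as desired.

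The proof is largely bookkeeping once the lemmas are in hand, so I do not expect a substantive obstacle; the one point requiring care is the very last comparison. It relies on the fact that the optimal \emax solution is a legitimate competitor in the \trunc minimization, which is true precisely because both models impose identical covering constraints $X_1 \cup X_2^{(s)} \in sol(S_s)$ and only the cost function changes. I would state this feasibility-sharing observation explicitly so that the invocation of $(T_1, \pmb{T_2})$'s optimality is justified rather than assumed.
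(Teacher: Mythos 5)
Your proposal is correct and follows essentially the same route as the paper: bound $\ct(T_1,\pmb{T_2})\leq\ct(E_1,\pmb{E_2})$ by optimality (legitimate since the two problems share feasibility constraints), then apply the lower bound of \Cref{lem:eTrick} at $B(E_1,\pmb{E_2})$ to compare against $\ce(E_1,\pmb{E_2})$. The only cosmetic difference is that you invoke \Cref{lem:altTruncForm} to get an exact identification of the minimizing $B$, whereas the paper only needs the one-sided inequality $\min_B[\cdot]\leq B(E_1,\pmb{E_2})+\sum_s p_s(\cost(E_1,E_2^{(s)})-B(E_1,\pmb{E_2}))^+$; both are valid.
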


The preceding lemmas allow us to conclude that an $\alpha$-approximation algorithm for a \trunc problem is an $O(\alpha)$-approximation algorithm for the corresponding \emax problem.
\begin{proof}[Proof of \Cref{thm:EmaxToCostB}]
Let $(\hat{T}_1, \pmb{\hat{T}_2})$ be the solution returned by an $\alpha$-approximation algorithm for $\Pt$. Let $(E_1, \pmb{E_2})$ and $(T_1, \pmb{T_2})$ be the optimal solutions to $\Pe$ and $\Pt$ respectively. 
\ifFULL
\begin{align*}
\ce(\hat{T}_1, \pmb{\hat{T}_2}) &\leq \ct(\hat{T}_1, \pmb{\hat{T}_2}) \bec{\Cref{lem:xEToTrunc}}\\
&\leq \alpha \cdot \ct(T_1, \pmb{T_2}) \bec{$(\hat{T}_1, \pmb{\hat{T}_2})$ is  an $\alpha$ approx.}\\
&\leq \left(\frac{2 \alpha}{1-1/e}\right) \ce(E_1, \pmb{E_2})  \bec{\Cref{lem:TOptToEOpt}}.
\end{align*}
\else
By \Cref{lem:xEToTrunc} we have $\ce(\hat{T}_1, \pmb{\hat{T}_2}) \leq \ct(\hat{T}_1, \pmb{\hat{T}_2})$. Since $(\hat{T}_1, \pmb{\hat{T}_2})$ is an $\alpha$-approximation we have this is at most $\alpha \cdot \ct(T_1, \pmb{T_2})$. Applying \Cref{lem:TOptToEOpt} this is at most $\left(\frac{ \alpha}{1-1/e}\right) \ce(E_1, \pmb{E_2})$.
\fi
Since any solution that is feasible for $\Pt$ is also feasible for $\Pe$, we conclude that $(\hat{T}_1, \pmb{\hat{T}_2})$ is a feasible solution for $\Pe$ with cost in $\Pe$ at most $\left( \frac{\alpha}{1-1/e}  \right) \ce(E_1, \pmb{E_2})$, giving our theorem.
\end{proof}


\section{Applications to Linear Two-Stage Covering Problems} \label{sec:subaddAppl}

In this section we give an $O(\log n + \log m)$-approximation algorithm for \emax MST and  $O(1)$ approximation algorithms for \emax Steiner tree, \emax facility location, and \emax min-cut.
Our algorithms are LP based. To derive our algorithms we use our reduction from \S\ref{sec:trunc} to transform a \emax problem into a \trunc problem with only a small constant loss in the approximation factor. This transformation allows us to adapt existing LP rounding techniques in which every scenario has a rounding cost close to its fractional cost~\cite{RS-IPCO04,GPRS-STOC04,SS2006} to solve our \trunc problems and, therefore, our \emax problems.

We first give two general techniques to solve a \trunc problem.



\subsection{General Techniques}\label{subsec:genTruncTechs}

Our \emph{first technique} is to represent $\ct$ as an LP objective.  For this technique we need to extend the definition of $\ct$ from an integral solution $(X_1, \pmb{X_2})$ to a fractional solution $(x_1, \pmb{x_2})$. To do so, in each of our problems we locally define $\cost(x_1, x_2^{(s)})$ for fractional solution $(x_1, x_2^{(s)})$  to scenario $s$ and let $\ct(x_1, \pmb{x_2})$ be defined similarly to the integral case, i.e. for fractional $(x_1, \pmb{x_2})$,
\begin{align}
\ct(x_1, \pmb{x_2}) := \min_B \Big[B + \sum_s p_s (\cost(x_1, x_2{(s)}) - B)^+ \Big].\label{eq:fractionalCT}
\end{align}

Given a minimization LP, it is easy to see that by introducing an additional variable to represent $B$ and additional variables to represent $(\cost(x_1, x_2{(s)}) - B)^+$ for every $s$,  we can represent $\ct(x_1, \pmb{x_2})$ in an LP.  For cleanliness of exposition, when we write our LPs we omit these additional variables and simply write our objective as ``$\ct(x_1, \pmb{x_2})$.'' Moreover,  even though some of our LPs have an exponential number of constraints, we rely on the existence of efficient separation oracles for these LPs. It is easy to verify that this holds even after one introduces the additional variables needed to represent $\ct(x_1, \pmb{x_2})$.

We also extend  $M$ and $B$ from the integral case as defined in \S\ref{sec:trunc} to the fractional case in the following natural way. Given a fractional solution $(x_1, \pmb{x_2})$ and a cost function on fractional solutions, $\cost$, WLOG let our scenarios be indexed such that  $\cost(x_1, x_2^{(s)}) \geq \cost(x_1, x_2^{(s+1)})$. Let $b$ be the smallest positive integer such that $\sum_{s=1}^b p_s \geq 1$. For fractional $(x_1, \pmb{x_2})$, we define
\begin{align}
&M(x_1, \pmb{x}_2) := [b]\\
&B(x_1, \pmb{x}_2) := \min_{s \in M(x_1, \pmb{x}_2)} \cost(x_1, x_2^{(s)} ) \label{eq:FracB}.
\end{align}

\begin{rem} \label{rem:lem:altTruncForm}
 It is easy to verify that the proof of \Cref{lem:altTruncForm} also holds for $\ct(x_1, \pmb{x_2})$  for fractional $(x_1, \pmb{x_2})$. We will therefore  invoke it on fractional $(x_1, \pmb{x_2})$, even though it is stated only for integral $(X_1, \pmb{X_2})$.
\end{rem}

Our \emph{second technique} is a generic {rounding technique}  for  \trunc problems. Several past works in two-stage optimization show that it is possible to round an LP solution such that the resulting integral solution has cost roughly the same as the fractional solution for \emph{every scenario}. We prove the following lemma to make use of such rounding algorithms.

\begin{lemma}\label{lem:scenByScen}
Let $\Pt$ be a \trunc problem. Let $(X_1, \pmb{X_2})$ and $(Y_1, \pmb{Y_2})$ be integral or fractional solutions to $\Pt$. If for every scenario $s$ we have $\cost(X_1,X_{2}^{(s)}) \leq c \cdot \cost(Y_1, Y_2^{(s)})$ then
\begin{align*}
\ct(X_1, \pmb{X_2}) \leq c \cdot \ct(Y_1, \pmb{Y_2}).
\end{align*}
\end{lemma}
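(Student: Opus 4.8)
The plan is to exploit the fact that $\ct$ is defined as a minimization over the threshold $B$, so that any particular choice of threshold yields a valid upper bound on $\ct(X_1, \pmb{X_2})$. The right choice turns out to be a \emph{scaled} copy of the optimal threshold for $(Y_1, \pmb{Y_2})$. Concretely, first I would let $B^\star$ be a threshold attaining the minimum in $\ct(Y_1, \pmb{Y_2})$, so that
\begin{align*}
\ct(Y_1, \pmb{Y_2}) = B^\star + \sum_{s \in [m]} p_s \cdot (\cost(Y_1, Y_2^{(s)}) - B^\star)^+.
\end{align*}
Then, rather than reusing $B^\star$ as the threshold for $(X_1, \pmb{X_2})$, I would plug in $c \cdot B^\star$. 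Since $\ct(X_1, \pmb{X_2})$ is a minimum over all thresholds, this gives
\begin{align*}
\ct(X_1, \pmb{X_2}) \leq c B^\star + \sum_{s \in [m]} p_s \cdot (\cost(X_1, X_2^{(s)}) - c B^\star)^+.
\end{align*}

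The core of the argument is then a per-scenario comparison of the truncated terms. For each $s$, I would use two elementary properties of the positive-part function: it is monotone non-decreasing, and for $c \geq 0$ it is positively homogeneous, i.e. $(c w)^+ = c\, w^+$. Applying monotonicity together with the hypothesis $\cost(X_1, X_2^{(s)}) \leq c \cdot \cost(Y_1, Y_2^{(s)})$, and then homogeneity, yields
\begin{align*}
(\cost(X_1, X_2^{(s)}) - c B^\star)^+ \leq (c \cdot \cost(Y_1, Y_2^{(s)}) - c B^\star)^+ = c \cdot (\cost(Y_1, Y_2^{(s)}) - B^\star)^+.
\end{align*}
Substituting this bound into the previous display and factoring out $c$ gives
\begin{align*}
\ct(X_1, \pmb{X_2}) \leq c \Big[ B^\star + \sum_{s \in [m]} p_s \cdot (\cost(Y_1, Y_2^{(s)}) - B^\star)^+ \Big] = c \cdot \ct(Y_1, \pmb{Y_2}),
\end{align*}
which is exactly the claim.

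There is essentially no hard part here; the only subtlety---and the one genuine idea---is remembering to scale the transferred threshold by $c$ rather than reusing $B^\star$ verbatim. Using $B^\star$ directly would break the clean homogeneity step, since $(\cost(X_1, X_2^{(s)}) - B^\star)^+$ does not factor through $c$. I would also note that the derivation only requires $c \geq 0$ (which holds as $c$ is an approximation factor and all costs are non-negative), so that $(c w)^+ = c\, w^+$ is valid, and that the argument applies verbatim whether the solutions are integral or fractional, since the fractional definition of $\ct$ in \eqref{eq:fractionalCT} has the identical form.
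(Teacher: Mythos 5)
Your proof is correct and follows essentially the same route as the paper: both plug the threshold $c$ times the optimal threshold for $(Y_1,\pmb{Y_2})$ into the minimization defining $\ct(X_1,\pmb{X_2})$ and then use monotonicity and positive homogeneity of $(\cdot)^+$ scenario by scenario. The only cosmetic difference is that the paper names the minimizing threshold explicitly as $B(Y_1,\pmb{Y_2})$ via \Cref{lem:altTruncForm}, whereas you work with an abstract minimizer $B^\star$, which is equally valid.
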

\begin{proof}
We have
\begin{align*}
& \quad \ct(X_1, \pmb{X_2}) = \min_B \left[ B + \sum_s p_s \cdot (\cost(X_1, X_2^{(s)})-B)^+)\right] \\
& \leq  c \cdot B(Y_1, \pmb{Y_2}) + \sum_s p_s \cdot (\cost(X_1, X_2^{(s)})-c \cdot B(Y_1, \pmb{Y_2}))^+) \bec{letting $B = c \cdot B(Y_1, \pmb{Y_2})$} \\
& \leq c \cdot B(Y_1, \pmb{Y_2}) + \sum_s p_s \cdot (c \cdot \cost(Y_1, Y_2^{(s)})- c \cdot B(Y_1, \pmb{Y_2}))^+) \bec{$\cost(X_1,X_{2}^{(s)}) \leq c \cdot \cost(Y_1, Y_2^{(s)})$}\\
& = c \cdot \left( B(Y_1, \pmb{Y_2}) + \sum_s p_s \cdot ( \cost(Y_{1}, Y_{2}^{(s)})- B(Y_1, \pmb{Y_2}))^+) \right)\\
& = c \cdot  \ct(Y_1, \pmb{Y_2})\bec{\Cref{lem:altTruncForm}}.
\end{align*}
\end{proof}

\subsection{Uncapacitated Facility Location}\label{sec:FL}
In this section we give a polynomial-time $\left( \frac{8}{1-1/e} \right)$-approximation algorithm for \emax uncapacitated facility location (UFL).

\begin{defn}[\emax UFL]
We are given a set of facilities $F$ and a set of clients $\met$ with a metric $c_{ij}$ specifying the distances between every client $j$ and facility $i$. We are also given scenarios $S_1, \ldots, S_m \subseteq \met$, where in scenario $S_s$ client $j$ has demand $d_j^s \in \{0, 1\}$\footnote{This easily generalizes to more demand.}, and a probability $p_s$ for each scenario. Facility $i$'s opening cost is $f_{1,i}$  in the first stage and $f_{2,i}^{(s)}$  in scenario $S_s$. These opening costs can be $\infty$, which indicates  the facility cannot be opened. A feasible solution consists of a set of first and second stage facilities $(X_1, \pmb{X_2})$ s.t. $X_1\cup \bigcup_s X_2^{(s)} \neq \emptyset$. The cost for scenario $s$ in  solution $(X_1, \pmb{X_2})$ is
\begin{align*}
\cost(X_1, X_2^{(s)}) := \sum_{i \in X_1} f_{1,i} + \sum_{i \in X_2^{(s)}} f_{2,i}^{(s)} + \sum_{j \in S_s} \min_{i \in X_1 \cup X_2^{(s)}} c_{ij}.
\end{align*}
 The total cost of our solution $(X_1, \pmb{X_2})$ is $\ce(X_1, \pmb{X_2}) := \E_{A \sim \pmb{p}} \left[ \max_{s \in A} \{\cost(X_1, X_2^{(s)}) \} \right]$.
 \end{defn}

 Our algorithm is based on the work of Ravi and Sinha~\cite{RS-IPCO04} on two-stage stochastic UFL. This work shows how to round an LP such that every scenario has a ``good" cost after rounding. Applying \Cref{lem:scenByScen} to this rounding gives an algorithm that approximates \trunc UFL, which by \Cref{thm:EmaxToCostB} is sufficient to approximate \emax UFL.

We use the following LP. Variable $z_{ij}^{(s)}$ corresponds to whether client $j$ is served by facility $i$ in scenario $s$. Variables $x_1(i)$ and $x_2^{(s)}(i)$ corresponds to whether facility $i$ is opened in the first stage or scenario $s$, respectively. For a fractional solution $(x_1, \pmb{x_2})$, we define
\begin{align*}
\cost(x_1, x_2^{(s)}) := \sum_{i \in F} \Big[ x_1(i) \cdot f_{1, i} + x_2^{(s)}(i) \cdot f_{2, i}^{(s)} + \sum_{j \in \met} \hat{z}_{ij}^{(s)}\cdot c_{ij} \Big],
\end{align*}
 where  $\hat{z}_{ij}^{(s)}$ is the natural fractional assignment given fractional facilities $(x_1, x_2^{(s)})$; namely, one that sends clients to their nearest fractionally opened facilities.  As described by Eq.\eqref{eq:fractionalCT}, this definition of $\cost(x_1, x_2^{(s)})$ defines $\ct(x_1, \pmb{x}_2)$ for fractional $(x_1, \pmb{x_2})$, which allows us to define our LP.
\begin{align}
\label{LP:FLLP}
\min \qquad &\ct(x_1, \pmb{x_2}) \tag{UFL LP}\\
\text{s.t.} \qquad & \sum_{i \in F}z_{ij}^{(s)} \geq d_j^{(s)} & \forall j \in \met, \forall s \notag \\
&z_{ij}^{(s)} \leq x_1(i) + x_2^{(s)}(i) & \forall i \in F, \forall  j \in \met, \forall s \notag \\
&0 \leq x_1, \pmb{x_2}, \pmb{z}	\notag
\end{align}

\noindent Note that an integral solution to the above LP  is a feasible solution for \emax UFL. Ravi and Sinha showed how to round this LP.

\begin{lemma}[Theorem 2, Lemma 1 in \cite{RS-IPCO04}]\label{lem:roundFLLP}
Given a fractional solution $(x_1, \pmb{x_2})$ to \ref{LP:FLLP}, it is possible to round it  to integral $(X_1, \pmb{X_2})$ in polynomial time s.t.\ for every scenario $s$ we have $\cost(X_1, X_2^{(s)}) \leq 8 \cdot \cost(x_1, x_2^{(s)})$.
\end{lemma}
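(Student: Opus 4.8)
The plan is to adapt the filtering-and-clustering method of Shmoys, Tardos and Aardal to the two-stage setting, where the central difficulty is that $X_1$ is shared by all scenarios while $X_2^{(s)}$ is private to scenario $s$. First I would, for each scenario $s$ and each active client $j \in S_s$, record the fractional connection cost $C_j^{(s)} := \sum_{i \in F} \hat{z}_{ij}^{(s)} c_{ij}$ and split $j$'s coverage between the two stages. Since the natural assignment satisfies $\sum_i \hat{z}_{ij}^{(s)} \ge 1$ and $\hat{z}_{ij}^{(s)} \le x_1(i) + x_2^{(s)}(i)$, for each active pair $(j,s)$ either the first-stage variables account for at least half of $j$'s coverage or the second-stage variables of scenario $s$ do; I would label $(j,s)$ \emph{type-$1$} or \emph{type-$2$} accordingly.

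The key idea for keeping $X_1$ consistent is to cluster the two types on different scopes. For every active pair I would run the usual filtering step to obtain a radius $R_j^{(s)}$ with $R_j^{(s)} \le O(1) \cdot C_j^{(s)}$ such that the dominating stage opens at least a constant mass of facilities inside the ball $B(j, R_j^{(s)})$. I would then cluster \emph{all} type-$1$ pairs together, across scenarios, greedily selecting cluster centers in increasing order of radius so that the chosen balls are pairwise disjoint; for each type-$1$ center I open the cheapest first-stage facility in its ball and add it to the single set $X_1$. Because the balls are disjoint, each facility is charged at most once, so $\sum_{i \in X_1} f_{1,i} \le O(1) \cdot \sum_i x_1(i) f_{1,i}$ regardless of the number of scenarios. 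Symmetrically, for each fixed scenario $s$ I would cluster the type-$2$ pairs of that scenario alone, open the cheapest second-stage facility of scenario $s$ in each center's ball, and add it to $X_2^{(s)}$, giving $\sum_{i \in X_2^{(s)}} f_{2,i}^{(s)} \le O(1) \cdot \sum_i x_2^{(s)}(i) f_{2,i}^{(s)}$. All of these steps are clearly polynomial time.

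To bound connection costs I would route each non-center client to the facility opened for its own cluster center; by the triangle inequality and the disjoint-ball selection rule this costs $O(1) \cdot R_j^{(s)} = O(1) \cdot C_j^{(s)}$ for client $j$ in scenario $s$. Putting the three contributions together for a fixed scenario $s$, the integral cost $\cost(X_1, X_2^{(s)})$ is a constant times $\sum_i x_1(i) f_{1,i} + \sum_i x_2^{(s)}(i) f_{2,i}^{(s)} + \sum_{j \in S_s} C_j^{(s)}$, which is exactly $\cost(x_1, x_2^{(s)})$; tracking the constants from the filtering threshold and the clustering triangle-inequality blow-up yields the factor $8$.

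The step I expect to be the main obstacle is precisely the shared-versus-private facility bookkeeping: I must make sure that opening first-stage facilities while processing one scenario does not inflate the first-stage opening term---which is paid in \emph{every} scenario---beyond a constant times $\sum_i x_1(i) f_{1,i}$. Clustering all type-$1$ pairs on a single global scope with pairwise-disjoint balls is what resolves this, since it charges each first-stage facility at most once in total; the remaining care is to verify that a client which is type-$1$ in one scenario and type-$2$ in another is served within $O(C_j^{(s)})$ in both, which follows because in each scenario we rely only on the stage that dominates that particular pair.
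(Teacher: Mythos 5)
The paper does not actually prove this lemma: it is imported wholesale by citation from Ravi and Sinha \cite{RS-IPCO04} (their Theorem~2 and Lemma~1), so there is no in-paper argument to compare against. Your sketch is, in structure, a faithful reconstruction of that cited proof: split each active client--scenario pair according to whether the first-stage or the scenario's second-stage variables carry at least half of its coverage, filter to get balls of radius $O(1)\cdot C_j^{(s)}$ containing constant fractional facility mass from the dominating stage, cluster the type-$1$ pairs \emph{globally} with pairwise-disjoint balls so that the shared set $X_1$ is charged against $\sum_i x_1(i) f_{1,i}$ only once, and cluster the type-$2$ pairs scenario by scenario. You correctly identify the shared-versus-private bookkeeping as the crux, and your resolution (disjoint balls on a single global scope for stage one) is exactly what makes the per-scenario guarantee $\cost(X_1,X_2^{(s)}) \leq O(1)\cdot\cost(x_1,x_2^{(s)})$ go through, since the first-stage opening term appears identically in both the integral and fractional cost of every scenario. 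The only soft spot is that the constant $8$ is asserted rather than derived; it does come out as claimed (a factor $2$ from the halving of coverage composed with the factor $4$ from the deterministic filtering-and-clustering rounding), but if you were writing this as a self-contained proof you would need to fix the filtering threshold and carry the constants through explicitly rather than gesture at them.
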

We now give our approximation algorithm for \emax UFL.

\begin{theorem}
\emax UFL can be $\left( \frac{8}{1-1/e} \right)$-approximated in polynomial time.
\end{theorem}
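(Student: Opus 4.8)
The plan is to combine the reduction from \emax to \trunc (\Cref{thm:EmaxToCostB}) with the existing facility-location rounding (\Cref{lem:roundFLLP}), gluing them together through the scenario-by-scenario preservation guarantee of \Cref{lem:scenByScen}. Concretely, I would first solve \ref{LP:FLLP} to optimality in polynomial time, obtaining a fractional solution $(x_1^*, \pmb{x_2^*})$. This is legitimate because, as discussed in \S\ref{subsec:genTruncTechs}, the objective $\ct(x_1, \pmb{x_2})$ is representable in an LP by introducing one variable for the threshold $B$ and one variable per scenario for each truncated term $(\cost(x_1, x_2^{(s)}) - B)^+$, and \ref{LP:FLLP} has only polynomially many constraints.

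Next I would round $(x_1^*, \pmb{x_2^*})$ to an integral solution $(X_1, \pmb{X_2})$ using \Cref{lem:roundFLLP}, which guarantees $\cost(X_1, X_2^{(s)}) \leq 8 \cdot \cost(x_1^*, x_2^{*(s)})$ for \emph{every} scenario $s$ simultaneously. Because this bound holds scenario by scenario, \Cref{lem:scenByScen} applies verbatim with $c = 8$ and yields $\ct(X_1, \pmb{X_2}) \leq 8 \cdot \ct(x_1^*, \pmb{x_2^*})$. It then remains to relate the fractional optimum to the integral \trunc optimum $(T_1, \pmb{T_2})$: since every integral feasible solution is also feasible for \ref{LP:FLLP} and the LP minimizes $\ct$, we have $\ct(x_1^*, \pmb{x_2^*}) \leq \ct(T_1, \pmb{T_2})$, and chaining gives $\ct(X_1, \pmb{X_2}) \leq 8 \cdot \ct(T_1, \pmb{T_2})$. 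Thus $(X_1, \pmb{X_2})$ is an $8$-approximation for \trunc UFL. Feeding this into \Cref{thm:EmaxToCostB} immediately upgrades it to a $\left(\frac{8}{1-1/e}\right)$-approximation for \emax UFL, using that the rounded integral solution is also feasible for \emax UFL.

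The conceptual content is already carried by the cited lemmas, so the remaining work is mostly bookkeeping; the one place that warrants care is verifying that \ref{LP:FLLP} is a genuine relaxation of \trunc UFL. This has two parts: first, that the fractional cost built from the nearest-facility assignment $\hat{z}$ lower-bounds the true cost of any integral solution (standard for UFL, since $\hat{z}$ is the cheapest assignment consistent with the open facilities); and second, that the auxiliary-variable encoding of $\ct$ faithfully computes $\min_B[\,\cdots]$, which relies on \Cref{lem:altTruncForm} continuing to hold for fractional solutions as noted in \Cref{rem:lem:altTruncForm}. Once these two points are checked, the claimed ratio $\left(\frac{8}{1-1/e}\right)$ falls out of the chain above.
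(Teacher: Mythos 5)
Your proposal is correct and follows essentially the same route as the paper's proof: solve \ref{LP:FLLP}, round with \Cref{lem:roundFLLP}, transfer the per-scenario guarantee to the \trunc objective via \Cref{lem:scenByScen}, compare against the integral \trunc optimum through LP feasibility of its characteristic vector, and finish with \Cref{thm:EmaxToCostB}. The extra care you take in verifying that \ref{LP:FLLP} is a genuine relaxation (the nearest-facility assignment $\hat{z}$ and the auxiliary-variable encoding of $\ct$) is sound but only makes explicit what the paper leaves implicit.
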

\begin{proof}
Our algorithm starts by solving \ref{LP:FLLP} to get a fractional $(x_1, \pmb{x_2})$. Next, round $(x_1, \pmb{x_2})$ using \Cref{lem:roundFLLP} to  integral $(X_1, \pmb{X_2})$. Return $(X_1, \pmb{X_2})$.

Let $(O_1, \pmb{O_2})$ be the optimal integral solution to the \trunc instance of our problem and let $(o_1, \pmb{o_2})$ be its corresponding characteristic function. By definition, $\ct(o_1, \pmb{o_2}) = \ct(O_1, \pmb{O_2})$.
Now using \Cref{lem:scenByScen} and \Cref{lem:roundFLLP} it follows that
\begin{align*}
\ct(X_1, X_2) &\leq 8 \cdot \ct(x_1, \pmb{x_2}).
\end{align*}
Since $(o_1, \pmb{o_2})$ feasible for \ref{LP:FLLP}, we get
\[	\ct(X_1, X_2) \quad \leq \quad 8 \cdot \ct(o_1, \pmb{o_2}) \quad = \quad 8 \cdot \ct(O_1, \pmb{O_2}).
\]

\noindent Thus, our algorithm is an $8$-approximation for \trunc UFL. Applying \Cref{thm:EmaxToCostB} gives a $\left( \frac{8}{1-1/e} \right)$-approximation for \emax UFL.

Lastly, notice that our algorithm is trivially polynomial-time.
\end{proof}

\subsection{Steiner Tree}\label{sec:ST}
In this section we give a $\left(\frac{30}{1-1/e}\right)$-approximation for \emax rooted Steiner tree.

\begin{defn}[\emax Rooted Steiner tree]
We are given a graph $G = (V,E)$, a root $r \in V$, a cost $c_e$ for each edge $e$. We are also given scenarios $S_1, \ldots, S_m \subseteq V$, each with an associated probability $p_s$ and an \emph{inflation factor} $\sigma_s > 0$. We must find a first stage solution $X_1 \subseteq E$ and a second-stage solution for every scenario, $X_2^{(j)} \subseteq E$. A  solution is feasible if for every $s$ we have $X_1 \cup X_2^{(s)}$ connects $\{r\} \cup S_s$. The cost  for scenario $s$ in  solution $(X_1, \pmb{X_2})$  is
\begin{align}
\cost(X_1, X_2^{(s)}) := \sum_{e \in X_1} c_e + \sigma_s \cdot \sum_{e \in X_2^{(s)}} c_e.
\end{align}
The total cost we pay for solution  $(X_1, \pmb{X_2})$ is $\ce(X_1, \pmb{X_2}) := E_{A \sim \pmb{p}} \left[ \max_{s \in A} \{ \cost(X_1, X_2^{(s)}) \} \right]$.
\end{defn}

Our algorithm is based on an LP rounding algorithm  of Gupta et al.\ \cite{GRS-FOCS04} for two-stage stochastic Steiner tree. Roughly, we use \Cref{lem:scenByScen} to argue that the first stage solution for every optimal \trunc solution is, up to small constants, a tree rooted at  $r$.
This structural property allows us to write an LP that approximately captures \trunc Steiner tree. Gupta et al.\ \cite{GRS-FOCS04} showed that this LP can be rounded s.t. every scenario has a good cost. As in the previous section, we combine this rounding with \Cref{lem:scenByScen} to derive an approximation algorithm for  \trunc  Steiner tree, which is sufficient for approximating \emax Steiner tree by \Cref{thm:EmaxToCostB}.

We begin by arguing that up to small constants, the optimal first stage solution is a tree rooted at $r$.

\begin{lemma} \label{lem:rootTreeST}
There exists an integral solution $(\hat{X}_1, \pmb{\hat{X}_2})$ to \trunc Steiner tree s.t.\ $G[\hat{X}_1]$ is a tree rooted at $r$ and $\ct(\hat{X}_1, \pmb{\hat{X}_2}) \leq 2 \cdot \ct(O_1, \pmb{O_2})$, where $(O_1, \pmb{O_2})$ is the optimal solution to \trunc Steiner tree.
\end{lemma}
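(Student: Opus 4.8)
The plan is to reduce the claim to a \emph{per-scenario} statement and then invoke \Cref{lem:scenByScen}. Concretely, I would aim to produce a solution $(\hat X_1, \pmb{\hat X_2})$ in which $G[\hat X_1]$ is a tree rooted at $r$ and $\cost(\hat X_1, \hat X_2^{(s)}) \le 2\cdot\cost(O_1, O_2^{(s)})$ holds for \emph{every} scenario $s$; \Cref{lem:scenByScen} then upgrades this directly to $\ct(\hat X_1, \pmb{\hat X_2}) \le 2\cdot\ct(O_1, \pmb{O_2})$, which is exactly the lemma. Since deleting a cycle edge of $G[O_1]$ leaves all connected components unchanged and only lowers the first-stage cost, I would first assume WLOG that $G[O_1]$ is a forest, consisting of the component $C_0 \ni r$ and non-root components $C_1,\dots,C_k$. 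Writing $c(\cdot)$ for total edge cost, note $\cost(O_1, O_2^{(s)}) = c(O_1) + \sigma_s\, c(O_2^{(s)}) \ge c(O_1)$ for every $s$, so the first-stage cost is always ``affordable'' up to a factor two; the entire difficulty is in attaching the floating components $C_1,\dots,C_k$ into one tree through $r$ without overspending.

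The construction I would use decides, for each non-root component, whether to \emph{commit} it to the first stage --- keeping its edges in $\hat X_1$ and buying a cheapest path attaching it to $r$ --- or to \emph{defer} it, i.e.\ drop it from $\hat X_1$ and instead rebuild its edges inside $\hat X_2^{(s)}$ in each scenario $s$ that uses it (so that $\hat X_1 \cup \hat X_2^{(s)} \supseteq O_1 \cup O_2^{(s)}$ and feasibility is preserved). The intended rule is to commit a component exactly when the cost of attaching it to $r$ is at most its own internal cost: then the committed edges plus their attachment paths cost at most $2\,c(O_1)$, so $c(\hat X_1)\le 2\,c(O_1)\le 2\cost(O_1,O_2^{(s)})$ for all $s$. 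For a deferred component $C_i$ used in scenario $s$, the extra paid is $\sigma_s\, c(C_i)$, which I would charge against the optimal solution's own second-stage cost: in $O_1\cup O_2^{(s)}$ the component $C_i$ is attached to $r$ only through second-stage edges, and since attaching $C_i$ was deemed expensive relative to $c(C_i)$, the base cost of that attachment exceeds $c(C_i)$, so $\sigma_s\,c(C_i)$ is dominated by $\sigma_s\,c(O_2^{(s)})$. Summing, this would give $\cost(\hat X_1,\hat X_2^{(s)})\le 2\,c(O_1)+\sigma_s c(O_2^{(s)})+\sigma_s c(O_2^{(s)})\le 2\cost(O_1,O_2^{(s)})$, as desired.

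I expect the main obstacle to be making the charging in the deferred case truly disjoint across components within a single scenario. The naive per-component split above fails when several deferred components are individually far from $r$ yet cheaply interconnected (a ``stacked'' chain hanging off a long second-stage path): the optimal solution pays once for the long path but we would rebuild every component's internal mass, so $\sum_i \sigma_s c(C_i)$ can exceed $\sigma_s c(O_2^{(s)})$. To fix this I would not make the commit/defer decision component-by-component but on \emph{aggregated units}: build a rooted connection tree on the super-nodes $\{r, C_1,\dots,C_k\}$ (in the shortest-path metric of $G$) and sweep it bottom-up, committing a whole subtree to $\hat X_1$ as soon as the component mass it contains exceeds the cost of the edge linking it to its parent. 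This way each committed attachment edge is charged to disjoint component mass below it (keeping $c(\hat X_1)\le 2c(O_1)$), while each deferred component's rebuild cost is charged to the distinct parent connector it must use in the optimal second-stage tree $T\subseteq O_1\cup O_2^{(s)}$ --- these connectors are distinct second-stage edges of $T$, so the charges do not collide and sum to at most $c(O_2^{(s)})$. Carrying out this bottom-up charge cleanly, and verifying it is simultaneously valid for all scenarios (since $\hat X_1$ is fixed while $T$ varies with $s$), is the delicate part; the rest follows by \Cref{lem:scenByScen} and \Cref{lem:altTruncForm}.
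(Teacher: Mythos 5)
Your overall skeleton is exactly the paper's: establish the per-scenario bound $\cost(\hat X_1, \hat X_2^{(s)}) \le 2\cdot\cost(O_1, O_2^{(s)})$ for every $s$ and then invoke \Cref{lem:scenByScen}. The difference is that the paper obtains the per-scenario structural statement by directly citing Lemma~4.1 of Dhamdhere et al.~\cite{DGRS-FOCS05}, whereas you attempt to prove it from scratch --- and that is where your argument is incomplete.

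You correctly diagnose that the naive per-component commit/defer rule fails on stacked chains of components, but the aggregation fix you sketch does not close the gap. On the first-stage side: when you ``commit a whole subtree as soon as its component mass exceeds the cost of its parent edge,'' you charge only the parent edge against the subtree's mass. The connection-tree edges \emph{internal} to that subtree must also be bought to actually link its components to $r$, and by your own rule each such internal edge $e_u$ has subtree mass strictly below $c(e_u)$ (otherwise it would have triggered an earlier commit), so there is no disjoint mass left to pay for them; the bound $c(\hat X_1)\le 2\,c(O_1)$ is not established. On the second-stage side: your connection tree is built once in the global shortest-path metric, but the tree $T\subseteq O_1\cup O_2^{(s)}$ is scenario-dependent, and a deferred component's ``parent connector'' in $T$ need not be the connection-tree edge that justified deferring it --- it can be a much cheaper second-stage edge to a nearby component, which is precisely the stacking phenomenon reappearing one level up. You assert that the charges are distinct and sum to at most $c(O_2^{(s)})$, but give no argument that each deferred component's internal mass is dominated by the cost of its designated connector in $T$. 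The step you flag as ``delicate'' is in fact the entire content of the structural lemma, so as written the proof is not complete; if you do not wish to reconstruct that argument, citing the structural result from \cite{DGRS-FOCS05} as the paper does is the intended route.
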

\begin{proof}
Lemma 4.1 of Dhamdhere et al.\ \cite{DGRS-FOCS05} shows that given $(O_1, \pmb{O_2})$ it is possible to modify it to a feasible solution $(\hat{X}_1, \pmb{\hat{X}_2})$ such that $G[\hat{X}_1]$ is a tree rooted at $r$ and $\cost(\hat{X}_1, \hat{X}_2^{(s)}) \leq 2 \cdot \cost(O_1, O_2^{(s)})$ for every $s$. It follows by \Cref{lem:scenByScen} that $\ct(\hat{X}_1, \pmb{\hat{X}_2}) \leq 2 \cdot \ct(O_1, \pmb{O_2})$.
\end{proof}

We now describe how to formulate an LP that leverages the structural property in \Cref{lem:rootTreeST}. In particular, this indicates that as one gets closer to $r$, one must fractionally buy edges to a greater and greater extent. This constraint can be captured in an LP. Specifically, every node in a scenario (a.k.a.\ terminal) is the source of one unit of flow that is ultimately routed to $r$; this flow follows a path whose fractional ``first stage-ness" is monotonically increasing.

More formally, we copy each edge $e = \{u,v\}$ into two directed edges $(u,v)$ and $(v,u)$. Let $\vec{e}$ be either one of these directed edges. Next, for each such directed edge $\vec{e}$ and every terminal in $t \in \bigcup_{s} S_s$, we define variables $r_{1}(t, \vec{e})$ and $r_{2}^{(s)}(t, \vec{e})$ for every $s$ to represent how much $t$ is connected to $r$ by $e$ in the first stage and in  scenario $s$, respectively. Also, for undirected edge $e$, define variables $x_1(e)$ and $x_{2}^{(s)}(e)$ to stand for how much we buy $e$ in the first stage and scenario $s$, respectively. For fractional $(x_1, \pmb{x_2})$, we define
\begin{align*}
\ct(x_1, x_2^{(s)}):= \sum_{e} c_e \cdot x_1(e) + \sigma_s \cdot c_e  \cdot x_2(e),
\end{align*}
 which as described by Eq.\eqref{eq:fractionalCT} also defines $\ct(x_1, \pmb{x}_2)$. Letting $\delta^-(v)$ and $\delta^+(v)$ stand for all directed edges going into and out of $v$, respectively. The following is our LP.
\begin{align}
\label{LP:STEmaxLP}
\min \qquad  &\ct(x_1, \pmb{x_2}) \tag{ST LP}\\
\text{s.t.} \qquad & \sum_{\vec{e} \in \delta^+(v)}r_{1}(t, \vec{e}) + r_{2}^{(s)}(t, \vec{e}) = \sum_{\vec{e} \in \delta^-(v)}r_{1}(t, \vec{e}) + r_{2}^{(s)}(t, \vec{e}) & \forall s, t \in S_s, v \not \in \{t, r\} \notag \\
&\left[\sum_{\vec{e} \in \delta^+(t)} r_{1}(t, \vec{e}) + r^{(s)}_{2}(t, \vec{e}) \right] - \left[ \sum_{\vec{e} \in \delta^-(t)} r_{1}(t, \vec{e}) + r^{(s)}_{2}(t, \vec{e}) \right] \geq 1 &\forall s, t \in S_s \notag\\
& \sum_{\vec{e} \in \delta^-(v)} r_{1}(t, \vec{e}) \leq \sum_{\vec{e} \in \delta^+(v)} r_{1}(t, \vec{e}) &\forall s, t \in S_s, v \not \in \{t, r\} \label{line:LPMono}\\
&r_{1}(t, \vec{e}) \leq x_{1}(e); r_{2}^{(s)}(t, \vec{e}) \leq x_{2}^{(s)}(e)  &  \forall s, t \in S_s, \vec{e} \notag\\
&r, x_1, \pmb{x_2} \geq 0	\notag
\end{align}

Notably, Eq.~\eqref{line:LPMono} enforces that terminal $t$ is serviced by the first stage more and more as one moves closer to the root. The characteristic vector of $(\hat{X}_1, \pmb{\hat{X}_2})$ as described in \Cref{lem:rootTreeST} gives a feasible solution to \ref{LP:STEmaxLP}. As a result, \Cref{lem:rootTreeST} demonstrates that \ref{LP:STEmaxLP} has nearly optimal objective as stated in the following corollary.

\begin{corollary}\label{cor:STCor}
Let $(x_1, \pmb{x_2})$ be  the optimal solution of \ref{LP:STEmaxLP}. We have $\ct(x_1, \pmb{x_2}) \leq 2 \cdot \ct(O_1, \pmb{O_2})$, where $(O_1, \pmb{O_2})$ is the optimal solution to \trunc Steiner tree.
\end{corollary}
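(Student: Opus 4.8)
The plan is to exhibit a feasible solution of \ref{LP:STEmaxLP} whose objective value is at most $2 \cdot \ct(O_1, \pmb{O_2})$; since $(x_1, \pmb{x_2})$ is the optimal (minimizing) solution of \ref{LP:STEmaxLP}, this immediately yields the stated bound on $\ct(x_1, \pmb{x_2})$. The natural candidate is the characteristic vector of the integral solution $(\hat{X}_1, \pmb{\hat{X}_2})$ supplied by \Cref{lem:rootTreeST}, for which $G[\hat{X}_1]$ is a tree rooted at $r$ and $\ct(\hat{X}_1, \pmb{\hat{X}_2}) \leq 2 \cdot \ct(O_1, \pmb{O_2})$. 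I would set $x_1(e)$ to be the indicator of $e \in \hat{X}_1$ and $x_2^{(s)}(e)$ the indicator of $e \in \hat{X}_2^{(s)}$, where we may assume without loss of generality that $\hat{X}_1$ and $\hat{X}_2^{(s)}$ are disjoint (an edge bought in the first stage need not be rebought); then the LP objective $\ct(x_1, \pmb{x_2})$ evaluated on this vector equals the combinatorial cost $\ct(\hat{X}_1, \pmb{\hat{X}_2})$ scenario by scenario. It then remains only to define the routing variables $r_1(t, \vec{e})$ and $r_2^{(s)}(t, \vec{e})$ and verify the flow-conservation, demand, monotonicity, and capacity constraints.

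The key step, and the one I expect to be the main obstacle, is producing for every scenario $s$ and terminal $t \in S_s$ a single $t$-to-$r$ path along which routing one unit of flow respects the monotonicity constraint Eq.~\eqref{line:LPMono}: the first-stage flow entering any interior vertex must be at most the first-stage flow leaving it. Reading the path from $t$ toward $r$, this forces all second-stage edges to precede all first-stage edges, with no switching back. I would obtain such a path from the tree structure of $\hat{X}_1$. Let $C$ be the connected component of $t$ in the subgraph with edge set $\hat{X}_2^{(s)} \setminus \hat{X}_1$. Since $\hat{X}_1 \cup \hat{X}_2^{(s)}$ connects $t$ to $r$ and $\hat{X}_1$ is a tree containing $r$, the component $C$ must contain some vertex $w$ of the tree $\hat{X}_1$ (otherwise $t$ could not reach $r$). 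Routing $t \to w$ inside $C$ uses only second-stage edges, and routing $w \to r$ along the unique tree path in $\hat{X}_1$ uses only first-stage edges, giving a path of the required ``second-stage-then-first-stage'' form.

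Assigning one unit of $r_2^{(s)}$-flow to the second-stage segment and one unit of $r_1$-flow to the first-stage segment satisfies flow conservation and the unit demand at $t$, and it satisfies monotonicity because no interior vertex has first-stage flow entering while second-stage flow leaves. The capacity constraints $r_1(t, \vec{e}) \leq x_1(e)$ and $r_2^{(s)}(t, \vec{e}) \leq x_2^{(s)}(e)$ hold since each routed first-stage (respectively second-stage) edge lies in $\hat{X}_1$ (respectively $\hat{X}_2^{(s)} \setminus \hat{X}_1$), where the corresponding indicator variable equals $1$. This establishes feasibility.

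Finally, with feasibility established and the objective of this characteristic vector equal to $\ct(\hat{X}_1, \pmb{\hat{X}_2})$, optimality of $(x_1, \pmb{x_2})$ for \ref{LP:STEmaxLP} gives $\ct(x_1, \pmb{x_2}) \leq \ct(\hat{X}_1, \pmb{\hat{X}_2})$, and \Cref{lem:rootTreeST} completes the chain with $\ct(\hat{X}_1, \pmb{\hat{X}_2}) \leq 2 \cdot \ct(O_1, \pmb{O_2})$, proving the corollary.
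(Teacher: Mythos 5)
Your proposal is correct and follows essentially the same route as the paper: both exhibit the characteristic vector of $(\hat{X}_1, \pmb{\hat{X}_2})$ from \Cref{lem:rootTreeST} as a feasible solution to \ref{LP:STEmaxLP} by routing each terminal first along second-stage edges to some vertex of the first-stage tree and then along the unique tree path to $r$, and then invoke LP optimality together with \Cref{lem:rootTreeST}. Your verification of the monotonicity constraint Eq.~\eqref{line:LPMono} is in fact more careful than the paper's one-line feasibility claim, but it is the same construction.
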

\begin{proof}
Let $(\hat{x}_1, \pmb{\hat{x}_2})$ be the characteristic vector of $(\hat{X}_1, \pmb{\hat{X}_2})$ from \Cref{lem:rootTreeST}. Consider an arbitrary terminal $t$. Let $P_2$ for terminal $t$ be the shortest path from $t$ to $\hat{X}_1$ in $G[\hat{X}_2]$. Let $u_t$ be the sink of $P_2$ and let $P_1$ be the shortest path from $u_t$ to $r$ in $G[\hat{X}_1]$. Notice that $(\hat{x}_1, \pmb{\hat{x}_2})$ along with $r_2$ which sends one unit of flow from $t$ to $u_t$ along $P_2$ and $r_1$ which sends one unit of flow from $u_t$ to $r$ along $P_1$ for every $t$ is a feasible solution to \ref{LP:STEmaxLP}. Moreover, notice that cost of this solution is $\ct(\hat{x}_1, \pmb{x_2}) = \ct(X_1, \pmb{X_2}) \leq 2 \cdot \ct(O_1, \pmb{O_2})$ by \Cref{lem:rootTreeST}.
\end{proof}

Previous work of Gupta et al.\ \cite{GRS-FOCS04} shows that it is possible to round a fractional solution of \ref{LP:STEmaxLP} such that \emph{every} scenario has a good cost.
\begin{lemma}[\cite{GRS-FOCS04}]\label{lem:STRound}
A fractional solution $(x_1, \pmb{x_2})$  to \ref{LP:STEmaxLP} can be rounded in polynomial time to a feasible integral solution $(X_1, \pmb{X_2})$ s.t. $\cost(X_1, X_2^{(s)}) \leq 15 \cdot \cost(x_1, x_2^{(s)})$ for every $s$.
\end{lemma}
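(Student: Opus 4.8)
The plan is to round \ref{LP:STEmaxLP} in two phases — first committing to a single integral first-stage tree $X_1$ rooted at $r$, then augmenting it separately within each scenario — and to bound each phase's cost by a constant times the corresponding fractional cost. Since $\cost(X_1, X_2^{(s)}) = \sum_{e \in X_1} c_e + \sigma_s \sum_{e \in X_2^{(s)}} c_e$ splits into a first-stage and a second-stage term, bounding $\sum_{e \in X_1} c_e \le O(1) \cdot \sum_e c_e x_1(e)$ and, for each $s$, $\sum_{e \in X_2^{(s)}} c_e \le O(1) \cdot \sum_e c_e x_2^{(s)}(e)$ separately yields the per-scenario guarantee $\cost(X_1, X_2^{(s)}) \le 15 \cdot \cost(x_1, x_2^{(s)})$ after the $\sigma_s$ factors out of the second-stage term.

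The structural workhorse is the monotonicity constraint~\eqref{line:LPMono}, which forces the first-stage flow $r_1(t,\cdot)$ of each terminal $t$ to have non-decreasing throughput as it moves toward $r$; intuitively $t$'s flow is carried by second-stage edges near $t$ and, once converted, stays first-stage all the way to the root. First I would use this to designate, for each terminal $t \in \bigcup_s S_s$, an \emph{anchor} node $u_t$ by thresholding at $\theta = \tfrac12$: a node-frontier separating $t$ from $r$ such that $t$ reaches $u_t$ predominantly through second-stage edges while $u_t$ reaches $r$ predominantly through first-stage edges. Because at least half of $t$'s flow to $r$ beyond $u_t$ is first-stage, scaling gives that $2 x_1$ routes one unit of flow from every anchor $u_t$ to $r$; hence $2x_1$ is a feasible fractional solution to the ordinary single-stage rooted Steiner tree flow LP on terminal set $\{u_t\}$. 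Rounding that LP (whose integrality gap is a small constant) produces the integral first-stage tree $X_1$ with $\sum_{e \in X_1} c_e = O(1) \cdot \sum_e c_e\, x_1(e)$.

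For the second phase, fix a scenario $s$. The second-stage flow $r_2^{(s)}(t,\cdot)$ fractionally connects each $t \in S_s$ to its anchor $u_t$ (and hence, after $X_1$ is built, to the root-connected first-stage tree), again up to the thresholding factor. Rounding these anchor-to-terminal connections scenario-by-scenario with another Steiner-tree-style routine yields $X_2^{(s)}$ with $\sum_{e \in X_2^{(s)}} c_e = O(1) \cdot \sum_e c_e\, x_2^{(s)}(e)$; multiplying by $\sigma_s$ reinstates the inflation factor. By construction $X_1 \cup X_2^{(s)}$ connects every terminal of $S_s$ to $r$, giving feasibility, and summing the two phases gives the claimed bound. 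The exact constant $15$ then follows from tracking the product of the thresholding factor and the integrality gaps of the two Steiner-tree roundings, matching the bookkeeping of Gupta et al.~\cite{GRS-FOCS04}.

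The main obstacle is that the \emph{single} first-stage tree $X_1$ must simultaneously serve as a cheap anchor point for every terminal of every scenario, even though its cost is charged only against the shared first-stage fractional cost $\sum_e c_e\, x_1(e)$. This is exactly where~\eqref{line:LPMono} is essential: without monotonicity, each terminal's ``first-stage portion'' could sit near the leaves rather than near $r$, the anchors would not lie on a common root-connected core, and no single first-stage tree could connect them cheaply. The delicate step is therefore verifying that the $\theta=\tfrac12$ thresholding produces anchors that are both (a) collectively connected to $r$ by $x_1$ at constant overhead and (b) individually reachable from each terminal by that scenario's second-stage flow at constant overhead, with the two roundings composing without an unbounded blow-up.
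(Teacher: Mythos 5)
First, note that the paper does not actually prove \Cref{lem:STRound}: it is imported verbatim as a black box from Gupta et al.~\cite{GRS-FOCS04}, so there is no in-paper proof to compare your argument against. What you have written is a blind reconstruction of the cited rounding, and it gets the high-level shape right (a two-phase rounding, a per-scenario charging argument that splits $\cost(X_1,X_2^{(s)})$ into a first-stage term charged to $\sum_e c_e x_1(e)$ and a second-stage term charged to $\sigma_s\sum_e c_e x_2^{(s)}(e)$, with the monotonicity constraint~\eqref{line:LPMono} as the structural workhorse). That is indeed the strategy of~\cite{GRS-FOCS04}, and your closing paragraph correctly identifies where the difficulty lies.

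However, the step you identify as delicate is precisely the one your sketch does not carry out, and as written it has a genuine gap. You posit a single \emph{anchor node} $u_t$ per terminal at the $\theta=\tfrac12$ threshold, but the flow $r_1(t,\cdot)+r_2^{(s)}(t,\cdot)$ may split across many paths, so the locus where the first-stage throughput reaches $\tfrac12$ is a \emph{cut} (a set of vertices), not a single vertex. Two consequences follow. (a) The claim that ``$2x_1$ routes one unit of flow from every anchor $u_t$ to $r$'' is only valid if all of $t$'s flow passes through $u_t$; for a frontier, what you get is that $2x_1$ routes one unit from the \emph{group} of frontier vertices to $r$, and connecting a group to the root is a group-Steiner-type requirement that does not reduce to the constant integrality gap of the ordinary rooted Steiner tree cut relaxation. (b) Feasibility of the composition is asserted, not proved: after you build some integral first-stage tree $X_1$, the second-stage rounding for scenario $s$ must connect $t$ to the \emph{root-connected component of the actual tree $X_1$}, and the witness you want to use for this, namely a scaled $x_2^{(s)}$, need not cover the relevant cuts --- a cut separating $t$ from $V(X_1)$ could be crossed mostly by first-stage fractional mass on edges that the first-stage rounding chose not to buy. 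Resolving exactly this interaction (so that each phase is charged only to its own fractional cost, with the two roundings composing at bounded loss) is the content of the lemma in~\cite{GRS-FOCS04} and is where the constant $15$ comes from; deferring it to ``the bookkeeping of Gupta et al.'' makes your argument circular, since at that point you are simply citing the same result the paper cites.
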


Since \Cref{cor:STCor} gives \ref{LP:STEmaxLP} has a good optimal solution, we can round \ref{LP:STEmaxLP} s.t. every scenario has a low cost. Now \Cref{lem:scenByScen} tells us that such a rounding preserves the cost of a solution for \trunc optimization. This gives the following theorem.
\begin{theorem}\label{thm:STTruncApx}
\emax Steiner tree can be $\left(\frac{30}{1-1/e}\right)$-approximated in polynomial time.
\end{theorem}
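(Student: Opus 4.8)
The plan is to combine the structural corollary on our LP's optimum with the per-scenario rounding guarantee, and then pass from \trunc to \emax via the reduction of \Cref{thm:EmaxToCostB}. The algorithm is: solve \ref{LP:STEmaxLP} to obtain an optimal fractional solution $(x_1, \pmb{x_2})$, round it to an integral feasible solution $(X_1, \pmb{X_2})$ using \Cref{lem:STRound}, and return $(X_1, \pmb{X_2})$.

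First I would bound the rounded solution's \trunc cost against the \trunc optimum. \Cref{lem:STRound} guarantees $\cost(X_1, X_2^{(s)}) \leq 15 \cdot \cost(x_1, x_2^{(s)})$ for every scenario $s$, so \Cref{lem:scenByScen} (with $c = 15$) immediately lifts this per-scenario bound to $\ct(X_1, \pmb{X_2}) \leq 15 \cdot \ct(x_1, \pmb{x_2})$. Next, since $(x_1, \pmb{x_2})$ is the optimal LP solution, \Cref{cor:STCor} gives $\ct(x_1, \pmb{x_2}) \leq 2 \cdot \ct(O_1, \pmb{O_2})$, where $(O_1, \pmb{O_2})$ is the optimal \trunc Steiner tree solution. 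Chaining these yields $\ct(X_1, \pmb{X_2}) \leq 30 \cdot \ct(O_1, \pmb{O_2})$, i.e.\ the algorithm is a $30$-approximation for \trunc Steiner tree.

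Finally I would invoke \Cref{thm:EmaxToCostB}: an $\alpha$-approximation for $\Pt$ is a $\frac{\alpha}{1-1/e}$-approximation for the corresponding $\Pe$. Plugging in $\alpha = 30$ gives the claimed $\frac{30}{1-1/e}$-approximation for \emax Steiner tree. It remains to check polynomial running time: \ref{LP:STEmaxLP} has polynomially many flow and coupling variables and constraints (together with the auxiliary variables needed to represent $\ct$ as discussed in \S\ref{subsec:genTruncTechs}), so it is solvable in polynomial time, and the rounding of \Cref{lem:STRound} runs in polynomial time by hypothesis.

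Since every ingredient is already established, the only real obstacle is bookkeeping: I must ensure the two approximation losses compose multiplicatively in the right order---the factor $2$ from restructuring the optimal first-stage solution into a root tree (\Cref{lem:rootTreeST}, used inside \Cref{cor:STCor}) and the factor $15$ from rounding---and that both bounds are genuinely per-scenario so that \Cref{lem:scenByScen} applies. The subtle point worth double-checking is that the integral rounded solution is feasible for \emax and not merely for \trunc; this holds because feasibility is identical in both models (connecting $\{r\} \cup S_s$ for every $s$), so any \trunc-feasible solution is automatically \emax-feasible and \Cref{thm:EmaxToCostB} applies directly.
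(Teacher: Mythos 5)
Your proposal is correct and follows essentially the same route as the paper: solve \ref{LP:STEmaxLP}, round via \Cref{lem:STRound}, lift the per-scenario factor $15$ to the \trunc objective with \Cref{lem:scenByScen}, compose with the factor $2$ from \Cref{cor:STCor}, and finish with \Cref{thm:EmaxToCostB}. Your added remarks on feasibility carrying over from \trunc to \emax and on the separability/size of the LP are consistent with what the paper relies on implicitly.
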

\begin{proof}
Our algorithm first solves \ref{LP:STEmaxLP} to get fractional solution  $(x_1, \pmb{x_2})$. Next, we apply \Cref{lem:STRound} to round $(x_1, \pmb{x_2})$ in polynomial time to give $(X_1, \pmb{X_2})$ as our solution. Thus, we have
\begin{align*}
\ct(X_1, \pmb{X_2}) &\leq 15 \cdot \ct(x_1, \pmb{x_2}) \bec{\Cref{lem:scenByScen}, \Cref{lem:STRound}}\\
&\leq 30 \cdot \ct(O_1, \pmb{O_2}), \bec{\Cref{cor:STCor}}
\end{align*}
\noindent where $(O_1, \pmb{O_2})$ is the optimal \trunc Steiner tree solution.
This implies we have a $30$-approximation algorithm for \trunc Steiner tree. Now by \Cref{thm:EmaxToCostB}, we have a $\left(\frac{30}{1-1/e}\right)$-approximation for \emax Steiner tree.

Lastly, each of our subroutines has a polynomial runtime by previous lemmas, and so we conclude that our algorithm has a polynomial runtime.
\end{proof}

\subsection{MST}\label{sec:MST}

In this section we give a randomized polynomial-time algorithm which with high probability has expected cost $O(\log n + \log m)$ times the optimal \emax minimum spanning tree (MST) on an $n$-node graph with $m$ different scenarios.

\begin{defn}[\emax MST]
We are given a graph $G = (V,E)$ where $|V| = n$, a set of $m$ scenarios $S_1, \ldots S_m$ where each scenario $S_s$ has an associated second-stage cost function $\cost_2^{(s)} : E \rightarrow \mathbb{Z}^+$  and a probability $p_s$. We are also given a first-stage cost function, $\cost_1 : E \rightarrow \mathbb{Z}^+$. We must provide a first stage solution $X_1 \subseteq E$ and a solution $X_2^{(s)} \subseteq E$ for every scenario $s$, which is feasible if $G[X_1 \cup X_2^{(s)}]$ spans $V$ for every $s$. The cost  for scenario $s$ in solution $(X_1, \pmb{X_2})$  is
\begin{align}
\cost(X_1, X_2^{(s)}) := \sum_{e \in X_1} \cost_1(e) + \sum_{e \in X_2^{(s)}} \cost_2^{(s)}(e).
\end{align}
The total cost for solution $(X_1, \pmb{X_2})$ is $\ce(X_1, \pmb{X_2}) := \E_{A \sim \pmb{p}} \left[ \max_{s \in A} \{\cost(X_1, X_2^{(s)})\} \right]$.
\end{defn}

 Our algorithm is based on the work of Dhamdhere et al.\ \cite{DRS-IPCO05} on two-stage stochastic MST. They give a rounding technique that produces integral solutions where every scenario has a cost close to the fractional cost. Using this rounding, and applying \Cref{lem:scenByScen}, we get an approximation algorithm for \trunc MST, which by \Cref{thm:EmaxToCostB} is also sufficient to approximate \emax MST.

 Notice that since \emax generalizes two-stage robust optimization, our \emax result gives a $O(\log n + \log m)$ approximation for two-stage robust MST as a corollary. To the best of our knowledge, this is the first non-trivial algorithm for two-stage robust MST.

Our algorithm is based on an LP. We have $m+1$ variables for each edge $e$, namely $x_{1}(e)$ and $x_{2}^{(s)}(e)$ for $s \in [m]$ indicating if we take $e$ in the first stage and in the second stage for scenario $s$, respectively. For a fractional solution $(x_1, \pmb{x_2})$, we define
\begin{align}
\cost(x_1, x_2^{(s)}) := \sum_e x_1(e) \cdot \cost_1(e) + x_2^{(s)}(e) \cdot \cost_2(e),
\end{align}
 which as described in Eq.\eqref{eq:fractionalCT}, defines $\ct(x_1, \pmb{x}_2)$ for fractional $(x_1, \pmb{x_2})$. Letting $\delta(S)$ be all edges with exactly one endpoint in $S \subseteq V$. The following is our LP.
\begin{align}
\label{LP:MSTLP}
\min  \qquad &\ct(x_1, \pmb{x_2}) \tag{MST LP}\\
\text{s.t.} \qquad & \sum_{e \in \delta(S)} \Big(x_{1}(e) + x_{2}^{(s)}(e) \Big) \geq 1 &\forall \emptyset \subset S \subset V, s \in [m] \notag \\
& x_1, \pmb{x_2} \geq 0	\notag
\end{align}
Note that an integral solution to \ref{LP:MSTLP} is a feasible solution for the \trunc MST problem as a set of edges with at least one edge leaving every cut is a spanning tree.\footnote{If such a solution has any cycles it is not necessarily an MST, though one can always delete an edge from such a cycle and improve the cost of the solution.} Also, although this LP has super-polynomial constraints, it is easy to obtain an efficient separation by solving min-cut; see Dhamdhere et al.\ \cite{DRS-IPCO05}.

We need the following  result of Dhamdhere et al.\ \cite{DRS-IPCO05}  to round \ref{LP:MSTLP} such that every scenario has a low cost.
\begin{lemma}[\cite{DRS-IPCO05}]\label{lem:MSTRounding}
It is possible to randomly round a feasible fractional solution $(x_1, \pmb{x_2})$ to \ref{LP:MSTLP} to an integral solution $(X_1, \pmb{X_2})$ in polynomial time s.t. with probability at least $1-\frac{1}{mn^2}$ for every scenario $s$ we have  $\E[\cost(X_1, X_2^{(s)})] \leq \cost(x_1, x_2^{(s)}) \cdot (40 \log n + 16 \log m)$. Here the expectation is taken over the randomness of our rounding and $m$ is the number of scenarios.
\end{lemma}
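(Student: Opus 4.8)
The plan is to round the two stages \emph{independently by scaled randomized rounding} and to argue per-scenario feasibility via a weighted cut-counting bound; I read the lemma as making two assertions that I establish separately, namely (i) the rounded solution is feasible for every scenario with probability at least $1-\tfrac{1}{mn^2}$, and (ii) for each scenario $s$ the expected cost is at most $(40\log n + 16\log m)\cdot\cost(x_1, x_2^{(s)})$. Set $\gamma := 40\log n + 16\log m$. Sample the first stage once, including each edge $e$ in $X_1$ independently with probability $q_1(e) := \min\{1, \gamma\, x_1(e)\}$; then, independently for each scenario $s$, include $e$ in $X_2^{(s)}$ with probability $q_2^{(s)}(e) := \min\{1, \gamma\, x_2^{(s)}(e)\}$, all coin flips mutually independent. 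This is clearly polynomial time, and assertion (ii) is immediate from linearity of expectation, since capping only lowers inclusion probabilities: $\E[\cost(X_1, X_2^{(s)})] = \sum_e \cost_1(e)\,\Pr[e \in X_1] + \cost_2^{(s)}(e)\,\Pr[e \in X_2^{(s)}] \le \gamma \sum_e (\cost_1(e)\, x_1(e) + \cost_2^{(s)}(e)\, x_2^{(s)}(e)) = \gamma\cdot\cost(x_1, x_2^{(s)})$.

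The heart of the argument is feasibility. Write $y^{(s)}(e) := x_1(e) + x_2^{(s)}(e)$; the LP constraints say $\sum_{e\in\delta(S)} y^{(s)}(e) \ge 1$ for every cut $S$ and scenario $s$, so each $y^{(s)}$ is a fractional spanning tree whose minimum cut is at least $1$. First I would show that every edge survives into $X_1 \cup X_2^{(s)}$ with probability at least $\min\{1, \tfrac{\gamma}{2} y^{(s)}(e)\}$. Since $X_1$ and $X_2^{(s)}$ are sampled independently, $\Pr[e \in X_1\cup X_2^{(s)}] = 1 - (1-q_1(e))(1-q_2^{(s)}(e))$; in the uncapped regime (writing $a = x_1(e)$, $b = x_2^{(s)}(e)$) this equals $\gamma(a+b) - \gamma^2 ab \ge \gamma(a+b) - \tfrac{\gamma}{2}(a+b) = \tfrac{\gamma}{2} y^{(s)}(e)$, using $\gamma^2 ab \le \gamma\min\{a,b\} \le \tfrac{\gamma}{2}(a+b)$, and the capped cases only help. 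Thus sampling the union effectively samples $y^{(s)}$ at rate $\gamma_0 := \gamma/2 = 20\log n + 8\log m$.

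Next I would bound the failure probability of a single scenario. Survival events of distinct edges depend on disjoint sets of coins, hence are independent, so for a fixed cut $S$, $\Pr[\delta(S)\cap(X_1\cup X_2^{(s)}) = \emptyset] \le \prod_{e\in\delta(S)}(1 - \min\{1,\gamma_0 y^{(s)}(e)\}) \le \exp(-\gamma_0 \sum_{e\in\delta(S)} y^{(s)}(e)) = \exp(-\gamma_0\, c_S)$, where $c_S := \sum_{e\in\delta(S)} y^{(s)}(e) \ge 1$ and we used $1-\min\{1,z\}\le e^{-z}$. I would then invoke the standard weighted cut-counting bound: since the minimum cut of $y^{(s)}$ is at least $1$, the number of cuts of value at most $\alpha$ is at most $n^{2\alpha}$. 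Partitioning cuts dyadically by value and summing the resulting geometric series, $\sum_S \exp(-\gamma_0 c_S) \le \sum_{k\ge 0} (n^4 e^{-\gamma_0})^{2^k}$, whose leading term $n^4 e^{-\gamma_0}$ is at most $\tfrac{1}{2 m^2 n^2}$ for our generous choice of $\gamma_0$, so the whole sum is at most $\tfrac{1}{m^2 n^2}$. A union bound over the $m$ scenarios then gives $\Pr[\text{some scenario disconnected}] \le m\cdot \tfrac{1}{m^2 n^2} = \tfrac{1}{m n^2}$, establishing assertion (i).

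The main obstacle I anticipate is the feasibility analysis rather than the cost bound: I must correctly handle the fact that $X_1$ is shared across all scenarios yet each scenario demands its own spanning tree, which forces the per-edge survival computation $1-(1-q_1)(1-q_2)$ and the attendant factor-of-two loss that doubles the sampling rate, explaining why the cost factor $\gamma = 40\log n + 16\log m$ is twice the connectivity rate $\gamma_0$. Closing the explicit constants requires pinning down the cut-counting bound for \emph{weighted} (fractional) graphs, verifying that the dyadic sum is dominated by its first term, and checking that the capping of inclusion probabilities never hurts either the survival lower bound or the cost upper bound.
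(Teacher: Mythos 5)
The paper never proves this lemma---it is imported as a black box from Dhamdhere et al.\ \cite{DRS-IPCO05}---and your reconstruction is correct and follows essentially the same standard route as that cited source: one-shot independent rounding with inclusion probabilities inflated by $\gamma = \Theta(\log n + \log m)$, a per-edge survival bound for the union $X_1 \cup X_2^{(s)}$, Karger-style cut counting against the fractional min-cut lower bound of $1$, a dyadic sum over cut values, and a union bound over the $m$ scenarios. Your bookkeeping checks out: the factor-of-two loss (survival of $e$ at rate $\tfrac{\gamma}{2}\, y^{(s)}(e)$, via $\gamma^2 ab \le \gamma \min\{a,b\} \le \tfrac{\gamma}{2}(a+b)$, valid in the uncapped regime since $\gamma a, \gamma b \le 1$ there, with capping only helping both bounds), the geometric domination of the dyadic sum by its first term for $\gamma_0 = 20\log n + 8\log m$, and the unconditional expected-cost bound via linearity of expectation, which is indeed the sensible reading of the lemma's slightly conflated ``with probability \ldots the expectation'' phrasing.
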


We can now design our approximation algorithm for \emax MST.
\begin{theorem}
There exists a randomized polynomial-time algorithm that with probability at least $1 - \frac{1}{mn^2}$ in expectation $O(\log n + \log m)$-approximates \emax MST where $n = |V|$ and $m$ is the number of scenarios.
\end{theorem}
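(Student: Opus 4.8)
The plan is to reuse verbatim the pipeline established for \emax UFL and \emax Steiner tree: reduce to \trunc MST, approximate the \trunc objective by LP rounding, and transfer back to \emax through \Cref{thm:EmaxToCostB}. First I would solve \ref{LP:MSTLP} to obtain an optimal fractional solution $(x_1, \pmb{x_2})$; the exponentially-many cut constraints are handled by the min-cut separation oracle noted after the LP. Since the characteristic vector of any feasible integral \trunc MST solution crosses every cut and therefore satisfies all constraints of \ref{LP:MSTLP}, this LP is a relaxation of \trunc MST, giving $\ct(x_1, \pmb{x_2}) \le \ct(O_1, \pmb{O_2})$, where $(O_1, \pmb{O_2})$ denotes the optimal \trunc MST solution.

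Next I would round $(x_1, \pmb{x_2})$ via \Cref{lem:MSTRounding} to an integral $(X_1, \pmb{X_2})$. Writing $c := 40\log n + 16\log m$, let $\mathcal{E}$ be the success event of that lemma, which has probability at least $1 - \tfrac{1}{mn^2}$ and on which the rounded solution is feasible and every scenario $s$ satisfies $\E[\cost(X_1, X_2^{(s)})] \le c \cdot \cost(x_1, x_2^{(s)})$. Exactly as the UFL proof chains \Cref{lem:scenByScen} with \Cref{lem:roundFLLP} and the Steiner tree proof chains it with \Cref{lem:STRound}, I would combine this per-scenario bound with \Cref{lem:scenByScen} to conclude (now in expectation over the rounding, conditioned on $\mathcal{E}$) that $\ct(X_1, \pmb{X_2}) \le c \cdot \ct(x_1, \pmb{x_2})$, i.e.\ the rounding is an $O(\log n + \log m)$-approximation for \trunc MST. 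Concatenating with the relaxation bound of the previous paragraph yields $\ct(X_1, \pmb{X_2}) \le c \cdot \ct(O_1, \pmb{O_2})$.

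The main obstacle is reconciling the \emph{deterministic} form of \Cref{lem:scenByScen} with the \emph{randomized} guarantee of \Cref{lem:MSTRounding}, which only controls feasibility with high probability and scenario cost in expectation, rather than providing a single realization in which all scenarios are simultaneously cheap. I would address this by conditioning on $\mathcal{E}$ and invoking \Cref{lem:scenByScen} against the fixed deterministic threshold $B^* := c \cdot B(x_1, \pmb{x_2})$: because $\ct$ is a minimum over thresholds, $\ct(X_1, \pmb{X_2}) \le B^* + \sum_s p_s\,(\cost(X_1, X_2^{(s)}) - B^*)^+$, and pushing the expectation inside by linearity, together with \Cref{lem:altTruncForm} applied to the fractional solution, is what lets the per-scenario bound collapse back to $c \cdot \ct(x_1, \pmb{x_2})$. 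The delicate point is that the truncation $(\cdot)^+$ is convex, so the argument must lean on the structure of the rounding (the scaled fractional border $B^*$ and the fact that $\sum_{s \in M(x_1,\pmb{x_2})} p_s < 2$) rather than on a generic Jensen-type step.

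Finally, applying \Cref{thm:EmaxToCostB} to the $c$-approximation for \trunc MST gives that $(X_1, \pmb{X_2})$ is, in expectation, a $\tfrac{c}{1-1/e} = O(\log n + \log m)$-approximation for \emax MST, and it is feasible with probability at least $1 - \tfrac{1}{mn^2}$. Polynomial running time is immediate: \ref{LP:MSTLP} is solved through its min-cut separation oracle, and the rounding of \Cref{lem:MSTRounding} runs in polynomial time.
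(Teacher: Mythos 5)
Your proposal follows essentially the same route as the paper: solve \ref{LP:MSTLP} (noting the optimal integral \trunc solution is feasible for it), round via \Cref{lem:MSTRounding}, transfer the per-scenario guarantee through \Cref{lem:scenByScen}, and finish with \Cref{thm:EmaxToCostB}. The only difference is that you explicitly flag and partially work out the interaction between the expectation over the rounding and the convex truncation $(\cdot)^+$, a point the paper dispatches in a footnote as ``easy to verify''; your treatment of it is, if anything, more careful than the paper's.
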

\begin{proof}
Our algorithm starts by following \ref{LP:MSTLP} to get a fractional solution $(x_1, \pmb{x_2})$. Next, apply \Cref{lem:MSTRounding} to round $(x_1, \pmb{x_2})$ to an integral solution $(X_1, \pmb{X_2})$. Return $(X_1, \pmb{X_2})$.

Next consider the cost of $(X_1, \pmb{X_2})$. Let $(O_1, \pmb{O_2})$ be the optimal integral solution to our \trunc MST problem and let $(o_1, \pmb{o_2})$ be the corresponding characteristic vector. Notice that $(o_1, \pmb{o_2})$ is a feasible solution to \ref{LP:MSTLP}. Moreover, it is easy to verify that $\ct(o_1, \pmb{o_2}) = \ct(O_1, \pmb{O_2})$. Taking expectations over the randomness of our algorithm and applying \Cref{lem:scenByScen} and \Cref{lem:MSTRounding},  we have  with probability at least $1 - \frac{1}{mn^2}$ that
\begin{align*}
\E[\ct(X_1, \pmb{X_2})] &\leq (40 \log n + 16 \log m) \cdot \ct(o_1, \pmb{o_2}) \\
&= (40 \log n + 16 \log m) \cdot \ct(O_1, \pmb{O_2}).
\end{align*}
Thus, with probability at least $1 - \frac{1}{mn^2}$ our algorithm's expected \trunc cost is within $(40 \log n + 16 \log m)$ of the cost of the optimal \trunc MST solution. We conclude by \Cref{thm:EmaxToCostB} that with high probability in expectation our algorithm $O( \log n + \log m)$-approximates \emax MST.\footnote{Although \Cref{thm:EmaxToCostB} and \Cref{lem:scenByScen} do not explicitly account for an expectation taken over the randomness of an algorithm, it is easy to verify that the such an expectation does not affect these results.}

Our algorithm is trivially polynomial-time by the separability of our LP and \Cref{lem:MSTRounding}.
\end{proof}

\subsection{Min-Cut}\label{sec:minCut}
In this section we give a polynomial-time $\left(\frac{4}{1-1/e} \right)$-approximation for \emax min-cut.

\begin{defn}[\emax min-cut]
We are given a graph $G = (V,E)$, a root $r \in V$, a cost $c_e$ for edge $e$, and $m$ scenarios specified by terminals $t_1, \ldots, t_m \in V$. Each scenario $t_s$ has an associated probability $p_s$ and inflation factor $\sigma_s > 0$. We must provide a first stage solution $X_1 \subseteq E$ and a second-stage solution $X_2^{(s)} \subseteq E$ for each $s$. A feasible solution is one where $X_1 \cup X_2^{(s)}$ cuts $r$ from $t_s$ for every $s$. The cost for scenario $s$ in solution $(X_1, \pmb{X_2})$  is
\begin{align}
\cost(X_1, X_2^{(s)}) := \sum_{e \in X_1} c_e + \sigma_s \cdot \sum_{e \in X_2^{(s)}} c_e.
\end{align}
\noindent The total cost  of  solution $(X_1, \pmb{X_2})$ is $\ce(X_1, \pmb{X_2}) := \E_{A \sim \pmb{p}}\left [ \max_{s \in A} \{ \cost(X_1, X_2^{(s)}) \} \right ]$.
\end{defn}


We draw on past work on two-stage stochastic min-cut. In particular, we use the insight of Golovin et al.\ \cite{GGPRS-MP15} that, when approximating min-cut in a two-stage setting, it suffices to consider a relaxed version of the two-stage problem. In the second stage of this relaxed problem one does not pay the cost of completing their first stage solution. Rather, if the vertex corresponding to a scenario is not fully cut away in the first stage, in the second stage one must pay the full cost of cutting away that vertex in the original graph. The utility of this observation is that the relaxed problem can be captured by an LP (which  is not clear in general for  two-stage min-cut problems).

Thus, we first write an LP for the relaxed version of \trunc min-cut and then round it using ideas from Golovin et al.\ \cite{GGPRS-MP15}. We make use of \Cref{lem:scenByScen} in several places in our analysis to show that the optimal solution has certain structure, ultimately showing that such an algorithm $4$-approximates the \trunc version of min-cut. As shown in \Cref{thm:EmaxToCostB}, $4$-approximating \trunc min-cut is sufficient to $(\frac{8}{1-1/e})$-approximate \emax min-cut.

We let $\widehat{\Pt}$ be the previously mentioned relaxed problem; we use ``$\widehat{\hspace{10pt}}$'' to indicate objects and functions in this relaxed problem. Problem $\widehat{\Pt}$ is similar to $\Pt$ but the form of the second stage solution and cost of each scenario is different. In particular, we must give a first stage solution $X_1 \subseteq E$ and a second stage solution $\pmb{\hat{X_2}} \in \{0, 1\}^m$ indicating if each $t_s$ is cut away by our first stage solution. A solution $(X_1, \pmb{\hat{X_2}})$ is feasible if for every scenario $s$ we have  $\hat{X_2}^{(s)} = 1$ iff $X_1$ cuts $t_s$ from $r$. For scenario $s$ in this solution we pay
\begin{align}
\widehat{\cost}(X_1, \hat{X_2}^{(s)}) := \sum_{e \in X_1} \left[c_e \right] + \sigma_s (1 - \hat{X}_2^{(s)}) \cdot \text{cut-cost}(r, t_s),
\end{align}
\noindent where $\text{cut-cost}(r, t_s)$ is the minimum cost of a cut separating $r$ from $t_s$ in $G$.


  We capture $\widehat{\Pt}$ with an LP.  We have a variable $x_1(e)$ for each edge $e$  standing for whether we cut $e$ in the first stage and a variable $x_2^{(s)}$ for each $s$  standing for whether or not $t_s$ is cut from $r$ in the first stage. For fractional $(x_1, \pmb{x_2})$, we give its cost  in a given scenario $s$ as
\begin{align}
\widehat{\cost}(x_1, x_2^{(s)}) := \sum_e c_e \cdot x_1(e) + \sigma_s (1- x_2^{(s)}) \cdot \text{cut-cost}(r, t_s).
\end{align}
\noindent As described in Eq.\eqref{eq:fractionalCT}, this definition of $\widehat{\cost}(x_1, x_2^{(s)})$  also defines $\widehat{\ct}(x_1, \pmb{x_2})$ for fractional $(x_1, \pmb{x_2})$. Thus, we can now give our LP with $\widehat{\ct}(x_1, \pmb{x_2})$  as the objective value.
\begin{align}
\label{LP:MCLP}
\min  \qquad &\widehat{\ct}(x_1, \pmb{x_2}) \tag{MC LP}\\
\text{s.t.} \qquad & \sum_{e \in P} x_1(e) \geq x_2^{(s)} &\forall P \in \mathcal{P}_G(r, t_s), \forall s \label{line:cutConstr}\\
&0 \leq x_1(e), x_2^{(s)} \leq 1 &\forall s, e \in E,
\end{align}
\noindent where $\mathcal{P}_G(r, t_s)$ gives all paths from $r$ to $t_s$ in $G$. Although this LP has super-polynomial  constraints, it is easy to see that a polynomial-time algorithm for $s-t$ shortest path gives an efficient separation oracle. Hence, this LP  is solvable in polynomial time.

As Golovin at al.\ \cite{GGPRS-MP15} demonstrated, one can construct a feasible solution to \ref{LP:MCLP} which has cost in $\widehat{\Pt}$ roughly analogous to the optimal costs in $\Pt$ for every scenario.
\begin{lemma}[Lemma 3.1 in \cite{GGPRS-MP15}]\label{lem:MCFeasible}
Let $(O_1, O_2)$ be the optimal integral solution to $\Pt$. There exists a feasible integral solution to \ref{LP:MCLP}, $(x_1, \pmb{\hat{x}_2})$, such that $\widehat{\cost}(x_1, \hat{x}_2^{(s)}) \leq 2 \cdot \cost(O_1, O_2^{(s)})$ for every $s$. 
\end{lemma}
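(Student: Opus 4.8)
The plan is to build the relaxed solution directly from the optimal $\Pt$ solution by reusing its first stage and letting feasibility dictate the indicator variables. Concretely, I would set $x_1 := O_1$ and, for each scenario $s$, set $\hat{x}_2^{(s)} := 1$ if $O_1$ cuts $t_s$ from $r$ in $G$ and $\hat{x}_2^{(s)} := 0$ otherwise. Feasibility for \ref{LP:MCLP} is then immediate: when $\hat{x}_2^{(s)} = 1$ every path $P \in \mathcal{P}_G(r, t_s)$ meets $O_1$, so $\sum_{e \in P} x_1(e) \ge 1 = \hat{x}_2^{(s)}$; when $\hat{x}_2^{(s)} = 0$ the constraint is vacuous, and $0 \le x_1, \pmb{\hat{x}_2} \le 1$ holds since everything is $\{0,1\}$-valued. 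Writing $c(F) := \sum_{e \in F} c_e$ for brevity, it remains to bound $\widehat{\cost}(x_1, \hat{x}_2^{(s)})$ against $\cost(O_1, O_2^{(s)})$ scenario by scenario.

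I would split into two cases according to $\hat{x}_2^{(s)}$. If $t_s$ is already cut by $O_1$ (so $\hat{x}_2^{(s)} = 1$), then $\widehat{\cost}(x_1, \hat{x}_2^{(s)}) = c(O_1) \le c(O_1) + \sigma_s\, c(O_2^{(s)}) = \cost(O_1, O_2^{(s)})$, so the factor-$2$ bound holds with room to spare. The real content is the \emph{deferred} case, where $\hat{x}_2^{(s)} = 0$ and $\widehat{\cost}(x_1, \hat{x}_2^{(s)}) = c(O_1) + \sigma_s \cdot \text{cut-cost}(r, t_s)$; here I must show $\sigma_s \cdot \text{cut-cost}(r, t_s) \le c(O_1) + 2\sigma_s\, c(O_2^{(s)})$. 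A structural handle on $\text{cut-cost}(r,t_s)$ comes for free: letting $T_s$ be the connected component of $t_s$ in $G - (O_1 \cup O_2^{(s)})$, feasibility of $(O_1,O_2)$ puts $r \notin T_s$, so its boundary $\delta(T_s) \subseteq O_1 \cup O_2^{(s)}$ separates $r$ from $t_s$ and yields $\text{cut-cost}(r, t_s) \le c(\delta(T_s) \cap O_1) + c(O_2^{(s)})$.

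This reduces the whole lemma to controlling the first-stage contribution $\sigma_s \cdot c(\delta(T_s) \cap O_1)$ — the first-stage edges that $O_1$ ``wastes'' on a terminal it ultimately defers — and I expect this to be the main obstacle. It is exactly where global optimality of $(O_1, O_2)$ must be invoked, rather than mere per-scenario optimality of each $O_2^{(s)}$: a construction built on an arbitrary sub-optimal first stage can be off by a factor of $3$ (e.g. when $O_1$ all-but-cuts $t_s$ with many parallel edges while $O_2^{(s)}$ completes the cut with one cheap edge), so the $x_1 = O_1$ choice alone is not enough. Optimality forbids such waste, since that first stage could be cheaply redirected; intuitively it forces a deferred terminal's partial first-stage cut to cost no more than about $\sigma_s\, c(O_2^{(s)})$, which is precisely the inequality needed to close the deferred case. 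Formalizing this exchange/optimality argument is the heart of the proof and is the content of Lemma~3.1 of Golovin et al.\ \cite{GGPRS-MP15}, which I would invoke to finish. Finally, I would phrase the guarantee in the stated per-scenario form so that \Cref{lem:scenByScen} can subsequently lift it to a bound on the full $\widehat{\ct}$ objective, matching how the remaining min-cut analysis is assembled.
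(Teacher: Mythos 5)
The paper offers no proof of this statement at all---it is imported verbatim as a citation of Lemma~3.1 of Golovin et al.~\cite{GGPRS-MP15}---so there is nothing internal to compare against line by line; your proposal is consistent with that treatment and adds genuine content on top of it. Your construction ($x_1 := O_1$ with $\hat{x}_2^{(s)}$ the indicator of whether $O_1$ already cuts $t_s$), the feasibility check, the easy case, and the reduction of the deferred case to bounding $\sigma_s \cdot c(\delta(T_s)\cap O_1)$ are all sound, and your observation that per-scenario second-stage optimality is not enough (your parallel-edges example is correct: an arbitrary feasible first stage can lose a factor close to $3$) correctly locates where the real work lies. The one caveat worth recording is that you---like the paper---close the argument by invoking the cited lemma, which Golovin et al.\ prove for the optimum of the \emph{demand-robust} objective, whereas here it is applied to the optimum of $\Pt$; since, as your own example shows, the crux genuinely uses first-stage optimality, one should check (or at least remark) that the exchange argument in \cite{GGPRS-MP15} is agnostic to how the per-scenario costs $\cost(O_1,O_2^{(s)})$ are aggregated, so that it transfers from the $\max$ objective to the truncated-sum objective. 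This gap is inherited from the paper rather than introduced by you, but a complete write-up should address it rather than cite the lemma verbatim.
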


Applying \Cref{lem:scenByScen} and the fact that the optimal solution to \ref{LP:MCLP} is certainly no more than $\widehat{\ct}(x_1, \pmb{\hat{x}_2})$ as given in \Cref{lem:MCFeasible}, we have the following corollary. This corollary shows that the above LP has an optimal value which is roughly the same as the cost of the optimal integral \trunc min-cut solution.
\begin{corollary}\label{lem:CCProbToProb}
Let $(o_1, \pmb{o_2})$ be the optimal fractional solution to \ref{LP:MCLP} and let $(O_1, \pmb{O_2})$ be the optimal solution to $\Pt$. We have $\widehat{\ct}(o_1, \pmb{o_2}) \leq 2 \ct(O_1, \pmb{O_2})$.
\end{corollary}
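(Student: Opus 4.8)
The plan is to chain together \Cref{lem:MCFeasible} and \Cref{lem:scenByScen} with the optimality of $(o_1, \pmb{o_2})$ for \ref{LP:MCLP}. \Cref{lem:MCFeasible} already hands us a feasible integral solution $(x_1, \pmb{\hat{x}_2})$ of \ref{LP:MCLP} that dominates the optimal solution $(O_1, \pmb{O_2})$ to $\Pt$ scenario-by-scenario up to a factor of $2$, i.e.\ $\widehat{\cost}(x_1, \hat{x}_2^{(s)}) \leq 2 \cdot \cost(O_1, O_2^{(s)})$ for every $s$. Since the truncated objective is monotone under such per-scenario domination, this should lift to a comparison between $\widehat{\ct}(x_1, \pmb{\hat{x}_2})$ and $\ct(O_1, \pmb{O_2})$, and optimality of $(o_1, \pmb{o_2})$ then finishes the job.

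Concretely, I would first invoke \Cref{lem:MCFeasible} to obtain $(x_1, \pmb{\hat{x}_2})$ as above. Next I would apply \Cref{lem:scenByScen} with $c = 2$ to conclude $\widehat{\ct}(x_1, \pmb{\hat{x}_2}) \leq 2 \cdot \ct(O_1, \pmb{O_2})$. Finally, because $(x_1, \pmb{\hat{x}_2})$ is a feasible (integral, hence fractional) solution of \ref{LP:MCLP} and $(o_1, \pmb{o_2})$ is the \emph{optimal} such solution, we have $\widehat{\ct}(o_1, \pmb{o_2}) \leq \widehat{\ct}(x_1, \pmb{\hat{x}_2})$. Stringing these together yields $\widehat{\ct}(o_1, \pmb{o_2}) \leq \widehat{\ct}(x_1, \pmb{\hat{x}_2}) \leq 2 \cdot \ct(O_1, \pmb{O_2})$, which is the claim.

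The one point that needs care — and the main (small) obstacle — is that \Cref{lem:scenByScen} is stated for a \emph{single} cost function, whereas here the dominating solution $(x_1, \pmb{\hat{x}_2})$ is measured with the relaxed cost $\widehat{\cost}$ (hence the relaxed objective $\widehat{\ct}$) while the dominated solution $(O_1, \pmb{O_2})$ is measured with the original $\cost$ (and $\ct$). I would therefore observe that the proof of \Cref{lem:scenByScen} goes through verbatim in this mixed setting, since it uses only the defining formula $\ct(\cdot) = \min_B[B + \sum_s p_s(\cost(\cdot)-B)^+]$ together with the per-scenario inequality. Explicitly, plugging the threshold $B = 2\cdot B(O_1, \pmb{O_2})$ into the $\min$ defining $\widehat{\ct}(x_1, \pmb{\hat{x}_2})$, then applying $\widehat{\cost}(x_1, \hat{x}_2^{(s)}) \leq 2\cdot \cost(O_1, O_2^{(s)})$ term-by-term and factoring out the $2$, reduces the right-hand side to $2\big(B(O_1, \pmb{O_2}) + \sum_s p_s(\cost(O_1, O_2^{(s)}) - B(O_1, \pmb{O_2}))^+\big)$, which equals $2\cdot \ct(O_1, \pmb{O_2})$ by \Cref{lem:altTruncForm} applied to the (non-relaxed, integral) solution $(O_1, \pmb{O_2})$. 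Note that \Cref{lem:altTruncForm} is invoked only on $(O_1, \pmb{O_2})$ under the original cost, so its integral statement suffices and no relaxed analogue is needed.
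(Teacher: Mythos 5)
Your proof is correct and follows essentially the same route as the paper: invoke \Cref{lem:MCFeasible} to get the feasible solution $(x_1, \pmb{\hat{x}_2})$, apply \Cref{lem:scenByScen} with $c=2$, and finish with the optimality of $(o_1, \pmb{o_2})$ for \ref{LP:MCLP}. Your explicit check that \Cref{lem:scenByScen} survives the mixed setting (with $\widehat{\cost}$ on the dominating side and $\cost$ on the dominated side, and \Cref{lem:altTruncForm} needed only for the non-relaxed solution) is a point the paper applies silently, so your write-up is if anything slightly more careful.
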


Having shown how \ref{LP:MCLP} has optimal cost analogous to the optimal solution to $\Pt$, we need only make use of the fractional solution to \ref{LP:MCLP} to construct an integral solution to $\Pt$. We do so with algorithm \textsc{MinCut\emax}. Roughly, this algorithm first cuts away all scenarios that were fractionally cut away by \ref{LP:MCLP} to an extent of at least $\frac{1}{2}$; its second stage solution is the minimum remaining cut for each scenario. See \Cref{alg:DSFromSchedule}.

\begin{algorithm}
	\caption{\textsc{MinCut\emax}}
	\label{alg:DSFromSchedule}
	\begin{algorithmic}
		\Statex \textbf{Input:} An instance of min-cut \emax
		\Statex \textbf{Output:} A solution to the input instance
		\State $(o_1, \pmb{o_2}) \gets \text{optimal fractional solution to }\ref{LP:MCLP}$
		\State $U = \left\{ t_s \mid o_2^{(s)} \geq \frac{1}{2}, s \in [m]  \right\}$
		\State $X_1 \gets \text{minimum $r-U$ cut in G}$
		\State $X_{2}^{(s)} \gets \text{minimum $r-t_s$ cut in $G \setminus X_1$}$ for each $s$
		\State \Return $\left(X_1, \pmb{X_2} \right)$
	\end{algorithmic}
\end{algorithm}

Given a set of first-stage edges, $X_1$, we let $\hat{X}_2^{(s)}(X_1)$ be the natural way to derive a second-stage solution for $\widehat{\Pt}$ from a first-stage solution. In particular,
\begin{align*}
\hat{X}_2^{(s)}(X_1) := \begin{cases}1 & \text{if $X_1$ cuts $t_s$ from $r$} \\ 0 &\text{o/w} \end{cases}
\end{align*}

 We now argue that \textsc{MinCut\emax} solves $\Pt$ at cost proportional to its completion to a solution to $\widehat{\Pt}$.
\begin{lemma}\label{lem:LPMCBound}
Let $(X_1, \pmb{X_2})$ be the returned values of \textsc{MinCut\emax}. We have
\begin{align*}
\ct(X_1, \pmb{X_2}) \leq \widehat{\ct}(X_1, \pmb{\hat{X}_2(X_1)}).
\end{align*}
\end{lemma}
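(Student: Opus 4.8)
The plan is to reduce the statement to a clean per-scenario comparison and then invoke the same monotonicity already exploited in \Cref{lem:scenByScen}. The key observation is that both $\ct$ and $\widehat{\ct}$ are the \emph{same} functional $T(c_1, \ldots, c_m) := \min_B\big[B + \sum_s p_s (c_s - B)^+\big]$ evaluated at two different vectors of per-scenario costs: at $\big(\cost(X_1, X_2^{(s)})\big)_s$ on the left and at $\big(\widehat{\cost}(X_1, \hat{X}_2^{(s)}(X_1))\big)_s$ on the right. Since for each fixed $B$ the map $c_s \mapsto (c_s - B)^+$ is non-decreasing and $p_s \geq 0$, the functional $T$ is coordinatewise non-decreasing (taking a minimum over $B$ of coordinatewise-monotone functions preserves monotonicity). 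Hence it suffices to prove the per-scenario bound $\cost(X_1, X_2^{(s)}) \leq \widehat{\cost}(X_1, \hat{X}_2^{(s)}(X_1))$ for every $s$; equivalently, one can re-run the argument of \Cref{lem:scenByScen} with $c=1$, plugging in $B = B(X_1, \pmb{\hat{X}_2(X_1)})$ and using \Cref{rem:lem:altTruncForm} to identify the minimizing threshold of the $\widehat{\ct}$ objective.

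For the per-scenario bound I would first cancel the common first-stage term $\sum_{e \in X_1} c_e$ present in both $\cost(X_1, X_2^{(s)})$ and $\widehat{\cost}(X_1, \hat{X}_2^{(s)}(X_1))$, and then divide through by $\sigma_s > 0$. What remains is to show
\[
\sum_{e \in X_2^{(s)}} c_e \;\leq\; \big(1 - \hat{X}_2^{(s)}(X_1)\big)\cdot \text{cut-cost}(r, t_s).
\]
I would finish by casing on whether $X_1$ already separates $t_s$ from $r$. If it does, then $\hat{X}_2^{(s)}(X_1) = 1$ makes the right-hand side $0$; but then $r$ and $t_s$ are already disconnected in $G \setminus X_1$, so the minimum $r-t_s$ cut $X_2^{(s)}$ computed by \textsc{MinCut\emax} in $G \setminus X_1$ is empty and the left-hand side is also $0$. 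If it does not, then $\hat{X}_2^{(s)}(X_1) = 0$ and the right-hand side equals $\text{cut-cost}(r, t_s)$; since $X_2^{(s)}$ is a \emph{minimum} $r-t_s$ cut of $G \setminus X_1$, and deleting the edges of $X_1$ cannot increase the cost of any $r-t_s$ cut (the cuts are the same vertex bipartitions, each of cost at most its cost in $G$), the left-hand side is at most the minimum $r-t_s$ cut cost of the full graph $G$, which is exactly $\text{cut-cost}(r, t_s)$.

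The step I expect to require the most care is the bridge between the two distinct objectives $\ct$ and $\widehat{\ct}$: \Cref{lem:scenByScen} is phrased for two solutions of a single $\Pt$ instance sharing one cost function, whereas here the right-hand side lives in the relaxed problem $\widehat{\Pt}$ with cost $\widehat{\cost}$. I would therefore make explicit that the only properties of the truncation functional used are its coordinatewise monotonicity and the characterization of its minimizing threshold (\Cref{lem:altTruncForm}), both of which are insensitive to which cost function produced the per-scenario values, so the comparison across $\Pt$ and $\widehat{\Pt}$ is legitimate. Everything else is a short case check, and feasibility of $(X_1, \pmb{X_2})$ for $\Pt$ is immediate, since each $X_2^{(s)}$ is an $r-t_s$ cut of $G \setminus X_1$ and hence $X_1 \cup X_2^{(s)}$ separates $r$ from $t_s$ in $G$.
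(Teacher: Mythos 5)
Your proposal is correct and follows essentially the same route as the paper's proof: establish the per-scenario bound $\cost(X_1, X_2^{(s)}) \leq \widehat{\cost}(X_1, \hat{X}_2^{(s)}(X_1))$ by casing on whether $X_1$ already separates $t_s$ from $r$, then lift it to the truncated objectives via the monotonicity argument of \Cref{lem:scenByScen} with $c=1$. Your explicit remark that \Cref{lem:scenByScen} applies across the two cost functions $\cost$ and $\widehat{\cost}$ (since only coordinatewise monotonicity of the truncation functional is used) is a small clarification the paper leaves implicit, but it does not change the argument.
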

\begin{proof}

We first argue that $\sum_{e \in X_2^{(s)}} c_e \leq (1 - \hat{X}_2^{(s)}(X_1)) \cdot \text{cut-cost}_G(r, t_s)$ for any $s$.  We case on whether $X_1$ cuts $t_s$ from $r$.
\begin{itemize}
\item If $X_1$ cuts $t_s$  from $r$ then we have  $\hat{X}_2^{(s)}(X_1) = 1$ and so trivially $\sum_{e \in X_2^{(s)}} c_e = 0$ meaning $\sum_{e \in X_2^{(s)}} c_e \leq (1 - \hat{X}_2^{(s)}(X_1)) \cdot \text{cut-cost}_G(r, t_s)$ .
\item If $X_1$ does not cut $t_s$ from $r$ then we have  $\hat{X}_2^{(s)}(X_1) = 0$. But $X_2^{(s)}$ is a minimum $r-t_s$ cut in $G \setminus X_1$ and $\text{cut-cost}_G(r, t_s)$ is the minimum cut cost in $G$ and so $\sum_{e \in X_2^{(s)}} c_e \leq \text{cut-cost}_G(r, t_j) = (1 - \hat{X}_2^{(s)}(X_1)) \cdot \text{cut-cost}_G(r, t_s)$.
\end{itemize}
Thus, for any $s$ we have
\begin{align}
\sum_{e \in X_2^{(s)}} c_e \leq (1 - X_2^{(s)}(X_1)) \cdot \text{cut-cost}_G(r, t_s). \label{eq:sahileq}
\end{align}

Next, notice that it follows that for any $s$ we have $\cost(X_1, X_2^{(s)}) \leq \widehat{\cost}(X_1, \hat{X}_2^{(s)}(X_1))$ since
\begin{align*}
\cost(X_1, X_2^{(s)}) &= \sum_{e \in X_1} c_e + \sigma_s \cdot \sum_{e \in X_2^{(s)}} c_e \\
&\leq \sum_{e \in X_1} c_e + \sigma_s (1 - X_2^{(s)}(X_1)) \cdot \text{cut-cost}_G(r, t_s) \bec{Eq.~\eqref{eq:sahileq}}\\
&= \widehat{\cost}(X_1, \hat{X}_2^{(s)}(X_1))
\end{align*}
Thus for any $s$ we have $\cost(X_1, X_2^{(s)}) \leq \widehat{\cost}(X_1, \hat{X}_2^{(s)}(X_1))$; applying \Cref{lem:scenByScen} gives \Cref{lem:LPMCBound}.
\end{proof}

Finally, combining previous lemmas we can prove that \textsc{MinCut\emax} efficiently approximates \emax min-cut.
\begin{theorem}
\textsc{MinCut\emax} is a polynomial-time $\left(\frac{4}{1-1/e} \right)$-approximation for \emax min-cut.
\end{theorem}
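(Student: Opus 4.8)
The plan is to invoke Theorem~\ref{thm:EmaxToCostB}, which reduces the task to showing that \textsc{MinCut\emax} is a $4$-approximation for the \trunc problem $\Pt$; the theorem then upgrades this to the claimed $\left(\frac{4}{1-1/e}\right)$-approximation for \emax min-cut. Writing $(X_1, \pmb{X_2})$ for the output and $(O_1, \pmb{O_2})$ for the optimal \trunc solution, I would chain three inequalities: Lemma~\ref{lem:LPMCBound} already gives $\ct(X_1, \pmb{X_2}) \leq \widehat{\ct}(X_1, \pmb{\hat{X}_2(X_1)})$, and Corollary~\ref{lem:CCProbToProb} gives $\widehat{\ct}(o_1, \pmb{o_2}) \leq 2\,\ct(O_1, \pmb{O_2})$, so the only missing link is the rounding bound $\widehat{\ct}(X_1, \pmb{\hat{X}_2(X_1)}) \leq 2\,\widehat{\ct}(o_1, \pmb{o_2})$.

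To obtain this missing link I would argue scenario-by-scenario and then apply the $\widehat{\cost}$-analog of Lemma~\ref{lem:scenByScen} (whose proof depends only on the $\min_B$ form of the truncated objective and so carries over verbatim to $\widehat{\ct}$). Thus it suffices to prove $\widehat{\cost}(X_1, \hat{X}_2^{(s)}(X_1)) \leq 2\,\widehat{\cost}(o_1, o_2^{(s)})$ for every $s$, which I would split into its first-stage and second-stage terms. The second-stage term reduces to showing $(1 - \hat{X}_2^{(s)}(X_1)) \leq 2(1 - o_2^{(s)})$: if $t_s \in U$ then the minimum $r$-$U$ cut $X_1$ separates $t_s$ from $r$, so $\hat{X}_2^{(s)}(X_1) = 1$ and the left side is $0$; if $t_s \notin U$ then $o_2^{(s)} < \tfrac{1}{2}$, so $2(1 - o_2^{(s)}) > 1 \geq (1 - \hat{X}_2^{(s)}(X_1))$.

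The first-stage term, bounding $\sum_{e \in X_1} c_e$, is the step I expect to be the main obstacle. Here I would use that for each $t_s \in U$ the path constraints~\eqref{line:cutConstr} of \ref{LP:MCLP} force $\sum_{e \in P} o_1(e) \geq o_2^{(s)} \geq \tfrac{1}{2}$ for every $P \in \mathcal{P}_G(r, t_s)$. Hence the scaled edge vector $2\,o_1$ assigns total ``length'' at least $1$ to every path from $r$ to any vertex of $U$, i.e.\ it is a feasible fractional $r$-$U$ cut. Contracting $U$ into a single sink makes this a single-commodity cut, whose LP relaxation is integral by max-flow--min-cut; consequently the integral minimum $r$-$U$ cut computed by the algorithm satisfies $\sum_{e \in X_1} c_e \leq 2 \sum_e c_e\, o_1(e)$. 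Combining the two terms yields $\widehat{\cost}(X_1, \hat{X}_2^{(s)}(X_1)) \leq 2\,\widehat{\cost}(o_1, o_2^{(s)})$ for all $s$, and therefore $\widehat{\ct}(X_1, \pmb{\hat{X}_2(X_1)}) \leq 2\,\widehat{\ct}(o_1, \pmb{o_2})$.

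Assembling the chain gives $\ct(X_1, \pmb{X_2}) \leq \widehat{\ct}(X_1, \pmb{\hat{X}_2(X_1)}) \leq 2\,\widehat{\ct}(o_1, \pmb{o_2}) \leq 4\,\ct(O_1, \pmb{O_2})$, so \textsc{MinCut\emax} is a $4$-approximation for \trunc min-cut, and Theorem~\ref{thm:EmaxToCostB} delivers the $\left(\frac{4}{1-1/e}\right)$-approximation for \emax min-cut. For the running time I would observe that \ref{LP:MCLP} is solvable in polynomial time via its shortest-path separation oracle and that the two rounding steps are single min-cut computations.
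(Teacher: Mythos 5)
Your proposal is correct and follows essentially the same route as the paper's proof: the same chain $\ct(X_1,\pmb{X_2}) \leq \widehat{\ct}(X_1,\pmb{\hat{X}_2(X_1)}) \leq 2\,\widehat{\ct}(o_1,\pmb{o_2}) \leq 4\,\ct(O_1,\pmb{O_2})$, the same scenario-by-scenario split into first- and second-stage terms, the same observation that $2o_1$ is a feasible fractional $r$--$U$ cut, and the same casing on membership in $U$ for the second-stage bound. The only cosmetic differences are that you justify $\sum_{e\in X_1}c_e \leq 2\sum_e c_e\,o_1(e)$ via contraction and max-flow--min-cut where the paper simply states that the minimum cut lower-bounds any fractional cut, and that you make explicit the (correct) remark that Lemma~\ref{lem:scenByScen} transfers verbatim to $\widehat{\ct}$.
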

\begin{proof}

First, notice that by \Cref{lem:LPMCBound} we have
\begin{align}
\ct(X_1, \pmb{X_2}) \leq \widehat{\ct}(X_1, \pmb{\hat{X}_2(X_1)})\label{line:MC1}
\end{align}

Thus, for the remainder of this proof it will suffice to upper bound $\widehat{\ct}(X_1, \pmb{\hat{X}_2(X_1)})$. We begin by showing that $(X_1, \pmb{\hat{X}_2(X_1)})$ has cost in $\widehat{\Pt}$ roughly the same as the optimal solution to $\widehat{\Pt}$. In particular, we will show that $\widehat{\ct}(X_1, \pmb{\hat{X}_2(X_1)}) \leq 2 \widehat{\ct}(o_1, \pmb{o_2})$ where $(o_1, \pmb{o_2})$ is the optimal fractional solution to \ref{LP:MCLP}. Let $\bar{o}_1 = 2 o_1$.

First, we upper bound the cost of $X_1$ in $\widehat{\Pt}$ relative to $o_1$. We do so by first arguing that  $\bar{o}_1$ is a fractional $r-U$ cut. For $t_s \in U$ we have  $o_2^{(s)} \geq \frac{1}{2}$. It follows by \Cref{line:cutConstr} that for every path $P$ from $r$ to $t_s$ we have $\sum_{e \in P} o_1(e) \geq o_2^{(s)} \geq \frac{1}{2}$ and so we have for every path $P$ from $r$ to $t_s$ that $\sum_{e \in P} \bar{o}_1(e) = \sum_{e \in P} 2 o_1(e) \geq 2 o_2^{(s)} \geq 1$. Thus, $\bar{o}_1$ is a fractional $r-U$ cut. Next notice that since the minimum cut is a lower bound on any fractional cut and $\bar{o}_1$ is a fractional $r-U$ cut and $X_1$ is a minimum $r-U$ cut, we have
\begin{align}\label{eq:firstStagePHatT}
\sum_{e \in X_1}c_e \leq \sum_e \bar{o}_1(e) \cdot c_e = \sum_e 2 o_1(e) \cdot c_e
\end{align}

Next, we upper bound the cost of $\pmb{\hat{X}_2(X_1)}$ in $\widehat{\Pt}$ relative to $\pmb{o_2}$. First recall that $\hat{X}_2^{(s)}(X_1)$ is 1 if $X_1$ cuts $t_s$ from $r$ and 0 otherwise. 
If $\hat{X}_2^{(s)}(X_1) = 1$  we trivially have  $\text{cut-cost}_G(r, t_s) \cdot (1 - \hat{X}_2^{(s)}(X_1)) \leq \text{cut-cost}_G(r, t_s) \cdot (1 - o_2^{(s)})$. If $\hat{X}_2^{(s)}(X_1) = 0$ then we have  $X_1$ does not cut $t_s$ from $r$ which means that $t_s \not \in U$ and so $o_2^{(s)} < \frac{1}{2}$. It follows that in this case $\text{cut-cost}_G(r, t_s) \cdot (1 - \hat{X}_2^{(s)}(X_1)) = \text{cut-cost}_G(r, t_s)  \leq 2 \cdot \text{cut-cost}_G(r, t_s) \cdot (1 - o_2^{(s)})$. Thus, for any $s$ we have
\begin{align}\label{eq:secondStagePHatT}
\text{cut-cost}_G(r, t_s) \cdot (1 - \hat{X}_2^{(s)}(X_1)) \leq 2\cdot \text{cut-cost}_G(r, t_s) (1 - o_2^{(s)})
\end{align}

Since we have upper bound the first and second stage costs of $(X_1, \pmb{\hat{X}_2(X_1)})$ in $\widehat{\Pt}$ we can upper bound its total cost in $\widehat{\Pt}$. In particular, we have  for every $s$ it holds that
\begin{align*}
\widehat{\cost}(X_1, \hat{X}_2^{(s)}(X_1)) &= \sum_{e \in X_1} \left[c_e \right] + (1 - \hat{X}_2^{(s)}) \cdot \text{cut-cost}(r, t_s) \bec{dfn. of $\widehat{\cost}(X_1, \hat{X}_2^{(s)}(X_1))$}\\
 &\leq \sum_e 2 o_1(e) \cdot c_e  + 2\cdot \text{cut-cost}_G(r, t_s) (1 - o_2^{(s)}) \bec{Equations \ref{eq:firstStagePHatT}, \ref{eq:secondStagePHatT}}\\
  &= 2\widehat{\cost}(o_1, o_2^{(s)}) \bec{dfn.\ of $\widehat{\cost}(o_1, o_2^{(s)})$}
\end{align*}
Thus, for any scenario $s$ we have  $\widehat{\cost}(X_1, \hat{X}_2^{(s)}(X_1)) \leq 2\widehat{\cost}(o_1, o_2^{(s)})$ and so applying \Cref{lem:scenByScen} we get
\begin{align}
\widehat{\ct}(X_1, \pmb{\hat{X_2}(X_1)}) \leq 2 \widehat{\ct}(o_1, \pmb{o_2}). \label{line:MC2}
\end{align}

To complete our proof we notice that by \Cref{lem:CCProbToProb} we have
\begin{align}
2\widehat{\ct}(o_1, \pmb{o_2}) \leq 4 \ct(O_1, \pmb{O_2}). \label{line:MC3}
\end{align}

Combining \Cref{line:MC1}, \Cref{line:MC2} and \Cref{line:MC3} we conclude that
\begin{align*}
\ct(X_1, \pmb{X_2}) \leq 4 \ct(O_1, \pmb{O_2}).
\end{align*}

Thus, our algorithm, \textsc{MinCut\emax}, is a $4$-approximation for \trunc min-cut. Applying \Cref{thm:EmaxToCostB}, we conclude  \textsc{MinCut\emax} is a $\left(\frac{4}{1-1/e} \right)$-approximation for \emax min-cut.

A polynomial runtime follows from the fact that our algorithm needs to only solve LP \ref{LP:MCLP}, compute $U$ and then compute polynomially many min-cuts.
\end{proof}


\section{\emax k-Center}\label{sec:clusteringAppl}

In this section we give a constant approximation for \emax $k$-center.

\begin{defn}[\emax $k$-center]
We are given a metric space $(\met, \{c_{ij}\})$ over points \met, a set of scenarios $\{S_s\}_{s=1}^m$ where  $S_s$ corresponds to  client $s \in \met$, and a probability $p_s$ for each scenario. We must output $X \subseteq \met$ which is feasible if $|X| \leq k$. The cost we pay for solution $X$ is \[ \E_{A\sim \pmb{p}} \Big[\max_{s \in A} d(X, s) \Big],\] where $d(X,s) := \min_{i \in X} c_{is}$.
\end{defn}

Notice that unlike the preceding \emax problems, here we only provide a first stage solution. 
   \emax $k$-center can be phrased as a two-stage covering problem with a non-linear cost if we set the cost of any solution that opens a facility in the second stage to $\infty$.
 For this reason, we let $B(X)$ and $M(X)$ stand for $B(X, \emptyset)$ and $M(X, \emptyset)$ respectively as defined in Eq.\eqref{defn:MAndB}. 
 
Roughly, our algorithm works as follows. We draw on the intuition behind \Cref{thm:EmaxToCostB} that the expected max is well-approximated by truncating and summing values, and therefore truncate distances in the metric. We then solve a $k$-center-like LP on this truncated metric and use our LP solution  to cluster together nearby clients. Finally, in this clustered version of our problem we  run a $k$-median algorithm and return its solution as our solution for the  \emax $k$-center problem. Our techniques follow  Chakrabarty and Swamy \cite{CS-arXiv17} combined with  careful applications of \Cref{lem:eTrick} among other techniques to handle challenges unique to \emax k-center.

We begin by describing the LP. Define $f_B$ as the function that truncates at $B$, i.e.,
\begin{align*}
f_B(d) := \begin{cases} d \text{ ~if } d \geq B \\ 0 \text{~ otherwise.} \end{cases}
\end{align*}
\noindent Let $X^*$ be an optimal solution to \emax k-center, let $X_T^*$ be the optimal solution to the corresponding \trunc problem. Given $B$ as a guess of $B(X_T^*)$, our LP has a variable $x_i$   indicating if $i$ is a center and a variable $z_{is}$ indicating the extent to which we assign $s$ to $i$.
\begin{align*}
\label{LP:KCLLP}
\min\qquad &  \sum_s p_s \cdot \Big( \sum_i f_{B}(c_{is}) \cdot z_{is} \Big) \tag{$P_B$} \\
\text{s.t.} \qquad & \sum_i z_{is} \geq 1, \qquad \forall s \\
&0 \leq z_{is} \leq x(i),  \qquad \forall i, \forall s  \\
&\sum_{i}x_i \leq k	
\end{align*}
Let $\val(B)$ be the optimal value of \ref{LP:KCLLP} given parameter $B$. Let $\OPT$ be the cost of the optimal solution for the input \emax k-center problem.

Although we would like to use $B(X_T^*)$ as our value for $B$ in \ref{LP:KCLLP}, we do not  know  $B(X_T^*)$. For this reason, in the following lemma we argue how to efficiently compute a value that, up to constants,  works as well. 
The proof of this lemmas is deferred to \S\ref{sec:defProofsCluster}; roughly, the idea is to  take $B$ as the  best power of $(1 + \eps)$.

\begin{restatable}{lemma}{searchB}\label{lem:searchB}
There exists a polynomial-time algorithm which for a  given  $\eps>0$ and an input instance of \emax k-center, returns a $\hat{B}$ such that $\hat{B} \leq (1 + \eps) \left(\frac{\OPT}{1-1/e} \right)$ and $\val(\hat{B}) \leq  \left(\frac{3 \cdot \OPT}{1-1/e} \right)$. The algorithm's runtime is polynomial in $n$, $m$, and $\log_{1+\eps}\OPT$.
\end{restatable}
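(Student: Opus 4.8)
Write $C := \OPT/(1-1/e)$ and let $X_T^*$ be the optimal \trunc solution, with border $B^* := B(X_T^*)$ and nearest-center distances $d_s := d(X_T^*, s)$, indexed so that $d_1 \ge d_2 \ge \cdots$. By \Cref{lem:TOptToEOpt} we have $\ct(X_T^*) \le \frac{1}{1-1/e}\,\OPT = C$, and since every term of $\ct(X_T^*) = B^* + \sum_s p_s (d_s - B^*)^+$ (using \Cref{lem:altTruncForm}) is nonnegative, also $B^* \le C$. The idea is to return a power of $(1+\eps)$ that sits just above $B^*$. Since we do not know $\OPT$ (and so cannot directly test $\hat B \le (1+\eps)C$), the plan is to certify a good guess via the self-certifying test $\val(B) \le B$ together with monotonicity of $\val$.

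\textbf{Key estimate.} First I would record that $\val$ is non-increasing in $B$: since $f_B(d)$ is non-increasing in $B$ for each fixed $d$, plugging the $P_{B_1}$-optimal solution into $P_{B_2}$ for $B_1 < B_2$ gives $\val(B_2)\le\val(B_1)$. The main inequality is that for every $B > B^*$,
\begin{align*}
\val(B) \;\le\; \sum_{s:\, d_s \ge B} p_s\, d_s \;\le\; \ct(X_T^*) \;\le\; C.
\end{align*}
The first inequality holds because $X_T^*$ (opening its $\le k$ centers and assigning each client to its nearest one) is feasible for \ref{LP:KCLLP} with objective $\sum_s p_s f_B(d_s) = \sum_{s:\,d_s\ge B} p_s d_s$. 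For the middle inequality, the minimality of the border index $b$ gives $\sum_{s:\,d_s > B^*} p_s \le \sum_{s\le b-1} p_s < 1$; hence, splitting $d_s = B^* + (d_s - B^*)$ over the scenarios with $d_s \ge B > B^*$,
\begin{align*}
\sum_{s:\,d_s \ge B} p_s d_s \;\le\; B^*\!\!\sum_{s:\,d_s > B^*}\!\! p_s \;+\; \sum_s p_s (d_s - B^*)^+ \;\le\; B^* + \big(\ct(X_T^*) - B^*\big) \;=\; \ct(X_T^*).
\end{align*}
This is the crux of the argument, and the place where the fact that the mass \emph{strictly} above the border is $<1$ is essential.

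\textbf{Algorithm and conclusion.} The algorithm enumerates the thresholds $B \in \{(1+\eps)^j\}$ lying in the range spanned by the nonzero pairwise distances (there are $O(\log_{1+\eps}\OPT)$ of them), solves the polynomial-size LP \ref{LP:KCLLP} to obtain $\val(B)$ for each, and returns the \emph{smallest} power $\hat B$ with $\val(\hat B) \le \hat B$. Because $\val(B) - B$ is strictly decreasing, the feasible thresholds form an up-set, so $\hat B$ is well defined; and by the key estimate every power $B \ge \ct(X_T^*)$ is feasible (there $B > B^*$, so $\val(B)\le\ct(X_T^*)\le B$). Thus $\hat B$ is at most the smallest power that is $\ge \ct(X_T^*)$, which is below $(1+\eps)\,\ct(X_T^*) \le (1+\eps)C$, giving $\hat B \le (1+\eps)\tfrac{\OPT}{1-1/e}$. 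The second claim is immediate from the test: $\val(\hat B) \le \hat B \le (1+\eps)C \le 3C = \tfrac{3\,\OPT}{1-1/e}$ (the constant $3$ leaves ample slack, e.g.\ for any $\eps \le 2$). Solving one LP per candidate threshold yields the stated runtime.

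\textbf{Main obstacle.} The delicate points are (i) the middle inequality of the key estimate, which must exploit that the probability mass strictly above the border is less than $1$, and (ii) that the guess must be self-certifying, since $\OPT$ is unknown; correctness of the test $\val(B)\le B$ rests on the monotonicity of $\val$. A minor wrinkle is ties at exactly $B^*$, where $\val(B^*)\le\ct(X_T^*)$ can fail; this is sidestepped by only ever using thresholds strictly above $B^*$ (the next power of $(1+\eps)$), at the cost of a harmless extra $(1+\eps)$ factor already absorbed above.
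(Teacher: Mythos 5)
Your proposal is correct, and it follows the same guess-and-certify skeleton as the paper: enumerate powers of $(1+\eps)$, use monotonicity of $\val$ to make the test self-certifying, and bound $\val$ at thresholds above the border by plugging the optimal \trunc solution into \ref{LP:KCLLP}. The difference is in the key estimate. The paper's \Cref{lem:searchBStarObs} bounds $\sum_{s \in M(X_T^*)} p_s d_s \le \sum_{s\in M(X_T^*)}p_s\bigl((d_s-B^*)^+ + B^*\bigr) < 2B^* + \ct(X_T^*)$ using only $\sum_{s\in M(X_T^*)}p_s<2$, which is where the constant $3$ comes from (and forces the test to be run at $B/(1-1/e)$ against the bound $\tfrac{3}{1-1/e}B$). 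Your accounting is sharper: by restricting to scenarios with $d_s \ge B > B^*$ you get to use that the mass \emph{strictly} above the border is $<1$ (minimality of $b$), paying $B^*$ only once, and hence $\val(B)\le\ct(X_T^*)\le\tfrac{\OPT}{1-1/e}$ for every $B>B^*$ — strictly better than what \Cref{lem:searchBStarObs} gives, and it lets you use the cleaner test $\val(B)\le B$. Two small points to tidy up. First, your final chain $\val(\hat B)\le\hat B\le(1+\eps)C\le 3C$ only closes for $\eps\le 2$, whereas the lemma quantifies over all $\eps>0$; but your own machinery already fixes this — either $\hat B>B^*$, in which case the key estimate gives $\val(\hat B)\le\ct(X_T^*)\le C$ directly, or $\hat B\le B^*\le C$, in which case $\val(\hat B)\le\hat B\le C$ — so the second claim holds with constant $1$ for every $\eps$ and you should state it that way rather than routing through $(1+\eps)C$. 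Second, to guarantee a feasible power exists in your enumeration range, note $\ct(X_T^*)=\min_B[B+\sum_s p_s(d_s-B)^+]\le\max_s d_s\le\max_{i,s}c_{is}$, so the smallest power that is simultaneously $>B^*$ and $\ge\ct(X_T^*)$ lies within one $(1+\eps)$ step of the largest pairwise distance; extend the enumeration by one power to be safe. Neither issue is a gap in the ideas.
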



Given a good value of $B$, we can describe our algorithm in full. Our algorithm first computes $\hat{B}$ as in \Cref{lem:searchB}. It next uses $P_{\hat{B}}$ to cluster clients. Let 
\[P_{\hat{B}}(s) :=  \sum_i f_{\hat{B}}(c_{is})  z_{is} \]
  be the cost of the scenario with client $s$ in $P_{\hat{B}}$. We sort scenarios in increasing order of $P_{\hat{B}}(s)$. For each client $s'$ initialize $\mathcal{P}_{s'}$ to $0$. Next, iterate through the clients. For client $s'$, if there exists a client $s$ s.t. $c_{ss'} \leq 2\hat{B}$  with $\mathcal{P}_s > 0$, then increment $\mathcal{P}_s$ by $p_{s'}$. Otherwise, set $\mathcal{P}_{s'}$ to $p_{s'}$. Let $\met' := \{s\in \met : \mathcal{P}_s > 0\}$ and let $\sigma: \met \rightarrow \met' $ be a function where $\sigma(s')$ denotes the client to whom we move client $s'$'s probability mass.

Now consider the weighted $k$-median instance consisting of clients $\met'$ with distances $\{c_{ij}\}$, where $s' \in \met'$ has weight $\mathcal{P}_{s'}$ and where one can choose centers only at points in $\met'$. Call this instance $\med$. Notice that weighted $k$-median can be reduced to unweighted $k$-median by just duplicating points and scaling costs by the appropriate amount. Run any $\alpha$-approximation for $k$-median on $\med$ and return the output as our solution to the input \emax k-center problem.

We now prove that our algorithm achieves a constant approximation in polynomial-time. Henceforth, let $(x,z)$ denote an optimal solution to $P_{\hat{B}}$ and as before let $\val(\hat{B})$ denote the value of this solution. Moreover, let $\val'(\hat{B}) := \sum_{s' \in \met'}  \mathcal{P}_{s'} \cdot P_{\hat{B}}(s')$ be the cost of $(x,z)$ applied to $\met'$.

We first show that clients in $\med$ are far apart which will allow us to argue that truncating distances at $\hat{B}$ does not affect distances.
\begin{lemma}\label{lem:farApart}
If $s_1',s_2' \in \met'$ then $c_{s_1's_2'} > 2 \hat{B}$.
\end{lemma}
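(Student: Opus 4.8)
The statement is a direct consequence of the greedy clustering rule used to build $\met'$, so the plan is to read off the defining property of the points placed in $\met'$ and then argue by contradiction. Recall that $\met' = \{s : \mathcal{P}_s > 0\}$ and that a client $s'$ is processed exactly once, in increasing order of $P_{\hat{B}}(s')$. When $s'$ is processed, either it \emph{dumps} its mass onto some already-positive center $s$ with $c_{ss'}\le 2\hat B$ (incrementing $\mathcal{P}_s$), or, if no such center exists, it \emph{opens} itself as a new center by setting $\mathcal{P}_{s'}=p_{s'}$. The first step is to establish the invariant that once a point has positive mass it keeps positive mass: masses are only ever incremented (never decreased), and a point can receive mass only if it already has $\mathcal{P}_s>0$. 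Consequently $\met'$ is exactly the set of points that opened as new centers with $p_{s'}>0$, and each such point has $\mathcal{P}_{s'}>0$ from the moment it is processed onward.

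\textbf{Key step.} Take any two points $s_1',s_2'\in\met'$, and without loss of generality assume $s_1'$ is processed before $s_2'$ (the processing order is total once ties in $P_{\hat{B}}$ are broken arbitrarily). Since $s_2'\in\met'$, at the time $s_2'$ was processed it opened as a new center, which by the branching rule means there was \emph{no} client $s$ with $c_{ss_2'}\le 2\hat B$ and $\mathcal{P}_s>0$ at that moment. But $s_1'$ had already been processed and, being in $\met'$, had $\mathcal{P}_{s_1'}>0$ by the invariant above. If $c_{s_1's_2'}\le 2\hat B$ held, then $s_1'$ would itself be a positive-mass client within distance $2\hat B$ of $s_2'$, and $s_2'$ would have dumped its mass onto $s_1'$ (or some other qualifying center) rather than opening as a new center, contradicting $s_2'\in\met'$. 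Hence $c_{s_1's_2'}>2\hat B$, as claimed.

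\textbf{Main obstacle.} There is no serious difficulty here; the only point requiring care is the monotonicity invariant that a center keeps positive mass once declared, together with the observation that a client that dumped its mass is initialized at $0$ and can never subsequently qualify to receive mass (receiving mass requires already having $\mathcal{P}>0$). Making this invariant explicit is what lets us assert that $s_1'$ already has positive mass at the instant $s_2'$ is processed, which is exactly the hypothesis the branching rule rules out when $c_{s_1's_2'}\le 2\hat B$. Everything else is a one-line contradiction driven by the processing order.
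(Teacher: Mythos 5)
Your proof is correct and follows essentially the same argument as the paper: order the two points by when the greedy clustering processes them and derive a contradiction from the dumping rule. Your version is in fact slightly more careful than the paper's one-liner --- you make explicit the monotonicity invariant (a point's mass, once positive, stays positive, and a processed zero-mass point can never later become positive), which is the fact that licenses asserting $\mathcal{P}_{s_1'}>0$ at the moment $s_2'$ is examined, and you correctly locate the contradiction at $s_2'\notin\met'$.
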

\begin{proof}
WLOG suppose that $s_1'$ is considered before $s_2'$ in the clustering above. Moreover, suppose for the sake of contradiction that $c_{s_1's_2'} \leq 2 \hat{B}$. Notice that when we examine $s_2'$ we will assign $s_2'$ to $s_1'$; i.e., $\sigma(s_2') = s_1'$. However, it follows that $\mathcal{P}_{s_1'} = 0$ and as such $s_1' \not \in \met'$ by definition of $\met'$, a contradiction.
\end{proof}

We next show that the value of our LP solution to $P_{\hat{B}}$ only decreases in cost when applied to $\met'$.
\begin{lemma}\label{lem:optToOptP}
$\val'(\hat{B}) \leq \val(\hat{B})$.
\end{lemma}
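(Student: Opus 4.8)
The plan is to rewrite $\val'(\hat{B})$ as a single sum over \emph{all} clients, weighted by their original probabilities, and then compare it termwise to $\val(\hat{B})$. Recall that in the clustering every client $t \in \met$ has its mass $p_t$ routed to a single cluster center $\sigma(t) \in \met'$, so that $\mathcal{P}_c = \sum_{t\,:\,\sigma(t) = c} p_t$ for each $c \in \met'$. First I would use this grouping to regroup the defining sum of $\val'$:
\begin{align*}
\val'(\hat{B}) = \sum_{c \in \met'} \mathcal{P}_c \cdot P_{\hat{B}}(c) = \sum_{c \in \met'} \sum_{t\,:\,\sigma(t) = c} p_t \cdot P_{\hat{B}}(c) = \sum_{t \in \met} p_t \cdot P_{\hat{B}}(\sigma(t)).
\end{align*}
Since $\val(\hat{B}) = \sum_{t \in \met} p_t \cdot P_{\hat{B}}(t)$ by definition, it then suffices to establish the termwise inequality $P_{\hat{B}}(\sigma(t)) \leq P_{\hat{B}}(t)$ for every client $t$.

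The heart of the argument is this termwise inequality, and it follows directly from the order in which clients are examined. If $\sigma(t) = t$ (that is, $t$ itself was opened as a center), the inequality holds with equality. Otherwise $\sigma(t) = c$ for some $c \neq t$ that, at the moment $t$ was processed, satisfied $c_{ct} \leq 2\hat{B}$ and $\mathcal{P}_c > 0$. Since all masses start at $0$ and only turn positive when a client is opened as a center, $\mathcal{P}_c > 0$ at that moment certifies that $c$ was opened in an \emph{earlier} iteration, hence processed strictly before $t$. Because the clustering iterates through clients in increasing order of $P_{\hat{B}}(\cdot)$, being processed earlier gives exactly $P_{\hat{B}}(c) \leq P_{\hat{B}}(t)$, as desired.

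Combining the two steps yields
\begin{align*}
\val'(\hat{B}) = \sum_{t \in \met} p_t \cdot P_{\hat{B}}(\sigma(t)) \leq \sum_{t \in \met} p_t \cdot P_{\hat{B}}(t) = \val(\hat{B}),
\end{align*}
which is the claim. I do not expect a real obstacle here: the only thing to get right is the bookkeeping that probability mass only ever migrates to a client of smaller-or-equal $P_{\hat{B}}$ value, which is precisely what processing clients in increasing order of $P_{\hat{B}}$ enforces. The single point worth stating carefully is that $\mathcal{P}_c > 0$ when $t$ is examined is what certifies $c$ was opened first, so that the sorting may legitimately be invoked.
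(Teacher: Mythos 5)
Your proof is correct and follows essentially the same route as the paper's: both regroup the sum by cluster and invoke the fact that, because clients are processed in increasing order of $P_{\hat{B}}(\cdot)$, each client's mass migrates only to a center with smaller-or-equal LP cost. The only cosmetic difference is that you expand $\val'(\hat{B})$ into a sum over all of $\met$ and compare termwise, whereas the paper starts from $\val(\hat{B})$ and groups by $\sigma^{-1}(s')$; your explicit justification that $\mathcal{P}_{\sigma(t)} > 0$ certifies earlier processing is a nice touch the paper leaves implicit.
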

\begin{proof}
When we cluster points in increasing order of $P_{\hat{B}}(s)$, we have that the cluster to which any given client is reassigned is always of lesser cost in $P_{\hat{B}}$; i.e., if $s \in \sigma^{-1}(s')$ then 
\begin{align} \label{eq:sortedLP}
 \sum_{i} f_{\hat{B}}(c_{is'}) \cdot z_{is'} \leq  \sum_{i} f_{\hat{B}}(c_{is}) \cdot z_{is}.\end{align}
 Thus, we can get $ \val(\hat{B})$ equals
\begin{align*}
 \sum_{s \in \met} p_s \cdot  \sum_{i} f_{\hat{B}}(c_{is}) \cdot z_{is} &=  \sum_{s' \in \met'} \Big[ \sum_{s \in \sigma^{-1}(s')} p_s  \cdot \sum_{i}   f_{\hat{B}}(c_{is} )\cdot z_{is} \Big]\\
&\geq \sum_{s' \in \met'} \Big[ \sum_{s \in \sigma^{-1}(s')} p_{s}  \sum_{i} f_{\hat{B}}(c_{is'}) \cdot z_{is'}\Big] \bec{Eq.~\eqref{eq:sortedLP}}\\
&= \sum_{s' \in \met'} \Big[\Big( \sum_{i} f_{\hat{B}}(c_{is'}) \cdot z_{is'} \Big) \cdot \sum_{s \in \sigma^{-1}(s')} p_s \Big]\\
&= \sum_{s' \in \met'} \Big[\Big( \sum_{i} f_{\hat{B}}(c_{is'}) \cdot z_{is'} \Big) \cdot \mathcal{P}_{s'} \Big] 
\\
&=  \val'(\hat{B}).	\qedhere
\end{align*}
\end{proof}

Next we show that there exists a solution to $\med$ of cost about $\val(\hat{B})$.
\begin{lemma}\label{lem:fracISol}
The optimal solution to $\med$ has cost at most $O(\val(\hat{B}))$.
\end{lemma}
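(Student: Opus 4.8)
The plan is to exhibit a feasible fractional solution to the $k$-median LP relaxation of $\med$ whose (weighted) cost is $O(\val(\hat B))$; since the natural $k$-median LP has a constant integrality gap, this also bounds the integral optimum of $\med$ by $O(\val(\hat B))$. Let $(x,z)$ be an optimal solution to \ref{LP:KCLLP}. The idea is to \emph{move} all fractional facility mass of $(x,z)$ onto the points of $\met'$: for each point $i$ let $\pi(i)\in\met'$ be a closest point of $\met'$ to $i$ (ties broken arbitrarily), and define a candidate $\med$-solution by $\bar x_{t'}:=\sum_{i:\pi(i)=t'}x_i$ and $\bar z_{t's'}:=\sum_{i:\pi(i)=t'}z_{is'}$ for $t',s'\in\met'$.

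First I would check feasibility. Since $\pi$ partitions the facilities, $\sum_{t'\in\met'}\bar x_{t'}=\sum_i x_i\le k$, so the budget is respected; $\sum_{t'\in\met'}\bar z_{t's'}=\sum_i z_{is'}\ge 1$, so every $\met'$-client is fully served; and $\bar z_{t's'}=\sum_{i:\pi(i)=t'}z_{is'}\le\sum_{i:\pi(i)=t'}x_i=\bar x_{t'}$, using $z_{is'}\le x_i$. Thus $(\bar x,\bar z)$ is a feasible fractional solution to the $k$-median LP on $\med$.

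The heart of the argument is bounding the connection cost, and this is where \Cref{lem:farApart} does the work. I would first observe that any point $i$ lying within $\hat B$ of some $s'\in\met'$ must map to $s'$, i.e.\ $\pi(i)=s'$: indeed $c_{\pi(i)i}\le c_{s'i}<\hat B$ by the choice of $\pi(i)$, so if $\pi(i)\ne s'$ then $c_{\pi(i)s'}\le c_{\pi(i)i}+c_{is'}<2\hat B$, contradicting that distinct points of $\met'$ are more than $2\hat B$ apart (\Cref{lem:farApart}). Consequently, in the connection cost of a client $s'$, self-connections $\bar z_{s's'}$ cost $0$, only facilities with $\pi(i)\ne s'$ contribute, and for every such facility $c_{is'}\ge\hat B$; moreover $c_{\pi(i)s'}\le c_{\pi(i)i}+c_{is'}\le 2c_{is'}$ since $c_{\pi(i)i}\le c_{s'i}$. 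Hence the connection cost of $s'$ in $(\bar x,\bar z)$ is
\begin{align*}
\sum_{t'\in\met'}c_{t's'}\,\bar z_{t's'}=\sum_{i:\pi(i)\ne s'}c_{\pi(i)s'}\,z_{is'}\le 2\sum_{i:c_{is'}\ge\hat B}c_{is'}\,z_{is'}=2\sum_i f_{\hat B}(c_{is'})\,z_{is'}=2\,P_{\hat B}(s').
\end{align*}

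Weighting by $\mathcal{P}_{s'}$ and summing over $\met'$, the total connection cost of $(\bar x,\bar z)$ is at most $2\sum_{s'\in\met'}\mathcal{P}_{s'}\,P_{\hat B}(s')=2\,\val'(\hat B)\le 2\,\val(\hat B)$, the last step by \Cref{lem:optToOptP}. Thus the $k$-median LP optimum on $\med$ is at most $2\,\val(\hat B)$, and by the constant integrality gap of the $k$-median LP the integral optimum of $\med$ is $O(\val(\hat B))$, as claimed. The main obstacle—and the one place the metric geometry is essential—is the claim that near facilities map back to the client itself; this is exactly what lets the \emph{truncated} LP cost $P_{\hat B}(s')$ absorb the entire connection cost, rather than incurring an uncontrolled additive $\hat B\cdot\sum_{s'}\mathcal{P}_{s'}$ term from distances below the truncation threshold, which would be far too large since $\sum_{s'}\mathcal{P}_{s'}=\sum_s p_s$ may be large.
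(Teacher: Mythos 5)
Your proposal is correct and follows essentially the same route as the paper's proof: you aggregate the fractional facility mass of $(x,z)$ onto each point's nearest representative in $\met'$ (your $\pi(i)$ is exactly the paper's clustering into sets $F_{s'}$), verify feasibility the same way, and use \Cref{lem:farApart} via the triangle inequality to show that any cross-cluster connection has $c_{is'}\ge\hat B$, so the truncated cost $f_{\hat B}$ pays for it with a factor $2$, yielding the same $2\,\val'(\hat B)\le 2\,\val(\hat B)$ bound via \Cref{lem:optToOptP} and the same appeal to the constant integrality gap of the $k$-median LP.
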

\begin{proof}
Again, let $(x, z)$ be the optimal solution to $P_{\hat{B}}$. We prove this lemma by constructing a fractional solution $(x', z')$ to the LP of $\med$ of value at most $2\cdot \val(\hat{B})$. The LP of $\med$ is as follows and has variables analogous to \ref{LP:KCLLP} (we overload $(x', z')$ here to stand for the variables in our LP along with the feasible solution for our LP that we will construct).
\begin{align}
\label{LP:KMLP}
\min\qquad &  \sum_{s_1' \in \met'} \mathcal{P}_s \Big( \sum_{s_2' \in \met'} c_{s_2's_1'} \cdot z'_{s_2's_1'} \Big) \nonumber \tag{k-M LP}\\
\text{s.t.} \qquad & \sum_{s_2' \in \met'} z'_{s_2's_1'} \geq 1, \qquad \forall s_1'\\
&0 \leq z'_{s_2's_1'} \leq x'(s_2'),  \qquad \forall s_2', \forall s_1'\\
&\sum_{s' \in \met '}x'(s') \leq k
\end{align}

To construct $(x', z')$, we do the following. We first define a new clustering such that every point in $\met$ goes to the closest point in $\met'$. Formally, for $s'\in \met'$, we define $F_{s'} := \{ i\in \met :  s' = \argmin_{s' \in \met'}{c_{is'}} \}$. Intuitively, our solution $(x', z')$  reroutes services that were provided by facilities in $F_{s'}$ to $s's$. 

Let $x'(s'):= \sum_{i \in F_{s'}} x(i)$. That is, we open a facility at $s'$ by summing up the facilities in $P_{\hat{B}}$ that were clustered to $s'$. We let $z_{s_1's_2'}' = \sum_{i \in F_{s_1'}} z_{is_2'}$. That is, we assign $s_2'$ to $s_1'$ to the extent that $P_{\hat{B}}$ assigned $s_2'$ to points clustered with $s_1'$.

We now prove the feasibility of $(x', z')$ for \ref{LP:KMLP}. Since for every client $s$ in $P_{\hat{B}}$ we know $\sum_{i} z_{is} \geq 1$, we have that every client in $\met'$ is serviced: For every client $s_1' \in \met'$ it holds that $\sum_{s_2' \in \met'}z_{s_2's_1'}' = \sum_{s_2' \in \met'} \sum_{i \in F_{s_2'}} z_{is_1'} = \sum_{i} z_{is_1'} \geq 1$. Moreover, we open no more than $k$ centers fractionally since in $P_{\hat{B}}$ we have that
$\sum_i x(i) \leq k$ and
\begin{align*}
\sum_{s' \in \met'} x'(s') \quad = \quad \sum_{s' \in \met'} \sum_{i \in F_{s'}} x(i) \quad = \quad \sum_{i} x(i) \quad \leq \quad k.
\end{align*}

\noindent Also, no client is serviced by an unopened facility: in $P_{\hat{B}}$ we have that $0 \leq z_{is} \leq x(i)$. Hence,  for any $s_1', s_2' \in \met'$ we have $z'_{s_1's_2'}= \sum_{i \in F_{s_1'}} z_{is_2'} \leq \sum_{i \in F_{s_1'}} x(i) = x'(s_1')$.

Lastly, we bound the objective value of $(x', y')$. By \Cref{lem:farApart} we know that for $s_1' \neq s_2'$, where $s_1', s_2' \in \met'$, it holds that $c_{s_1's_2'} > 2\hat{B}$. We first show that  if $i \in F_{s_2'}$ then $c_{is_1'} > \hat{B}$. This is because if $c_{is_1'} \leq \hat{B}$ then $c_{is_2'}\leq \hat{B}$ since $s_2'$ is the closest point in $\met'$ to $i$ by $i\in F_{s_2'}$. Now by triangle inequality, we get $c_{s_1's_2'} \leq c_{s_1'i} + c_{is_2'} \leq 2\hat{B}$, which is a contradiction to \Cref{lem:farApart}. Hence,  $c_{is_1'} > \hat{B}$  implies $f_{\hat{B}}(c_{is_1'}) = c_{is_1'}$, and 
\begin{align}
c_{s_1's_2'} &\leq 2  c_{is_1'} = 2 \cdot f_{\hat{B}}(c_{is_1'}) \label{eq:cjjP},
\end{align}
where $i \in F_{s_2'}$ for $s_2' \neq s_1'$ and $s_1', s_2' \in \met'$.

Thus, the value of $(x',z')$ in \ref{LP:KMLP} is
\begin{align*}
\sum_{s_1' \in \met'} \mathcal{P}_{s_1'} \left( \sum_{s_2' \in \met'} c_{s_2's_1'} \cdot z'_{s_2's_1'} \right) & = \sum_{s_1', s_2' \in \met'} \mathcal{P}_{s_1'} \cdot c_{s_1's_2'} \cdot z'_{s_2's_1'} \\
&\leq \sum_{s_1', s_2' \in \met' : s_1' \neq s_2'} \mathcal{P}_{s_1'} \cdot c_{s_1's_2'} \cdot z'_{s_2's_1'} \bec{ $c_{ss} = 0$}\\
&= \sum_{s_1', s_2' \in \met' : s_1' \neq s_2'} \sum_{i \in  F_{s_2'}} \mathcal{P}_{s_1'} \cdot c_{s_1's_2'} \cdot z_{is_1'} \bec{definition of $z_{s_1's_2'}'$}\\
&\leq 2 \cdot \sum_{s_1', s_2' \in \met' : s_1' \neq s_2'} \sum_{i \in  F_{s_2'}} \mathcal{P}_{s_1'} \cdot f_{\hat{B}}(c_{is_1'}) \cdot z_{is_1'} \bec{Eq.~\eqref{eq:cjjP}} \\
&\leq 2 \cdot \sum_{s_1' \in \met'} \sum_{i} \mathcal{P}_{s_1'} \cdot f_{\hat{B}}(c_{is_1'}) \cdot z_{is_1'} \\
&= 2 \cdot \val'(\hat{B})\\
& \leq 2 \cdot \val(\hat{B}) \bec{\Cref{lem:optToOptP}}.
\end{align*}

Thus, since $(x', z')$ is feasible and has cost at most $2 \val(\hat{B})$ we know that the optimal solution to \ref{LP:KMLP} has cost at most $2 \val(\hat{B})$. Moreover, since past work has demonstrated that \ref{LP:KMLP} has a constant integrality gap---e.g.\cite{CGTS0-JCSS02} shows it is at most $20/3$---we conclude that the optimal solution to $\med$ has cost at most $O(\val(\hat{B}))$.
\end{proof}

Next, we show that any solution to $\med$ is a good solution to our \emax k-center problem.
\begin{lemma}\label{lem:intISol}
An integer solution to $\med$ of cost $C$ solves the input \emax $k$-center problem with cost at most $C + 4 \cdot \hat{B}$.
\end{lemma}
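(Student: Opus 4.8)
The plan is to bound the \emax $k$-center cost of the returned solution directly, exploiting the clustering map $\sigma$ and the fact that every client sits within distance $2\hat B$ of its representative in $\met'$. Fix the integer solution $X \subseteq \met'$ to $\med$ with $|X|\le k$, so that its $\med$-cost is $C = \sum_{s' \in \met'} \mathcal{P}_{s'}\, d(X, s')$. Note $X$ is feasible for the original problem since $X \subseteq \met$ and $|X|\le k$.

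First I would rewrite $C$ in terms of the original clients. By construction of the clustering, $\mathcal{P}_{s'} = \sum_{s:\sigma(s)=s'} p_s$, so regrouping the sum over the fibers of $\sigma$ gives
\[
C = \sum_{s' \in \met'} \Big( \sum_{s:\sigma(s)=s'} p_s \Big) d(X, s') = \sum_{s \in \met} p_s\, d(X, \sigma(s)).
\]
Next I would use the triangle inequality together with the clustering radius. For every $s \in \met$ we have $c_{s,\sigma(s)} \le 2\hat B$: this is trivial when $s \in \met'$ (then $\sigma(s)=s$ and the distance is $0$), and otherwise it is exactly the condition under which the clustering step reassigned $s$'s mass to $\sigma(s)$. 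Taking a center $i^\star \in X$ closest to $\sigma(s)$, we get
\[
d(X, s) = \min_{i \in X} c_{is} \le c_{i^\star s} \le c_{i^\star,\sigma(s)} + c_{\sigma(s),s} \le d(X, \sigma(s)) + 2\hat B.
\]

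Finally I would pass to the expected maximum. For any realized set $A$, since all distances are nonnegative, the above pointwise bound yields
\[
\max_{s \in A} d(X, s) \le 2\hat B + \max_{s \in A} d(X, \sigma(s)) \le 2\hat B + \sum_{s \in A} d(X, \sigma(s)).
\]
Taking the expectation over $A \sim \pmb p$ and applying linearity together with the rewriting of $C$ gives
\[
\ce(X) = \E_{A \sim \pmb p}\Big[\max_{s \in A} d(X, s)\Big] \le 2\hat B + \sum_{s \in \met} p_s\, d(X, \sigma(s)) = 2\hat B + C \le C + 4\hat B,
\]
which is the claimed bound.

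The step I would be most careful about is the passage $\max \le \sum$: it is legitimate here precisely because we only need an \emph{upper} bound on $\ce(X)$, so the tight expected-max estimate of \Cref{lem:eTrick} is not required in this direction, and it is exactly this crude step that produces the term $C$. I would also double-check the bookkeeping that $\sigma$ is defined on all of $\met$ with radius $2\hat B$ (including clients that become their own representatives), and note that the lemma's additive $4\hat B$ is a safe over-estimate of the $2\hat B$ the argument actually loses.
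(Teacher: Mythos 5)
Your proof is correct, and it reaches the stated bound (in fact the slightly stronger $C + 2\hat{B}$) by a genuinely more elementary route than the paper. The paper's proof first invokes \Cref{lem:eTrick} to bound $\E_{A}[\max_{s \in A} d(s,X)]$ by $\sum_{s \in M(X)} p_s \cdot d(s,X)$, i.e.\ it restricts the sum to the top scenarios $M(X)$ whose probabilities sum to at least $1$ but less than $2$; it then applies the triangle inequality term by term, so the additive contribution of the clustering radius is $\sum_{s \in M(X)} p_s \cdot 2\hat{B} \leq 4\hat{B}$, and the remaining term is bounded by $C$ via $\sum_{s \in \sigma^{-1}(s') \cap M(X)} p_s \leq \mathcal{P}_{s'}$. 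You instead factor the uniform $2\hat{B}$ out of the max \emph{before} relaxing the max to a sum, which is the step that makes the crude $\max \leq \sum$ bound harmless: had you applied the triangle inequality inside the sum as the paper does, the additive term would have been $2\hat{B}\sum_s p_s$, which is unbounded, and you would have needed the probability-mass control that \Cref{lem:eTrick} and $M(X)$ provide. So the two proofs pay for the clustering radius in different places: the paper controls it by truncating to $M(X)$ (costing $4\hat{B}$ and a use of \Cref{lem:eTrick}), while you control it by the order of operations (costing only $2\hat{B}$ and nothing else). Your bookkeeping that $\sigma$ is total on $\met$ with radius $2\hat{B}$ and that $C = \sum_{s \in \met} p_s\, d(X,\sigma(s))$ matches the clustering as defined, and your aside about the empty-$A$ case and the $\max \leq \sum$ step being legitimate for an upper bound is exactly the right point to flag.
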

\begin{proof} Observe that the primary difference between  \emax $k$-center problem and $\med$ is that the former is on $\met$ while the latter is on $\met'$. Roughly, this lemma is true because any point in $\met$ is at most $2\hat{B}$ from a point in $\met'$. 

Let $X$ be our $\med$ solution of cost $C$.
We have  that the cost of $X$ as a \emax solution is
\begin{align*}
\E_{A} [\max_{s \in A} \{ d(s, F)\} ]&\leq \sum_{s \in M(F)} p_s \cdot d(s, F) \bec{\Cref{lem:eTrick}}\\
&= \sum_{s' \in \met'} \sum_{s \in \sigma^{-1}(s') \cap M(F)} p_s \cdot d(s, F)\\
&\leq \sum_{s' \in \met'} \sum_{s \in \sigma^{-1}(s') \cap M(F)} p_s \cdot \left(c_{ss'} + d(s', F) \right) \bec{triangle inequality}\\
&\leq \sum_{s' \in \met'} \sum_{s \in \sigma^{-1}(s') \cap M(F)} p_s \cdot \left(2 \hat{B} + d(s', F) \right) \bec{$c_{ss'} \leq 2 \hat{B}$ if $\sigma(s) = s'$}\\
&\leq 4 \hat{B} + \sum_{s' \in \met'}  \left(d(s', F) \right)  \sum_{s \in \sigma^{-1}(s') \cap M(F)} p_s \bec{$\sum_{s \in M(F)} p_s < 2$}\\
&\leq 4 \hat{B} + \sum_{s' \in \met'}  \left(d(s', F) \right)  \mathcal{P}_{s'} \bec{definition of  $\mathcal{P}_{s'}$}\\
&\leq 4 \hat{B} + C  \bec{definition of $C$}. \quad\qedhere
\end{align*}
\end{proof}

Lastly, we conclude the approximation factor of our algorithm.

\begin{theorem}
\emax k-center can be $O(1)$-approximated in polynomial time.
\end{theorem}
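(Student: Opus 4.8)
The plan is to assemble the lemmas already established into a single end-to-end guarantee and then to verify the running time. First I would invoke \Cref{lem:searchB} to compute, in polynomial time, a threshold $\hat{B}$ satisfying $\hat{B} \leq (1+\eps)\big(\OPT/(1-1/e)\big)$ and $\val(\hat{B}) \leq 3\OPT/(1-1/e)$. This is exactly the step that replaces the unknown optimal border $B(X_T^*)$ of the \trunc instance with an explicitly computable quantity, at the cost of only a constant factor and a $(1+\eps)$ slack. With $\hat{B}$ fixed, the algorithm forms the clustered weighted $k$-median instance $\med$ (sorting scenarios by $P_{\hat{B}}(s)$ and greedily merging probability mass within distance $2\hat{B}$) and runs an off-the-shelf constant-factor $k$-median approximation on $\med$.

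Next I would bound the cost of the returned solution $X$ as a \emax $k$-center solution by chaining the structural lemmas. By \Cref{lem:fracISol} the optimal cost of $\med$ is $O(\val(\hat{B}))$, so a constant-factor $k$-median approximation returns an integral solution $X$ of cost $C = O(\val(\hat{B}))$. Feeding $C$ into \Cref{lem:intISol} yields that the \emax $k$-center cost of $X$ is at most $C + 4\hat{B} = O(\val(\hat{B})) + 4\hat{B}$. Finally, substituting $\val(\hat{B}) \leq 3\OPT/(1-1/e)$ and $\hat{B} \leq (1+\eps)\OPT/(1-1/e)$ from \Cref{lem:searchB} and fixing $\eps$ to a constant, this total is $O\big(\OPT/(1-1/e)\big) = O(\OPT)$, which is the claimed $O(1)$-approximation.

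For the running time I would argue that each stage is polynomial: \Cref{lem:searchB} runs in time polynomial in $n$, $m$, and $\log_{1+\eps}\OPT$; the clustering step is clearly polynomial; and the $k$-median approximation on $\med$ is polynomial (recall weighted $k$-median reduces to the unweighted version by duplicating and scaling). The only point needing a word of care is that $\log_{1+\eps}\OPT$ is polynomially bounded, which holds because $\OPT$ is at most the sum of the polynomially many input distances and hence has a polynomial number of bits.

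Essentially all of the technical content is already discharged by the lemmas, so the theorem is mostly an assembly step; the conceptual work enabling the chain was the aggressive distance truncation at $\hat{B}$ together with the observation (\Cref{lem:eTrick}, \Cref{lem:intISol}) that the expected max is governed by the scenarios in $M(X)$, whose total probability is below $2$, so reassigning each scenario's mass to a nearby cluster center in $\med$ costs only an additive $4\hat{B}$. I expect the only delicate part of the final write-up to be the constant bookkeeping: one must track how the $k$-median integrality-gap/rounding factor used in \Cref{lem:fracISol} composes with the approximation ratio of the black-box $k$-median solver and with the $3/(1-1/e)$ and $(1+\eps)/(1-1/e)$ factors from \Cref{lem:searchB}, so that the final bound is genuinely a fixed constant (the paper reports roughly $57$).
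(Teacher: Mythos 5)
Your proposal is correct and follows essentially the same route as the paper's own proof: it chains \Cref{lem:searchB}, \Cref{lem:fracISol}, and \Cref{lem:intISol} in the same order, composes the same constants, and verifies polynomial running time the same way (your added remark that $\log_{1+\eps}\OPT$ is polynomially bounded is a nice touch the paper leaves implicit). No substantive differences.
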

\begin{proof}
By \Cref{lem:fracISol} the optimal solution to $\med$ has cost at most $O(\val(\hat{B}))$. Now, applying our $\alpha$-approximation algorithm for $k$-median to $\med$  results in an integer solution of cost at most $\alpha \cdot O(\val(\hat{B}))$. By \Cref{lem:intISol} such an integer solution solves \emax $k$-center with cost at most $\alpha\cdot  O(\val(\hat{B})) + 4 \hat{B}$. Applying \Cref{lem:searchB} we then have that our solution costs at most 
\[ \alpha\cdot  O\Big(\Big(\frac{3}{1-1/e} \Big) \cdot  \OPT \Big) + 4 (1 + \eps) \Big(\frac{\OPT}{1-1/e} \Big).\] 
Letting $\eps$ be any constant  $>0$ and using an $\alpha=O(1)$ approximation algorithm for k-median---e.g., \cite{CGTS0-JCSS02}---we conclude that our solution has cost at most $O(OPT)$.
 
Lastly, we argue that our algorithm runs in polynomial time. Solving our LPs and performing clustering are trivially poly-time. Running our $\alpha$-approximation for $k$-median is poly-time by assumption, so we conclude the polynomial runtime of our algorithm. 
\end{proof}

\IGNORE{
\subsection{Ellis k-Center Scrap}
In this section we give a $\left(\frac{184}{1-1/e}\right)$-approximation for the \emax $k$-center problem. We draw on \Cref{thm:EmaxToCostB} and approximate \trunc k-center to approximate \emax k-center. In particular, we first write the \trunc k-center $LP$ and use our solution for this LP to cluster together scenarios. We then a weighted $k$-median problem on this clustered problem and use the solution as our solution for \trunc k-center. The main intuition we leverage in this section is that of Chakrabarty and Swamy \cite{CS-arXiv17}: if clients are clustered so as to be far apart then truncating the distances between clients does not affect distances.

\snote{Change $c$s to $i$s}
Formally, an instance of \emax $k$-center is as follows. The input consists of a metric space $(\mathcal{C}, d)$ over candidate centers $\mathcal{C}$ with $d$ giving the distance between centers, a set of scenarios $\{S_s\}_{s=1}^m$ where $s \in \mathcal{C}$, and a Bernoulli distribution for $s$ with associated probability $p_s$. We let $\mathcal{S} := \bigcup_{s} s$ give all clients in all scenarios. We will also let $d(s, C) := \min_{c \in C} d(s,c)$ for $C \subseteq \mathcal{C}$. An algorithm must output $k$ centers, $X_1 \subseteq \mathcal{C}$, for the first stage and a center for every client to go to if it appears in the second stage, namely $X_2^{(s)} \in \mathcal{C}$ for every $s \in \mathcal{S}$. A feasible solution $(X_1, \pmb{X_2})$ is one for which $|X_1| \leq k$ and $X_2^{(s)} \in X_1$ for every $s \in \mathcal{S}$. The cost that solution $(X_1, \pmb{X_2})$ pays for scenario $s$ is
\begin{align}
\cost(X_1, X_2^{(s)}) := d(s, X_2^{(s)})
\end{align}
As in our other \emax problems, the total cost of solution $(X_1, \pmb{X_2})$ is $\ce(X_1, \pmb{X_2}) := \E_{A}\left[\max_{s \in A} \cost(X_1, X_2^{(s)})\right]$ where $s$ is in $A$ with probability $p_s$.

It is worth noting that Anthony et al.~\cite{AGGN-SODA08} proved hardness of approximation for a very similar problem. In particular, they show stochastic $k$-center problem in which there is a single distribution over scenarios in which every scenario consists of \emph{multiple} clients is as hard to approximate as dense $k$-subgraph. Thus, since our \emax model generalizes the stochastic model, we restrict our attention to scenarios consisting of single clients since otherwise our problem would be prohibitively hard to approximate.

We begin by describing the LP we use to cluster together clients. Our LP has a variable $x_1(c)$ for each $c \in \mathcal{C}$ standing for whether we open $c$ as a center and a variable for each $s \in \mathcal{S}$ and $c \in \mathcal{C}$, namely $x_2^{(s)}(c)$, standing for whether we assign $s$ to $c$ in the second stage. We define the cost of fractional solution $(x_1, \pmb{x_2})$ for scenario $s$ as follows
\begin{align}
\cost(x_1, x_2^{(s)}):= \sum_{c} x_2^{(s)}(c) \cdot d(s, c), \label{eq:centerScenCost}
\end{align}
 which as described in \Cref{eq:FracCT} also defines $\ct(x_1, \pmb{x}_2)$, our LP objective. We give our LP.
\begin{align}
\label{LP:KCLLP}
\min\qquad &  \ct(x_1, \pmb{x}_2) \tag{k-C LP} \\
\text{s.t.} \qquad & \sum_{c \in \mathcal{C}} x_2^{(s)}(c) \geq 1, \qquad \forall s \in \mathcal{S}\\
&\sum_{c \in \mathcal{C}} x_1(c) \leq k\\
&0 \leq x_2^{(s)}(c) \leq x_1(c),  \qquad \forall c  \in \mathcal{C}, \forall s  \in \mathcal{S} 
\end{align}

We now describe how we use \ref{LP:KCLLP} to cluster clients. Let $(o_1, \pmb{o}_2)$ be an optimal solution to \ref{LP:KCLLP} and let $B(o_1, o_2)$ be defined as in \Cref{eq:FracB} with respect to $(o_1, \pmb{o}_2)$. We sort scenarios in increasing order of $\cost(o_1, o_2^{(s)})$. For each client $s \in \mathcal{S}$ initialize $\mathcal{P}_{s}$ to $0$. Next, iterate through $\mathcal{S}$. For $s$, if there exists a client $s'$ such that $d(s,s') \leq 4 \cdot B(o_1, \pmb{o}_2)$  with $\mathcal{P}_{s'} > 0$, then increment $\mathcal{P}_{s'}$ by $p_{s}$. Otherwise, set $\mathcal{P}_{s}$ to $p_{s}$. Let $\mathcal{S}' := \{s' \in \mathcal{S} : \mathcal{P}_{s'} > 0\}$ be our clustering clients and let $\mathcal{C}' := \mathcal{C} \setminus \mathcal{S} \cup \mathcal{S}'$ and let $\sigma: \mathcal{S} \rightarrow \mathcal{S}$ be a function where $\sigma(s)$ denotes the client to whom we moved client $s$'s probability mass. 

We now describe the weighted $k$-median instance we solve based on the above clustering. Specifically, we have locations $\mathcal{C}'$ with distances given by $d$ and clients given by $s' \in \mathcal{S}'$, each with weight  $\mathcal{P}_{s'}$. Lastly, we solve \med using its LP. In this LP for each $s' \in \mathcal{S}'$  and $c' \in \mathcal{C}'$ we have a variable standing for whether $s'$ is assigned to $c'$, namely $x_2^{(s')}(c')$ and for each $c' \in \mathcal{C}'$ we have a variable standing for whether $c'$ is opened as a center, namely $x_1(c')$. Our LP is as follows.
\begin{align}
\label{LP:KMLP}
\min\qquad &  \sum_{s \in \mathcal{S}'} \mathcal{P}_{s'} \sum_{c' \in \mathcal{C'}} x_2^{(s')}(c') \cdot d(c', s') \nonumber \tag{k-M LP}\\
\text{s.t.} \qquad & \sum_{c' \in \mathcal{C}'} x_2^{(s')}(c') \geq 1, \qquad \forall s'\\
&0 \leq \sum_{c' \in \mathcal{C}'} x_2^{(s')}(c') \leq x_1(c'),  \qquad \forall c', \forall s' \\
&\sum_{c' \in \mathcal{C}'}x_1(c') \leq k
\end{align}

We now give a series of lemmas that show that this LP has good objective value and can be rounded to a good integral solution. We first show that clients in $\med$ are far apart which will eventually allow us to argue that truncating the relevant scenario costs by $B(o_1, \pmb{o_2})$ does not affect the total cost.
\begin{lemma}\label{lem:farApart}
If $s_1',s_2' \in \mathcal{S}'$ then $d(s_1', s_2') > 4B(o_1, \pmb{o_2})$.
\end{lemma}
\begin{proof}
WLOG suppose that $s_1'$ is considered before $s_2'$ in the clustering above. Moreover, suppose for the sake of contradiction that $d(s_1', s_2')\leq 4B(o_1, \pmb{o_2})$. Notice that when we examine $s_2'$ we will assign $s_2'$ to $s_1'$; i.e., $\sigma(s_2') = s_1'$. However, it follows that $\mathcal{P}_{s_1'} = 0$ and as such $s_1' \not \in \mathcal{S}'$ by definition of $\mathcal{S}'$, a contradiction.
\end{proof}

We next show that the value of our LP solution to \ref{LP:KCLLP} only decreases in cost when applied to \ref{LP:KMLP}. Let $\ct'(o_1, \pmb{o_2})$ be the probability weighted version of $(o_1, \pmb{o_2})$, namely
\begin{align}
\ct'(o_1, \pmb{o_2}) := B(o_1, \pmb{o_2}) + \sum_{s' \in \mathcal{S}'} \mathcal{P}_{s'} (\cost(o_1, o_2^{(s')}) - B(o_1, \pmb{o_2}) )^+
\end{align}

\begin{lemma}\label{lem:optToOptP} 
$\ct'(o_1, \pmb{o_2}) \leq \ct(o_1, \pmb{o_2})$
\end{lemma}
\begin{proof}
Since when we cluster points in increasing order of $\cost(o_1, o_2^{(s)})$, we have that the cluster to which any given client is reassigned is always of less cost; i.e., if $\sigma(s) = s'$ then 
\begin{align} \label{eq:sortedLP}
\cost(o_1, o_2^{(s')}) \leq \cost(o_1, o_2^{(s)})
 \end{align}
 Now we can get $\ct(o_1, \pmb{o_2})$ equals
\begin{align*}
&= B(o_1, \pmb{o}_2) + \sum_{s \in \mathcal{S}} p_s \cdot \left(\cost(o_1, o_2^{(s)}) -  B(o_1, \pmb{o}_2) \right)^+\\
 &= B(o_1, \pmb{o}_2) + \sum_{s' \in \mathcal{S}'} \sum_{s \in \sigma^{-1}(s')} p_s \cdot \left(\cost(o_1, o_2^{(s)}) -  B(o_1, \pmb{o}_2) \right)^+ \\
&\geq B(o_1, \pmb{o}_2) + \sum_{s' \in \mathcal{S}'} \sum_{s \in \sigma^{-1}(s')} p_s \cdot \left(\cost(o_1, o_2^{(s')}) -  B(o_1, \pmb{o}_2) \right)^+ \bec{\Cref{eq:sortedLP}}\\
&= B(o_1, \pmb{o}_2) + \sum_{s' \in \mathcal{S}'}\left(\cost(o_1, o_2^{(s')}) -  B(o_1, \pmb{o}_2) \right)^+ \sum_{s \in \sigma^{-1}(s)} p_s \\
&= B(o_1, \pmb{o}_2) + \left(\cost(o_1, o_2^{(s')}) -  B(o_1, \pmb{o}_2) \right)^+ \mathcal{P}_{s'} \\
&= \ct'(o_1, \pmb{o_2}) \qedhere
\end{align*}
\end{proof}

We now use this lemma to argue that \ref{LP:KMLP} has good objective value.
\begin{lemma}\label{lem:optFracKM}
The optimal value of \ref{LP:KMLP} is at most $4 \left(\ct(o_1, \pmb{o_2}) - B(o_1, \pmb{o_2}) \right)$.
\end{lemma}
\begin{proof}
We will construct a feasible solution to \ref{LP:KMLP}, namely $(m_1, \pmb{m_2})$, of value $4(\ct(o_1, \pmb{o_2}))$ . To construct $(m_1, \pmb{m_2})$, we do the following. We first define a new clustering such that every point in $\mathcal{C}$ goes to the closest point in $\mathcal{S}'$. Formally, for $s' \in \mathcal{S}'$, we define $F_{s'} := \{ c\in \mathcal{C} :  d(c, s') \leq d(c, s'') \}$ for  $s''\neq s'$ and $s'' \in \mathcal{S}'$. Intuitively, our solution $(m_1, \pmb{m_2})$  reroutes services that were provided by facilities in $F_{s'}$ to $s'$. Let 
\begin{align}
m_1(s') := \sum_{c \in F_{s'}} o_1(c).
\end{align}
That is, we open a facility at $s'$ by summing up the extent to which facilities in $\mathcal{C}$ that were clustered to $s'$ were opened as facilities by $o_1$. For $s_1', s_2'\in \mathcal{S}'$, we let 
\begin{align}
m_2^{(s_1')}(s_2') := \sum_{c \in F_{s_2'}} o_2^{(s_1')}(c).
\end{align}
 That is, we assign $s_1'$ to $s_2'$ to the extent that $\pmb{o_2}$ assigned $s_1'$ to points clustered with $s_2'$. All unspecified variables in $(m_1, \pmb{m_2})$ are set to $0$. 

We now prove the feasibility of $(m_1, \pmb{m_2})$ for the $k$-median LP for $\med$. Since for every client $s$ in $\mathcal{S}$ we know  $\sum_{c \in \mathcal{C}} o_2^{(s)}(c) \geq 1$, we have that every client in $\mathcal{S}'$ is serviced: for every client $s_1' \in \mathcal{S}'$ it holds that $\sum_{s_2' \in \mathcal{S}'} m_2^{(s_1')}(s_2') = \sum_{s_2' \in \mathcal{S}'} \sum_{c \in F_{s_2'}} o_2^{(s_1')}(c) = \sum_{c \in \mathcal{C}} o_2^{(s_1')}(c) \geq 1$. Moreover, we open no more than $k$ centers fractionally since in $(o_1, \pmb{o_2})$ we have that
$\sum_{c \in \mathcal{C}} o_1(c) \leq k$ and
\begin{align*}
\sum_{s_1' \in \mathcal{S}'} m_1(s_1') \quad = \quad \sum_{s_1' \in \mathcal{S}'}\sum_{c \in F_{s_1'}} o_1(c) \quad = \quad \sum_{c \in \mathcal{C}} o_1(c) \quad \leq \quad k.
\end{align*}
Also, no client is serviced by an unopened facility: in $(m_1, \pmb{m_2})$ we have that $0 \leq o_2^{(s)}(c) \leq o_1(c)$ and so for any $s_1',s_2' \in \mathcal{S}'$ we have $m_2^{(s_1')}(s_2') = \sum_{c \in F_{s_2'}} o_2^{(s_1')}(c) \leq \sum_{c \in F_{s_2'}} o_1(c) = m_1(s_2')$.

We now argue that for $s_1', s_2' \in \mathcal{S}'$ if $c \in F_{s_2'}$ then $d(c, s_1') \geq 2 \cdot B(o_1, \pmb{o_2})$. Assume for the sake of contradiction that $d(c, s_1') < 2 \cdot B(o_1, \pmb{o_2})$. We then have that $d(s_1', s_2') \leq d(s_1', c) + d(c, s_2') \leq 4 B(o_1, \pmb{o_2})$; a contradiction to \Cref{lem:farApart}.

Thus, it follows that for $s_1', s_2' \in \mathcal{S}'$ if $c \in F_{s_2'}$ then
\begin{align}
d(c, s_1') \leq d(c, s_1') + (d(c,s_1') - 2B(o_1, \pmb{o_2})) = 2 \left(d(c, s_1') - B(o_1, \pmb{o_2}) \right). \label{eq:farAway}
\end{align}

This allows us to bound the objective value of $(m_1, \pmb{m_2})$ as a solution in \ref{LP:KMLP} as follows
\begin{align*}
\sum_{s_1' \in \mathcal{S}'} \mathcal{P}_{s_1'} \sum_{c' \in \mathcal{C}'} m_2^{(s_1')}(c') \cdot d(s_1', c') &= \sum_{s_1' \in \mathcal{S}'} \mathcal{P}_{s_1'} \sum_{s_2' \in \mathcal{S}'} m_2^{(s_1')}(s_2') \cdot d(s_1', s_2') \bec{$m_2^{(s_1')}(c') = 0$ for $c' \not \in \mathcal{S}'$}\\
&= \sum_{s_1' \in \mathcal{S}'} \mathcal{P}_{s_1'} \sum_{s_2' \neq s_1' \in \mathcal{S}'} m_2^{(s_1')}(s_2') \cdot d(s_1', s_2') \bec{$d$ a metric}\\
&= \sum_{s_1' \in \mathcal{S}'} \mathcal{P}_{s_1'} \sum_{s_2' \neq s_1' \in \mathcal{S}'} \sum_{c \in F_{s_2'}}o_2^{(s_1')}(c) \cdot d(s_1', s_2') \bec{definition  of $\pmb{m_2}$}\\
&= \sum_{s_1' \in \mathcal{S}'} \mathcal{P}_{s_1'} \left( \sum_{s_2' \neq s_1' \in \mathcal{S}'} \sum_{c \in F_{s_2'}} o_2^{(s_1')}(c) \cdot (d(s_1', c) + d(c, s_2')) \right)^+ \bec{triangle inequality}\\
&=  \sum_{s_1' \in \mathcal{S}'} \mathcal{P}_{s_1'} \left( \sum_{s_2' \neq s_1' \in \mathcal{S}'} \sum_{c \in F_{s_2'}} o_2^{(s_1')}(c) \cdot 2d(s_1', c)\right)^+ \bec{$d(s_1', c) \geq d(s_2', c))$}\\
&= 4 \sum_{s_1' \in \mathcal{S}'} \mathcal{P}_{s_1'} \left( \sum_{s_2' \neq s_1' \in \mathcal{S}'} \sum_{c \in F_{s_2'}} o_2^{(s_1')}(c)  (d(s_1', c) - B(o_1, \pmb{o_2}))\right)^+ \bec{\Cref{eq:farAway}}\\
&\leq 4 \sum_{s_1' \in \mathcal{S}'} \mathcal{P}_{s_1'} \left( \sum_{c \in \mathcal{C}} o_2^{(s_1')}(c) \cdot (d(s_1', c) - B(o_1, \pmb{o_2}))\right)^+\\
&= 4 \left(\ct'(o_1, \pmb{o_2}) - B(o_1, \pmb{o_2}) \right)\\
&\leq 4 \left(\ct(o_1, \pmb{o_2}) - B(o_1, \pmb{o_2}) \right) \bec{\Cref{lem:optToOptP}}.
\end{align*}

Thus, since $(m_1, \pmb{m_2})$ is feasible and has cost at most $4 \ct(o_1, \pmb{o_2})$ we know that the optimal solution to \ref{LP:KMLP} has cost at most $4 \ct(o_1, \pmb{o_2})$. 
\end{proof}

We next show that any solution to $\med$ is a decent solution to our \trunc k-center problem.
\begin{lemma}\label{lem:intISol}
An integer solution to $\med$ of cost $C$ solves the input \trunc $k$-center problem with cost at most $C + 4B(o_1, \pmb{o_2})$.
\end{lemma}
\begin{proof} 
Let $(M_1, \pmb{M_2})$ be a $\med$ solution of cost $C$. We have that that the cost of $(M_1, \pmb{M_2})$ as a \trunc k-center solution, $\ct(M_1, \pmb{M_2})$, is
\begin{align*}
 &=  \min_B \Big[ B + \sum_{s \in \mathcal{S}} p_s \cdot (\cost(M_1, M_2^{(s)})-B)^+ \Big]\bec{\Cref{lem:altTruncForm}}\\
& = \min_B \Big[B + \sum_{s' \in \mathcal{S}'} \sum_{s \in \sigma^{-1}(s')} p_s \cdot \left(\cost(M_1, M_2^{(s)}) -  B \right)^+ \Big]\\
& = \min_B \Big[B + \sum_{s' \in \mathcal{S}'} \sum_{s \in \sigma^{-1}(s')} p_s \cdot \left(d(s, M_1) - B) \right)^+\Big] \bec{definition of $\cost(M_1, M_2^{(s)})$} \\
& \leq \min_B \Big[B + \sum_{s' \in \mathcal{S}'} \sum_{s \in \sigma^{-1}(s') } p_s \cdot \left(d(s', M_1) + d(s, s') -  B \right)^+ \Big] \bec{triangle inequality}\\
& \leq \min_B \Big[B + \sum_{s' \in \mathcal{S}'} \sum_{s \in \sigma^{-1}(s') } p_s \cdot \left(d(s', M_1) + 4B(o_1, \pmb{o_2}) -  B \right)^+ \Big] \bec{$d(s, s') \leq 4 B(o_1, \pmb{o_2})$ if $\sigma(s) = s'$}\\ 
& \leq C + 4B(o_1, \pmb{o_2}) + \sum_{s' \in \mathcal{S}'} \sum_{s \in \sigma^{-1}(s') } p_s \cdot \left(d(s', M_1) - C \right)^+ \bec{letting $B = C + 4B(o_1, \pmb{o_2})$}\\ 
& = C + 4B(o_1, \pmb{o_2}) \bec{$d(s', M_1) \leq C$}
\end{align*}
\end{proof}

Lastly, we conclude the approximation factor on our algorithm.

\begin{theorem}
\emax k-center can be $\left(\frac{184}{1-1/e}\right)$-approximated in polynomial time.
\end{theorem}
\begin{proof}
Our algorithm is as follows. First solve \ref{LP:KCLLP} and use the result to cluster clients and produce \med as described above. Next, solve \ref{LP:KCLLP} to get back fractional solution $(m_1, \pmb{m_2})$. Round $(m_1, \pmb{m_2})$ to an integral solution using any of the known LP rounding techniques for $k$-median---e.g.\ Charikar et al. \cite{CGTS0-JCSS02}---and return the result.

Charikar et al. \cite{CGTS0-JCSS02} rounds $(m_1, \pmb{m_2})$ to an integral solution of cost no more than $20/3$ the fractional cost of $(m_1, \pmb{m_2})$. Combining this fact and \Cref{lem:optFracKM} we have that the resulting solution solve \med with cost at most $80/3 \left(\ct(o_1, \pmb{o_2}) - B(o_1, \pmb{o_2}) \right)$. Aplying \Cref{lem:intISol} we have that our solution solves our \trunc $k$-center problem with cost at most $80/3 \left(\ct(o_1, \pmb{o_2}) - B(o_1, \pmb{o_2}) \right) + 4B(o_1, \pmb{o_2}) $, which since $0 \leq B(o_1, \pmb{o_2}) \leq \ct(o_1, \pmb{o_2})$ is at most $\left( \frac{92}{3} \right) \ct(o_1, \pmb{o_2})$. Noticing that the optimal LP value of \ref{LP:KCLLP} is a lower bound on the optimal integral solution we have that our algorithm is a $\frac{92}{3}$-approximation for \trunc $k$-center. Applying \Cref{thm:EmaxToCostB}, we conclude that our algorithm is a $\left(\frac{184}{1-1/e} \right)$-approximation for \emax min cut.
Lastly, we argue that our algorithm runs in polynomial time. However, we need only note that we only solve two polynomial-sized LPs and invoke polynomial-time subroutines.
\end{proof}

\subsection{\emax k-Median}
\todo{Have result on star by Anthony et al. Try and improve if have time}
} 


\newcommand{\cha}{\ch^1}
\newcommand{\chb}{\ch^2}
\newcommand{\cea}{\ce^1}
\newcommand{\ceb}{\ce^2}

\section{Reducing Stochastic, Demand-Robust, and \hybrid  to \emax} \label{sec:RedToEmax}
In this section we show how to use an $\alpha$-approximation algorithm in the \emax model to design an $\alpha$-approximation in the stochastic, the demand-robust, and the \hybrid models (Theorem \ref{thm:RedToEmax}). We believe our \hybrid model gives a clean way of modeling inaccuracy of the input distribution in a stochastic two-stage setting.

\paragraph{Our New \hybrid Model.} In our \hybrid two-stage covering model---as in our stochastic model---we are  given a distribution $\calD$ over scenarios from which exactly one scenario realizes. However, we are also given a \emph{caution parameter} $\rho$ which specifies the inaccuracy of $\calD$. A low value of $\rho$ signals that the second-stage realization is expected to be close to the stochastic model while a large $\rho$ signals that any scenario may occur. Specifically, in the second-stage w.p. $(1-\rho)$ the scenario we must pay for realizes from the input distribution, and w.p. $\rho$ it is chosen adversarially. Thus, the cost for solution $(X_1, \pmb{X_2})$ is a convex combination of the stochastic and robust objectives using $\rho$:
\begin{align} \label{eq:HybridModel}
\ch(X_1, \pmb{X_2}) :=  \rho \cdot \cro(X_1, \pmb{X_2}) + (1-\rho) \cdot \cst(X_1, \pmb{X_2}).
\end{align}
Note that for $\rho=0$ we recover the  stochastic model and for $\rho=1$ we recover the  demand-robust model. Thus, for $\rho \in (0,1)$, the \hybrid model smoothly interpolates between these two models and so it suffices to prove Theorem \ref{thm:RedToEmax} for only the \hybrid model. Recall that the theorem that we will prove is as follows.

\RedToEmax*



\noindent \textbf{Theorem \ref{thm:RedToEmax} Intuition.}
The main idea of our reduction from a \hybrid problem, $\Ph$, to a \emax problem, $\Pe$, is as follows. For each scenario in our original \hybrid problem we create two scenarios in our \emax problem; one to represent the demand-robust cost of this scenario and one to represent its stochastic cost. The scenario to represent the robust cost has probability one but its cost is dampened by $\rho$. The scenario to represent the stochastic cost has a sufficiently low probability so that the independent Bernoulli trials of these scenarios are effectively disjoint. Moreover, the costs of the scenarios to represent the stochastic costs are inflated  to make up for the dampened probability, and to ensure that they are more expensive than the demand-robust scenarios.

\IGNORE{\color{red} \noindent{\bf Idea of Reduction.}  For each scenario $\overline{S}_s$ in our original \hybrid problem we create two scenarios in our \emax problem; one to represent the demand-robust cost of this scenario, $S_{s}$, and one to represent its stochastic cost, $S_{m + s}$. We call the collection of the former our \emph{demand-robust scenarios} and the collection of the latter our \emph{stochastic scenarios}. Ideally, for a fixed solution: the cost of $S_{s}$ and $S_{s+m}$ would be exactly the same cost as $\overline{S}_s$ for every $s$; with probability $\rho$ all of our demand-robust scenarios would appear; with probability $(1 - \rho)$ exactly one of our stochastic scenarios would appear. If we could enforce such behavior a given solution for $\Pe$ would pay essentially what it pays in $\Ph$. 

A reduction that forces an algorithm to always pay at least the demand-robust cost of the \hybrid problem is easy: we simply let the probability of any of our demand-robust scenarios be $1$. Moreover, by dampening the cost of each of our demand-robust scenarios in $\Pe$ by a multiplicative $\rho$ we can effectively simulate that a solution must only pay this cost $\rho$ of the time. 

However, $\Pe$ must capture the stochastic costs of $\Ph$ simultaneously with the demand-robust costs. Since scenarios appear independently in the \emax model we cannot enforce that (1) if a stochastic scenario appears no other stochastic scenarios appear and (2) if a stochastic scenario appears no demand robust scenarios appear. To overcome (1) we can simply let the probability with which any stochastic scenario appears be extremely small so that the probability of two or more stochastic scenarios both appearing is extremely small. The bad side-effect of decreasing probabilities in this way is that every stochastic scenario now has much smaller expected cost than we would like. We cure this side-effect along with (2) with the same idea: by making every stochastic scenario extremely expensive. This enables every stochastic scenario in $\Pe$ to roughly capture the stochastic cost of its corresponding scenarion in $\Ph$. Moreover, it guarantees that if any stochastic scenario appears it will cost more than any demand robust scenario; as far as \emax cost is concerned, then, when a stochastic scenario appears no demand-robust scenarios appear, thereby solving (2). In summary, we make the demand-robust scenarios always appear and the stochastic scenarios appear rarely, but with a huge inflated cost.
}


\subsection*{Our Reduction}  Let $\Ph$ be a \hybrid problem with distribution $\calD$ over scenarios $\{\overline{S}_1,\ldots, \overline{S}_m\}$, a caution parameter $\rho$, and a {linear} cost function 
\[ \ch(X_1,X_2) = \cha(X_1) + \chb(X_2),
\]
where $\cha(X_1) :=\ch(X_1, \emptyset)  $ and $\chb(X_2) :=\ch(\emptyset, X_2)$.
We produce a \emax instance $\Pe$ with $2m$ scenarios $S_1,S_2, \ldots, S_{2m} $ where the first $m$ scenarios encode the demand-robust cost and the last $m$ scenarios encode the stochastic  cost. For $s\in [m]$, the covering constraints for $S_s$ and $S_{m+s}$ are the same as $\overline{S}_s$.  
We set the cost function  
\[\ce(X_1,X_2^{(s)})  := \cea(X_1) + \ceb(X_2^{(s)})\]
such that the first stage cost is the same as the \hybrid first stage, i.e., $	\cea(X_1) = \cha(X_1)$.
In the second stage for  $s\in [m]$, we set
\begin{align*}
 \quad \ceb(X_2^{(s)}) &:= \rho \cdot \chb(X_2^{(s)}) \\
  \text{and} \quad \ceb(X_2^{(m+s)}) &:= \gamma(1-\rho) \cdot \chb(X_2^{(s)}),
\end{align*}
where $\gamma\geq 1$ is a sufficiently large scaling factor for our stochastic scenarios. In particular, $\gamma$ satisfies that for all $s,s' \in [m]$,
\begin{align} \label{eq:defnGamma}
\gamma (1-\rho) \cdot \chb( X_2^{(s)}) > \rho \cdot \chb(  X_2^{(s')}).
\end{align}
For $s\in [m]$, we set the probabilities in $\Pe$ to be $p_s :=1$ and $p_{m+s}=\frac{ \calD(s)}{\gamma}$. 

Notice that the stochastic copy of a scenario always costs more than the demand-robust one, i.e., $\ce(X_1, X_2^{(s')}) > \ce(X_1, X_2^{(s)})$ for $s \leq m$ and $s' > m$. See \Cref{fig:hybridRed} for an illustration. 

\begin{figure} 
\centering
\begin{tikzpicture}[scale=.62]

\draw [thick] (-4,-1) rectangle (4,7);  \node [below right] at (-4, 7) {$\Ph$};
\drawScen[$\overline{S}_1$](-2,2)[4,.4]
\drawScen[$\overline{S}_2$](0,2)[8,.4]
\drawScen[$\overline{S}_3$](2,2)[2,.2]

\draw[->, ultra thick] (4,2) -- (7, 2); \node [above] at (5.5, 2) {Reduction};

\draw [thick] (7,-1) rectangle (22,7); \node [below right] at (7, 7) {$\Pe$};

\draw [rounded corners=5pt, dashed] (8,0) rectangle  (14,4); \node [below] at (11, 0) {Demand-robust scenarios};
\drawScen[$S_1$](9,2)[1,1]
\drawScen[$S_2$](11,2)[2,1]
\drawScen[$S_3$](13,2)[.5,1]

\draw [rounded corners=5pt, dashed] (15,0) rectangle (21,4); \node [below] at (18, 0) {Stochastic scenarios};
\drawScen[$S_4$](16,2)[6,.2]
\drawScen[$S_5$](18,2)[12,.2]
\drawScen[$S_6$](20,2)[3,.1]

\end{tikzpicture}
\caption{Reduction from $\Ph$ for $\rho = .25$ to $\Pe$ for $m=3$. Circles: scenarios. Costs are drawn for fixed solution $(X_1, \pmb{\overline{X}_2})$ for $\Ph$ and its corresponding solution for $\Pe$, $(X_1, \pmb{X_2})$  where for $s \equiv s' \mod m$ we define $X_2^{(s)} := \overline{X}_2^{(s')}$. Red rectangles in $\Ph$: $\ch(X_1, \overline{X}_2^{(s)})$. Red rectangles in $\Pe$: $\ce(X_1, X_2^{(s)})$. Blue rectangles in $\Ph$: $\mathcal{D}(s)$ of $\overline{S}_s$. Blue rectangles in $\Pe$: $p_s$. $\gamma = 2$ and $\cha(X_1) = \cea(X_1)= 0$. Demand-robust scenarios have high probability and stochastic scenarios have high cost.}
\label{fig:hybridRed}
\end{figure}

\subsection*{Our Proof of \Cref{thm:RedToEmax}} Observe that by linearity of  costs  in \emax,  our first stage and second stage costs can be separated. This implies for any solution $(X_1, \pmb{X_2})$ to $\Pe$, we have
\begin{align} \label{lem:sepFirstStageCost}
\ce(X_1, \pmb{X_2}) = \cea(X_1)  + \E_{A} \Big[\max_{s \in A}\Big\{\ceb(X_2^{(s)})\Big\} \Big].
\end{align}

\IGNORE{
\begin{proof}
By linearity of \cost and the definition of $\overline{\cost}$, we have
\begin{align} \label{eq:linearityOfCost}
\ce(X_1, \pmb{X_2}) &= \E_{A} \Big[\max_{s \in A} \Big\{ \cost(X_1,X_2^{(s)}) \Big\} \Big]\\
   &=   \cost(X_1,\emptyset) + \E_{A} \Big[\max_{s \in A}\Big\{\cost(\emptyset,X_2^{(s)})\Big\} \Big] \notag \\
 & =   \overline{\cost}(X_1,\emptyset) + \E_{A} \Big[\max_{s \in A}\Big\{\cost(\emptyset,X_2^{(s)})\Big\} \Big].
\end{align}
\end{proof}
}

To prove \Cref{thm:RedToEmax}, in \Cref{lem:hybLB}  we show that for our reduction the optimal $\Pe$ solution costs less than the optimal  \hybrid solution. We defer the proof of this lemma to \S\ref{sec:defProofsHyb}. In \Cref{lem:hybUB} we show the converse direction (up to a small factor).

\begin{restatable}{lemma}{hybLB}\label{lem:hybLB}
Let $(H_1, \pmb{\overline{H}_2})$ be the optimal solution to $\Ph$ and let $(H_1, \pmb{H_2})$ be its natural interpretation in $\Pe$, i.e., for $s \equiv s' \mod m$ we define $H_2^{(s)} := \overline{H}_2^{(s')}$. We have $\ce(H_1, \pmb{H_2}) \leq \ch(H_1, \pmb{\overline{H}_2})$.
\end{restatable}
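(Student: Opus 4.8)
The plan is to peel off the (identical) first-stage costs and then bound the expected maximum of the second-stage \emax costs, exploiting that the demand-robust scenarios occur with probability one. By the cost-separation identity of Eq.~\eqref{lem:sepFirstStageCost} together with $\cea(H_1)=\cha(H_1)$, it suffices to show that the expected-max of the second-stage contributions is at most $\rho$ times the second-stage part of $\cro$ plus $(1-\rho)$ times the second-stage part of $\cst$. So first I would set up notation: since $\chb$ is linear on second-stage sets and $H_2^{(s)}=H_2^{(m+s)}=\overline{H}_2^{(s)}$ as sets, write $a_s := \ceb(H_2^{(s)}) = \rho\cdot\chb(\overline{H}_2^{(s)})$ for the $m$ demand-robust scenarios and $b_s := \ceb(H_2^{(m+s)}) = \gamma(1-\rho)\cdot\chb(\overline{H}_2^{(s)})$ for the $m$ stochastic scenarios.

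The crucial structural observation is that each demand-robust scenario has probability $p_s=1$ and therefore \emph{always} appears in $A$; hence these scenarios contribute a deterministic baseline
\[
a^\ast := \max_{s\in[m]} a_s = \rho\cdot\max_{s\in[m]}\chb(\overline{H}_2^{(s)}),
\]
which is exactly $\rho$ times the second-stage part of $\cro(H_1,\pmb{\overline{H}_2})$. The stochastic scenarios, by contrast, appear independently, scenario $m+s$ with probability $p_{m+s}=\calD(s)/\gamma$. The key step is to dominate the expected max by this baseline plus a sum of independent overshoots. For any realization $A$,
\[
\max_{s\in A}\ceb(H_2^{(s)}) = \max\Big\{a^\ast,\ \max_{m+s\in A} b_s\Big\} \le a^\ast + \sum_{m+s\in A}(b_s-a^\ast)^+,
\]
using $\max\{a^\ast,x\}=a^\ast+(x-a^\ast)^+$ and $(\max_i y_i - a^\ast)^+ = \max_i(y_i-a^\ast)^+ \le \sum_i (y_i-a^\ast)^+$. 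Taking expectations over the independent $A$ and then using $(b_s-a^\ast)^+\le b_s$ (valid since $b_s\ge0$) gives
\[
\E_A\Big[\max_{s\in A}\ceb(H_2^{(s)})\Big] \le a^\ast + \sum_{s\in[m]} p_{m+s}\,(b_s-a^\ast)^+ \le a^\ast + \sum_{s\in[m]} p_{m+s}\, b_s.
\]
A direct computation identifies the last sum: $\sum_s p_{m+s} b_s = \sum_s \tfrac{\calD(s)}{\gamma}\cdot\gamma(1-\rho)\,\chb(\overline{H}_2^{(s)}) = (1-\rho)\sum_s\calD(s)\,\chb(\overline{H}_2^{(s)})$, which is precisely $(1-\rho)$ times the second-stage part of $\cst(H_1,\pmb{\overline{H}_2})$. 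Adding back the common first-stage cost $\cea(H_1)=\cha(H_1)$ and recombining yields $\ce(H_1,\pmb{H_2}) \le \rho\,\cro(H_1,\pmb{\overline{H}_2}) + (1-\rho)\,\cst(H_1,\pmb{\overline{H}_2}) = \ch(H_1,\pmb{\overline{H}_2})$.

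The main obstacle — and the only place any care is needed — is the overshoot display: correctly handling that the probability-one demand-robust scenarios form a fixed baseline $a^\ast$ while the stochastic scenarios contribute independent, possibly-overlapping overshoots. This bound is exactly the easy (upper-bound) side of the correlation-gap estimate already established in \Cref{lem:eTrick}. Notably, for this direction we never invoke the scaling condition Eq.~\eqref{eq:defnGamma} that stochastic scenarios dominate demand-robust ones (that condition drives the converse, \Cref{lem:hybUB}), since $(b_s-a^\ast)^+ \le b_s$ holds unconditionally. Consequently this direction loses nothing in the approximation ratio: the only slack discarded is the term $a^\ast\sum_s p_{m+s} = a^\ast/\gamma$, which only strengthens the inequality, so $\ce\le\ch$ holds with room to spare.
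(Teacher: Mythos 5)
Your proof is correct and follows essentially the same route as the paper's: peel off the common first-stage cost via Eq.~\eqref{lem:sepFirstStageCost}, treat the probability-one demand-robust scenarios as a deterministic baseline equal to $\rho$ times the robust second-stage cost, and union-bound the stochastic scenarios' contribution by $\sum_{s>m} p_s\,\ceb(H_2^{(s)}) = (1-\rho)\,\E_{s\sim\calD}[\chb(\overline{H}_2^{(s)})]$. Your side remark is also accurate: the paper cites Eq.~\eqref{eq:defnGamma} to justify its (slightly tighter) intermediate inequality with $\Pr[A=[m]]$, but since that factor is immediately relaxed to $1$, the bound indeed holds unconditionally, exactly as your $(b_s-a^\ast)^+\le b_s$ step shows.
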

\noindent  The main idea of the proof is to use a union bound over the stochastic scenarios.

Next, we show that any solution costs more in $\Pe$ than in $\Ph$ (up to small multiplicative factors).
\begin{lemma} \label{lem:hybUB}
Let $(X_1, \pmb{X_2})$ be any solution to $\Pe$ and let $(X_1, \pmb{\overline{X}_2})$ be its natural interpretation in $\Ph$, i.e., $\overline{X}_2^{(s)} := \argmin  \Big\{\chb(X_2^{(s)}), \chb(X_2^{(s+m)}) \Big\}$. We have $\ch(X_1, \pmb{\overline{X}_2}) \leq \left( 1 - \frac{m}{\gamma} \right)^{-1} \cdot \ce(X_1, \pmb{X_2})$.
\end{lemma}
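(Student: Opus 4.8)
The plan is to lower bound the \emax cost $\ce(X_1,\pmb{X_2})$ by the \hybrid cost $\ch(X_1,\pmb{\overline{X}_2})$ up to the factor $(1-m/\gamma)$, which is exactly the reciprocal form claimed. First I would apply the separation \eqref{lem:sepFirstStageCost} to peel off the common first-stage term $\cea(X_1)=\cha(X_1)\ge 0$, so that everything reduces to lower bounding the expected maximum $\E_A[\max_{t\in A}\ceb(X_2^{(t)})]$ of the second-stage costs. Writing $R:=\max_s \chb(\overline{X}_2^{(s)})$ and $T:=\sum_s \calD(s)\,\chb(\overline{X}_2^{(s)})$, and noting $\ch(X_1,\pmb{\overline{X}_2})-\cha(X_1)=\rho R+(1-\rho)T$, the target becomes $\E_A[\max_{t\in A}\ceb(X_2^{(t)})]\ge (1-m/\gamma)\big(\rho R+(1-\rho)T\big)$.

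The structural facts I would exploit are: (i) every demand-robust copy is present with certainty since $p_s=1$ for $s\in[m]$, so $[m]\subseteq A$ always; and (ii) by \eqref{eq:defnGamma} every stochastic copy strictly dominates every demand-robust copy. Let $Y_s$ be the indicator that stochastic scenario $m+s$ realizes, an independent Bernoulli with mean $q_s:=\calD(s)/\gamma$, so that $\sum_s q_s=1/\gamma$. I would then isolate the disjoint events $E_0=\{Y_s=0\ \forall s\}$ and $G_s=\{Y_s=1,\ Y_{s'}=0\ \forall s'\neq s\}$. On $E_0$ the realized maximum is $\max_{s\in[m]}\rho\,\chb(X_2^{(s)})$, and on $G_s$ fact (ii) forces it to equal the single active stochastic cost $v_s:=\gamma(1-\rho)\chb(X_2^{(s)})$. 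Discarding the remaining (nonnegative) multi-active outcomes gives $\E_A[\max_{t\in A}\ceb(X_2^{(t)})]\ge \Pr[E_0]\,\rho\max_{s\in[m]}\chb(X_2^{(s)})+\sum_s \Pr[G_s]\,v_s$.

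The remaining work is to bound the probabilities and convert costs from $X_2$ to $\overline{X}_2$. By independence, $\Pr[E_0]=\prod_s(1-q_s)\ge 1-\sum_s q_s=1-1/\gamma$ and $\Pr[G_s]=q_s\prod_{s'\neq s}(1-q_{s'})\ge q_s(1-1/\gamma)$, both at least $1-m/\gamma$. Crucially, $q_s v_s=\calD(s)(1-\rho)\chb(X_2^{(s)})$, and since $\overline{X}_2^{(s)}$ is the cheaper of the two copies we have $\chb(X_2^{(s)})\ge \chb(\overline{X}_2^{(s)})$; termwise this yields $\max_{s}\chb(X_2^{(s)})\ge R$ and $\sum_s q_s v_s\ge (1-\rho)T$. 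Combining gives $\E_A[\max_{t\in A}\ceb(X_2^{(t)})]\ge (1-m/\gamma)\big(\rho R+(1-\rho)T\big)$, and adding back $\cha(X_1)\ge 0$ (using $1-m/\gamma\le 1$) gives $\ce(X_1,\pmb{X_2})\ge (1-m/\gamma)\,\ch(X_1,\pmb{\overline{X}_2})$, which is the lemma.

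The main obstacle is the third step: because the stochastic trials are independent, several stochastic copies can co-occur, so I cannot simply charge each scenario $m+s$ its full contribution without overcounting against the $\max$. Isolating the single-active events $G_s$ and discarding the rest is what sidesteps this, and the smallness of the total stochastic mass $\sum_s q_s=1/\gamma$ is precisely what keeps $\Pr[E_0]$ and each $\Pr[G_s]$ near their ideal values, producing the $(1-m/\gamma)$ loss. I would also note that the statement is only meaningful (and $(1-m/\gamma)^{-1}$ positive) when $\gamma>m$, which is ensured by taking $\gamma$ large as in the reduction.
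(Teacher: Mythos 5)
Your proof is correct, and it follows the same overall strategy as the paper's: peel off the first-stage cost via Eq.~\eqref{lem:sepFirstStageCost}, use the fact that $[m]\subseteq A$ always and that Eq.~\eqref{eq:defnGamma} makes any realized stochastic copy dominate every demand-robust copy, and then exploit the smallness of the total stochastic mass $\sum_{s>m}p_s=1/\gamma$ to show the two contributions barely interfere. Where you genuinely diverge is in how you control the overcounting when several stochastic copies co-occur. The paper keeps the full sum $\sum_{s>m}\Pr[s\in A]\,\ceb(X_2^{(s)})$ and subtracts an inclusion--exclusion correction $\sum_{s,s'>m}\Pr[s\in A]\Pr[s'\in A]\max\{\cdot,\cdot\}$, which it then bounds by $\tfrac{2}{\gamma}\sum_{s>m}p_s\,\chb(X_2^{(s)})$ using $\max\{a,b\}\le a+b$; you instead restrict to the disjoint events $E_0$ and $G_s$ (exactly zero or exactly one stochastic copy active) and discard the multi-active outcomes, paying only the factors $\Pr[E_0]\ge 1-1/\gamma$ and $\Pr[G_s]\ge q_s(1-1/\gamma)$. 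Your route avoids having to justify the second-order Bonferroni inequality for the expected maximum and also gives a slightly tighter deficit ($1-1/\gamma$ on both terms, versus the paper's $(1-1/\gamma)^m$ and $1-2/\gamma$), though both comfortably relax to the stated $1-m/\gamma$. The remaining steps --- converting $\chb(X_2^{(\cdot)})$ to $\chb(\overline{X}_2^{(s)})$ via the $\argmin$ definition and absorbing $\cha(X_1)\ge 0$ --- match the paper. One cosmetic remark: the phrase ``both at least $1-m/\gamma$'' should refer to the multiplicative deficit factors $(1-1/\gamma)$ rather than to $\Pr[G_s]$ itself, but your final combination makes the intended meaning unambiguous.
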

\begin{proof}
Notice that by the definition of $\overline{X}_2$, we know 
\begin{align}
\ch(X_1, \overline{X}_2^{(s)}) = \min \Big\{ \ch(X_1, X_2^{(s)}), \ch(X_1, X_2^{(s+m)}) \Big\}. \label{eq:minNatInt}
\end{align}

We first lower bound  the second stage cost of \emax using the inclusion-exclusion principle on the number of stochastic scenarios that realize into $A$. Recollect, $A$ always contains the first $m$ demand-robust scenarios because their probability is $1$. Since a stochastic scenario (if it realizes) always costs more than all the demand-robust scenarios  by Eq.~\eqref{eq:defnGamma}, we get
\begin{align*}
 \E_{A} \Big[\max_{s \in A} \Big\{ \ceb(X_2^{(s)}) \Big\} \Big]  &\geq      \Pr_{A}[A = [m]] \cdot \max_{s \in [m]} \Big\{ \ceb(X_2^{(s)}) \Big\} + \sum_{s> m } \Pr_{A}[s \in A] \cdot  \ceb(X_2^{(s)}) \notag \\
& \qquad  -  \sum_{s,s'> m } \Pr_{A}[s \in A] \Pr_{A }[s' \in A]  \cdot  \max \Big\{\ceb(X_2^{(s)}),\ceb(X_2^{(s')})\Big\}.
\end{align*}
Now using $ \Pr_{A}[A = [m]] = \prod_s \Big(1-\frac{\calD(s)}{\gamma} \Big) \geq  (1-\frac{1}{\gamma})^m$ and  the definition of $\ceb$, we get
\begin{align}
 \E_{A} \Big[\max_{s \in A} \Big\{ \ceb(X_2^{(s)}) \Big\} \Big]   &\geq      \Big(1-\frac{1}{\gamma} \Big)^m \cdot \rho \max_{s \in [m]} \Big\{ \chb( X_2^{(s)})  \Big\} + \gamma (1-\rho) \sum_{s>m} p_s \cdot \chb( X_2^{(s)}) \notag \\
& \qquad  -  \gamma (1-\rho) \cdot \sum_{s,s'> m }p_s p_{s'} \cdot  \max \Big\{ \chb(X_2^{(s)}),\chb(X_2^{(s')}) \Big\}. \label{eq:LowerEmax}
\end{align}
Since for any $a,b\geq 0$, we have $\max\{a,b\} \leq a+b$, we can bound
\begin{align*}
	\sum_{s,s'> m }p_s p_{s'}   \max \Big\{ \chb(X_2^{(s)}),\chb(X_2^{(s')}) \Big\} &\leq 2  \sum_{s,s'> m }p_s p_{s'}    \chb(X_2^{(s)}) = \frac{2}{\gamma} \cdot \sum_{s> m }p_s    \chb(X_2^{(s)}),
\end{align*}
where the last equality uses $\sum_{s' > m} p_{s'} =  \sum_{s' > m} \frac{\calD({s'})}{\gamma} = \frac{1}{\gamma} $.
Combining this with Eq.~\eqref{eq:minNatInt} and Eq.~\eqref{eq:LowerEmax}, 
\begin{align*}
& \E_{A} \Big[\max_{s \in A} \Big\{ \ceb(X_2^{(s)}) \Big\} \Big] \\
 &\geq  \Big(1-\frac{1}{\gamma} \Big)^m \cdot \rho \max_{s \in [m]} \Big\{ \chb( X_2^{(s)})  \Big\} + \gamma (1-\rho) \Big( \sum_{s>m} p_s \cdot \chb(X_2^{(s)})  
  -   \frac{2}{\gamma} \sum_{s>m}p_s  \cdot   \chb(X_2^{(s)}) \Big)\\
    & \geq \Big(1-\frac{m}{\gamma} \Big) \cdot \rho \max_{s \in [m]} \Big\{ \chb( \overline{X}_2^{(s)})  \Big\} +  (1-\rho) \Big(1-\frac{m}{\gamma} \Big) \cdot \E_{s\sim \calD} \Big[ \chb(\overline{X}_2^{(s)})  \Big] ,
\end{align*}
where the last inequality uses $(1-\frac{1}{\gamma})^m \geq (1-\frac{m}{\gamma})$ along with $(1-\frac{2}{\gamma}) \geq (1-\frac{m}{\gamma} )$ and $\gamma  p_s = {\calD(s)}$.
Our upper bound follows when we combine the last inequality with Eq.~\eqref{lem:sepFirstStageCost}  to get
\begin{align*}
	 \ce(X_1, \pmb{X_2}) &\geq  \cea (X_1) +  \Big(1-\frac{m}{\gamma} \Big)  \Big( \rho \max_{s \in [m]} \Big\{ \chb ( \overline{X}_2^{(s)})  \Big\} + (1-\rho)   \E_{s\sim \calD} \Big[ \chb ( \overline{X}_2^{(s)})  \Big]  \Big)\\
& \geq \Big(1-\frac{m}{\gamma} \Big) \Big(  \rho \max_{i \in [m]} \Big\{ \ch (X_1,  \overline{X}_2^{(s)})  \Big\} + (1-\rho)  \cdot \E_{s\sim \calD} \Big[ \ch (X_1, \overline{X}_2^{(s)})  \Big]  \Big)\\
&= \Big(1-\frac{m}{\gamma} \Big) \cdot \ch(X_1, \pmb{\overline{X_2}}). \qquad \qedhere
\end{align*}
\end{proof}

We now conclude the proof of the main theorem by combining the $\alpha$-approximation algorithm with \Cref{lem:hybLB} and \Cref{lem:hybUB} which allow us to move between \emax and \hybrid via our reduction.
\begin{proof}[Proof of \Cref{thm:RedToEmax}]
Since  \hybrid  captures the stochastic and  demand-robust models  it suffices to prove the theorem only for \hybrid. 

Consider an input \hybrid problem $\Ph$ with optimal solution $(H_1, \pmb{\overline{H}_2})$. Suppose we have an $\alpha$-approximation  algorithm for  \emax. To design an $\alpha$-approximation for $\Ph$, we simply run our reduction to get an instance of \emax problem $\Pe$, run our \emax approximation algorithm on $\Pe$ to get back $({A_1}, \pmb{{A_2}})$, and then return $({A_1}, \pmb{\overline{{A_2}}})$ where $\overline{{A_2}}$ is the natural interpretation of ${A_2}$ as a solution for $\Ph$. In particular, 
\begin{align*}
\overline{{A}}_2^{(s)} := \argmin \Big\{ \chb({A}_2^{(s)}), \chb({A}_2^{(s+m)}) \Big\}.
\end{align*}
Now consider the cost of our returned solution for $\Ph$. Let $(H_1, \pmb{H_2})$ be its natural interpretation in $\Pe$, i.e., for $s \equiv s' \mod m$ we define $H_2^{(s)} := \overline{H}_2^{(s')}$. 
\ifFULL We have
\begin{align*}
\ch({A}_1, \pmb{\overline{{A}}_2}) & \leq \Big( 1 - \frac{m}{\gamma} \Big)^{-1} \ce({A}_1, \pmb{{A}_2}) \bec{\Cref{lem:hybUB}}\\
&  \leq \alpha \Big( 1 - \frac{m}{\gamma} \Big)^{-1} \ce(E_1, \pmb{E_2}) \bec{$({E}_1, \pmb{{E}_2})$ an $\alpha$-approx}\\
&  \leq \alpha \Big( 1 - \frac{m}{\gamma} \Big)^{-1} \ce(H_1, \pmb{H_2}) \bec{$(E_1, \pmb{E_2})$ minimizes $\ce$}\\
&  \leq \alpha \Big( 1 - \frac{m}{\gamma} \Big)^{-1} \ch(H_1, \pmb{\overline{H}_2}) \bec{\Cref{lem:hybLB}}.
\end{align*}
\else
By \Cref{lem:hybUB} we know 
\[ \ch({A}_1, \pmb{\overline{{A}}_2}) \leq \Big( 1 - \frac{m}{\gamma} \Big)^{-1} \ce({A}_1, \pmb{{A}_2}). \] 

Let $(E_1, \pmb{E_2})$  be the optimal solution of \emax. 
Since $({A}_1, \pmb{{A}_2})$ is an $\alpha$-approximation, 
\[ \ce({A}_1, \pmb{{A}_2}) \leq  \alpha \cdot  \ce(E_1, \pmb{E_2}).\]
 Combining the above two equations, we get 
 \[	\ch({A}_1, \pmb{\overline{{A}}_2}) \leq \alpha \cdot \Big( 1 - \frac{m}{\gamma} \Big)^{-1}   \ce(E_1, \pmb{E_2}) ,
  \]
  Since $(E_1, \pmb{E_2})$ is the optimal solution of \emax,  $ \ce(E_1, \pmb{E_2}) \leq \ce(H_1, \pmb{H_2})$. By Lemma~\ref{lem:hybLB},
  \[ \ce(H_1, \pmb{H_2}) \leq \ch(H_1, \pmb{H_2}) .
  \]
\fi
Thus, $({A}_1, \pmb{\overline{{A}}_2})$ is an $\alpha \cdot \left( 1 - \frac{m}{\gamma} \right)^{-1}$-approximation for $\Ph$. Choosing $\gamma \rightarrow  \infty$ and noting that the reduction is polynomial-time, the theorem follows.
\end{proof}

\section*{Appendix}
\appendix

\section{Deferred Proofs of \S\ref{sec:trunc}}\label{sec:defProofsTrunc}

\eTrick*
\begin{proof}
We begin by showing the lower bound on $\E_{A \sim \pmb{Y}} \left[ \max_{s \in A}v_s \right]$. Let $M := [b]$.
Consider the new set of probabilities
\begin{align}
p'_s &= \begin{cases} 1 - \sum_{s < b} p_s &\text{if } s = b \\p_s &\text{otherwise} \end{cases}
\end{align}
and let $\pmb{Y}'$ be the corresponding Bernoulli r.v.s. Notice that $\sum_{s \in M} p_s' = 1$.

Since $p_s' \leq p_s$, clearly we have that $\E_{A \sim \pmb{Y}} \left[ \max_{s \in A}v_s \right] \geq \E_{A \sim \pmb{Y}'} \left[ \max_{s \in A}v_s \right]$. Thus, we will focus on lower bounding $\E_{A \sim \pmb{Y}'} \left[ \max_{s \in A}v_s \right]$. The probability that no element of $M$ is in $A$ when drawn from $\pmb{Y}'$ is 
\begin{align*}
\prod_{s \in M} (1 - p_s') \quad \leq \quad e^{-\sum_{s \in M} p_s'} \quad  = \quad \frac{1}{e} 
\end{align*}
because $1-x \leq e^{-x}$ and $\sum_{s \in M} p_s' = 1$.
It follows that 
\begin{align*}
\E_{A \sim \pmb{Y}} \Big[ \max_{s \in A}v_s \Big] &\geq \E_{A \sim \pmb{Y}'} \left[ \max_{s \in A}v_s \right]\\
 &\geq \Big(1 - \frac{1}{e} \Big) \E_{A \sim \pmb{Y}'} \Big[ \max_{s \in A}v_s \mid \text{at least 1 element from $M$ in $A$} \Big]\\
& \geq \Big(1 - \frac{1}{e} \Big) \E_{A \sim \pmb{Y}'} \Big[\max_{s \in A}v_s \mid \text{exactly 1 element from $M$ in $A$} \Big]\\
& = \Big(1 - \frac{1}{e} \Big)\sum_{s \in M} v_s \frac{p_s'}{\sum_{i \in M}p_s'} \quad = \Big(\frac{1 - 1/e}{1} \Big) \sum_{s \in M}p_s' v_s \bec{$\sum_{s \in M}p_s' = 1$}.
\end{align*}

Thus, we have that
\begin{align*}
\E_{A \sim \pmb{Y}} \Big[ \max_{s \in A}v_s \Big] & \geq  \Big(1 - \frac{1}{e} \Big) \sum_{s \in M}p_s' v_s\\
& = \left(1 - \frac{1}{e} \right) \sum_{s \in M}p_s' \Big((v_s - v_b)^+ + v_b \Big) \bec{$v_s \geq v_b$ for $s \in M$}\\
& \geq \Big(1 - \frac{1}{e} \Big) \Big(v_b  + \sum_{s \in M}p_s' \Big((v_s - v_b)^+ \Big) \Big) \bec{$1 = \sum_{s \in M} p_s'$}\\
& \geq \Big(1 - \frac{1}{e} \Big) \Big(v_b  + \sum_{s \in M}p_s \Big((v_s - v_b)^+ \Big) \Big) \bec{$(v_b - v_b)^+ = 0$}\\
& = \Big(1 - \frac{1}{e} \Big) \Big(v_b  + \sum_{s}p_s \Big((v_s - v_b)^+ \Big) \Big) \bec{$v_s > v_b$ iff $s \in M$}
\end{align*}
which gives our lower bound. 

We now show the upper bound. Recall $x^+ := \max(x, 0)$. Notice that  we have for any $t$,
\begin{align}
\label{eq:truncatanateUBound}
\max(x, y) \leq t + (x - t)^+ + (y - t)^+ .
\end{align}
In particular, Eq.~\eqref{eq:truncatanateUBound} follows because the RHS in each of the following cases is always $\geq \max\{x,y\}$.
\begin{itemize}
\item if $t \geq \max\{x,y\}$ we get $t$ for the RHS.
\item if $t \geq x$ and $t < y$  we get $t + y - t = y = \max\{x,y\}$ for the RHS; the symmetric case also holds.
\item if $t < x$ and $t < y$ we get $t + x - t + y - t = x + y - t \geq \max\{x,y\}$ for the RHS.
\end{itemize}
It is easy to verify that this holds for a $\max$ of more than two inputs; i.e.\ for a set $S$ of reals we have $\max(S) \leq t + \sum_{s \in S} (s - t)^+$. Thus, we have
\begin{align*}
\E_{A \sim \pmb{Y}} \Big[ \max_{s \in A}v_s \Big]  & \leq \E_{A \sim \pmb{Y}} \Big[ v_b + \sum_{s \in A} (v_s - v_b)^+\Big] ~~ = ~~  v_b + \E_{A \sim \pmb{Y}} \Big[\sum_{s \in A} (v_s - v_b)^+\Big]  \bec{Eq.~\eqref{eq:truncatanateUBound}}\\
& = v_b + \E_{A \sim \pmb{Y}} \Big[\sum_{s \in A \cap M} (v_s - v_b)^+  +  \sum_{s \in A \cap (X \setminus M)} (v_s - v_b)^+ \Big] \\
& = v_b + \E_{A \sim \pmb{Y}} \Big[\sum_{s \in A \cap M} (v_s - v_b)^+  \Big] \bec{$v_s > v_b$ iff $s \leq b$}\\
& = v_b + \E_{A \sim \pmb{Y}} \Big[\sum_{s \in A \cap M} (v_s - v_b)  \Big] \bec{$v_s \geq v_b$ for $s \in M$}\\
& = v_b +\sum_{s \in M} p_s \cdot (v_s - v_b)\\
&= v_b + \sum_{s} p_s \cdot (v_s - v_b)^+ \bec{$v_s > v_b$ iff $s \in M$},
\end{align*}
which is exactly the desired upper bound.
\end{proof}

\ifFULL\else
\altTruncForm*
\convexityLemmaProof
\fi

\xEToTrunc*
\begin{proof} We have
\begin{align*}
\ce(X_1, \pmb{X_2}) &= \E_{A} [\max_{s \in A} \{\cost(X_1, X_2^{(s)}) \}] \\
& \leq B(X_1, \pmb{X_2}) + \sum_{s } p_s \cdot \left( \cost(X_1, X_2^{(s)}) - B(X_1, \pmb{X_2}) \right)^+ \bec{\Cref{lem:eTrick}}\\
&= \ct(X_1, \pmb{X}_2) \bec{\Cref{lem:altTruncForm}}.
\end{align*}
\end{proof}

\TOptToEOpt*
\begin{proof}
We have 
\begin{align*}
&\quad \ct(T_1, \pmb{T_2}) \\
&\leq \ct(E_1, \pmb{E_2}) \bec{$(T_1, \pmb{T_2})$ minimizes $\ct$}\\
&= \min_{B} \Big[B + \sum_{s} p_s \cdot (\cost(E_1, E_2^{(s)}) - B)^+\Big]\\
&\leq B(E_1, \pmb{E_2}) + \sum_{s} p_s \cdot (\cost(E_1, E_2^{(s)}) - B(E_1, \pmb{E_2}))^+ \\
&\leq \left( \frac{1}{1 - 1/e} \right ) \E_{A} [\max_{s \in A} \{\cost(E_1, E_2^{(s)})\} ] \bec{\Cref{lem:eTrick}}\\
&= \left( \frac{1}{1 - 1/e} \right ) \ce(E_1, \pmb{E_2}).
\end{align*}
\end{proof}

\section{Deferred Proofs of \S\ref{sec:clusteringAppl}}\label{sec:defProofsCluster}
We use the following lemma in the proof of \Cref{lem:searchB}.
\begin{lemma}\label{lem:searchBStarObs}
If $B \geq \OPT$ then $\val\left(\frac{B}{1-1/e}\right) \leq \left( \frac{3}{1-1/e} \right) \cdot \OPT \leq \left( \frac{3}{1-1/e} \right) \cdot B$.
\end{lemma}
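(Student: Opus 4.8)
The plan is to exhibit one explicit feasible solution to the truncated LP $P_{B'}$ with $B' := \frac{B}{1-1/e}$, built from the optimal \emax $k$-center solution $X^*$, and then bound its objective via \Cref{lem:eTrick}. First I would take the integral solution that opens exactly the centers of $X^*$ and routes each client to its nearest open center, i.e.\ $x_i = 1$ for $i \in X^*$ (and $0$ otherwise) and $z_{is} = 1$ for $i = \argmin_{i' \in X^*} c_{i's}$. This is feasible for \ref{LP:KCLLP} with parameter $B'$ (at most $k$ centers, every client served), so writing $v_s := d(X^*, s) = \min_{i \in X^*} c_{is}$ we immediately obtain $\val(B') \le \sum_s p_s\, f_{B'}(v_s)$. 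The second claimed inequality $\frac{3\,\OPT}{1-1/e} \le \frac{3B}{1-1/e}$ is trivial from $\OPT \le B$, so everything reduces to bounding this truncated sum.

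Next I would record two consequences of the hypothesis $B \ge \OPT$. Writing $\bar B := B(X^*)$ for the border value of $X^*$ and applying the lower bound of \Cref{lem:eTrick} to the values $v_s$ (whose expected maximum over the independent Bernoulli scenarios is exactly $\OPT$), I get $\bar B + \sum_s p_s (v_s - \bar B)^+ \le \frac{\OPT}{1-1/e}$; in particular $\bar B \le \frac{\OPT}{1-1/e} \le B'$. I then split the cost using the pointwise fact that $f_{B'}(v_s) \le (v_s - \bar B)^+ + B'$ whenever $v_s \ge B'$ (valid since $B' \ge \bar B$) and $f_{B'}(v_s) = 0$ otherwise, giving
\[
\sum_s p_s f_{B'}(v_s) \;\le\; \sum_s p_s (v_s - \bar B)^+ \;+\; B'\,\lambda, \qquad \lambda := \sum_{s:\, v_s \ge B'} p_s .
\]
The first term is at most $\frac{\OPT}{1-1/e}$ by the displayed bound from \Cref{lem:eTrick}.

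The crux — and the step I expect to be the main obstacle — is controlling $B'\lambda$, since a priori $B'$ can be arbitrarily large. The key observation is that a large threshold forces the surviving mass $\lambda$ to be small. Restricting the expected maximum to the heavy scenarios $H := \{s : v_s \ge B'\}$ and using $v_s \ge 0$ gives $\OPT = \E_A[\max_{s \in A} v_s] \ge B'\cdot\Pr_A[A \cap H \neq \emptyset] \ge B'\bigl(1 - e^{-\lambda}\bigr)$, where the last step uses $\Pr_A[A\cap H\ne\emptyset] = 1 - \prod_{s\in H}(1-p_s) \ge 1 - e^{-\lambda}$. Combining this with $B' = \frac{B}{1-1/e}$ and $\OPT \le B$ yields $1 - e^{-\lambda} \le 1-\tfrac1e$, hence $\lambda \le 1$; and on $[0,1]$ the elementary concave bound $1 - e^{-\lambda} \ge (1-\tfrac1e)\lambda$ then gives $B\lambda = B'(1-\tfrac1e)\lambda \le B'(1-e^{-\lambda}) \le \OPT$, so $B'\lambda \le \frac{\OPT}{1-1/e}$. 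Adding the two terms gives $\val(B') \le \frac{2\,\OPT}{1-1/e} \le \frac{3\,\OPT}{1-1/e}$, which proves the lemma (the argument in fact yields the stronger constant $2$). The only points I would verify carefully are the boundary case $v_s = B'$ in the pointwise truncation inequality and the elementary estimate $1-e^{-\lambda}\ge(1-1/e)\lambda$ on $[0,1]$, both of which are routine (the latter by concavity, since both sides agree at $\lambda=0$ and $\lambda=1$).
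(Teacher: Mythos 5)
Your proof is correct, and while it shares the paper's opening move---plugging the characteristic vector of an optimal integral solution into \ref{LP:KCLLP} and reducing to a bound on a truncated sum---it diverges in the step that actually carries the weight. The paper works with the \trunc-optimal solution $X_T^*$, replaces $f_{B/(1-1/e)}$ by the \emph{smaller-threshold} truncation $f_{B(X_T^*)}$ (valid since $B(X_T^*)\le \OPT/(1-1/e)\le B/(1-1/e)$), and then the surviving scenarios are essentially $M(X_T^*)$, whose probability mass is below $2$ by the very definition of the border; this pays $2B(X_T^*)+\ct(X_T^*)\le 3\,\OPT/(1-1/e)$ via \Cref{lem:TOptToEOpt}. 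You instead keep the threshold at $B'=B/(1-1/e)$, work with the \emax-optimal $X^*$ and \Cref{lem:eTrick} directly, and supply a genuinely new ingredient to control the term $B'\lambda$: the observation that $\OPT\ge B'(1-e^{-\lambda})$ forces $\lambda\le 1$, and the concavity estimate $1-e^{-\lambda}\ge(1-1/e)\lambda$ on $[0,1]$ then gives $B'\lambda\le \OPT/(1-1/e)$. This is slicker than the paper's ``lower the threshold so the mass is automatically $<2$'' trick and yields the better constant $2/(1-1/e)$ in place of $3/(1-1/e)$; the trade-off is that it leans on a small additional probabilistic/analytic argument, whereas the paper's route is a purely mechanical chain of inequalities already set up by \Cref{lem:altTruncForm} and \Cref{lem:TOptToEOpt}. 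All the steps you flagged for verification (the boundary case $v_s=B'$, and the chord bound for $1-e^{-\lambda}$) check out, so the proof is complete as written.
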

\begin{proof}
Again, let $X_T^*$ be the optimal solution to the \trunc problem corresponding to our \emax k-center problem. First, notice that by \Cref{lem:TOptToEOpt} we have $ B(X_T^*) + \sum_s p_s \cdot (d(s, X_T^*) - B(X_T^*))^+ = \cost_T(X_T^*) \leq \left(\frac{1}{1-1/e} \right) \ce(X^*) = \left(\frac{1}{1-1/e} \right) \OPT$ and so we have $B(X_T^*) \leq  \OPT \cdot \left(\frac{1}{1-1/e} \right) $.

Now suppose $B \geq \OPT$. Let $z_{is}^*$ be $1$ if $X_T^*$ assigns $s$ to $i$ and $0$ otherwise and let $x^*(i)$ be $1$ if $i \in X_T^*$ and $0$ otherwise. Since $(x^*, z^*)$ is feasible for $P_{\left(\frac{B}{1-1/e} \right)}$ we have
\begin{align}
\val\Big(\frac{B}{1-1/e} \Big) &\leq \sum_s p_s \Big( \sum_i f_{(\frac{B}{1-1/e} )}(c_{is})\cdot z_{is}^* \Big) \nonumber\\
&\leq  \sum_s p_s \Big( \sum_i f_{B(X_T^*) }(c_{is} )\cdot z_{is}^* \Big) \label{line:BtoBStar}\\
&= \sum_{s \in M(X_T^*)} p_s \cdot (d(s, X_T^*)) \label{line:aLine} \\
&\leq \sum_{s \in M(X_T^*)} p_s \cdot ((d(s, X_T^*) - B(X_T^*))^+ +B(X_T^*)) \label{line:useTrunc} \\
&< 2B(X_T^*) + \ct(X_T^*) \label{line:sumOfProbs}\\
& \leq 2B(X_T^*) + \Big(\frac{1}{1-1/e} \Big) \cdot \OPT \label{line:useTrick},
\end{align}
where Eq.\eqref{line:BtoBStar} follows since $\left( \frac{1}{1-1/e} \right) \cdot B \geq \left( \frac{1}{1-1/e} \right) \cdot \OPT \geq  B(X_T^*)$, Eq.\eqref{line:aLine} follows by definition of $M(X_T^*)$, Eq.\eqref{line:useTrunc} follows by $x \leq (x-t)^+ + t$, Eq.\eqref{line:sumOfProbs} follows since $\sum_{s \in M(X_T^*)} p_s < 2$ and Eq.\eqref{line:useTrick} follows by \Cref{lem:TOptToEOpt}. Lastly, since $B(X_T^*) \leq  \OPT \cdot \left(\frac{1}{1-1/e} \right) $ by assumption, we have $\val\left(\frac{B}{1-1/e} \right) \leq  \left(\frac{3}{1-1/e} \right) \cdot \OPT$.
\end{proof}

\searchB*
\begin{proof}
Our algorithm to return $\hat{B}$ is as follows: let $\bar{B}$ be $(1 + \eps)^i$ where $i$ is the smallest $i \in \mathbb{Z}^+$ such that $\val\left(\frac{(1 + \eps)^i}{1-1/e}\right) \leq \left(\frac{3}{1-1/e} \right) \cdot (1 + \eps)^i$. Thus, we have
\begin{enumerate}[itemsep=0ex,topsep=0pt,parsep=0pt]
\item $\bar{B} \leq (1 + \eps) \cdot \OPT$ by \Cref{lem:searchBStarObs} and how we choose $\bar{B}$;
\item $\val \left(\frac{\bar{B}}{1-1/e} \right) \leq \frac{3}{1-1/e} \cdot \bar{B}$  trivially by how we choose $\bar{B}$.
\end{enumerate}
Lastly, we return $\hat{B} := \frac{ \bar{B}}{1 - 1/e}$. By (1) and the definition of $\hat{B}$ we have that $\hat{B} \leq (1 + \eps) \left(\frac{\OPT}{1-1/e} \right)$. Moreover, if $\bar{B} < \OPT$ then by (2) we have $\val(\hat{B}) \leq \left(\frac{3}{1-1/e}\right) \OPT$ and if $\bar{B} \geq \OPT$ then by \Cref{lem:searchBStarObs} we have $\val (\hat{B}) \leq \left(\frac{3}{1-1/e}\right) \OPT$.
\end{proof}

\section{Deferred Proofs of \S\ref{sec:RedToEmax}}\label{sec:defProofsHyb}
\hybLB*

\begin{proof}
	We first bound the second stage cost of \emax  depending on whether a stochastic scenario ($s>m$) realizes into $A$ or not. Recollect, $A$ always contains the first $m$ demand-robust scenarios because their probability is $1$.  Since a stochastic scenario (if it realizes) always costs more than all the demand-robust scenarios  by Eq.~\eqref{eq:defnGamma}, we get
	\begin{align*}
	\E_{A} \Big[\max_{s \in A}\Big\{ \ceb(H_2^{(s)}) \Big\} \Big]  &  \leq    \Pr_{A }[A = [m]] \cdot \max_{s \in [m]} \Big\{ \ceb(H_2^{(s)})\Big\} + \sum_{s> m } \Pr_{A}[s \in A] \cdot  \ceb( H_2^{(s)}) \\
	& \leq  1 \cdot \max_{s \in [m]} \Big\{\ceb(\overline{H}_2^{(s)}) \Big\} + \sum_{s \in [m]} p_s \cdot  \ceb(\overline{H}_2^{(s)}) \\
	& =    \max_{s \in [m]} \Big\{ \rho \cdot \chb( \overline{H}_2^{(s)})\Big\} + \frac{1}{\gamma}\E_{s\sim \calD} \Big[ \gamma (1-\rho) \cdot \chb(\overline{H}_2^{(s)}) \Big],
	\end{align*}
	where the last equality uses the definition of $\ceb$ and that $p_{s}=\frac{\calD(s)}{\gamma}$.
	Now using Eq.~\eqref{lem:sepFirstStageCost},  
	\begin{align*}
	\ce(H_1, \pmb{H_2})   & \leq   \cha(H_1) + \rho \cdot \max_{s \in [m]} \Big\{ \chb( \overline{H}_2^{(s)})\Big\} + (1-\rho) \cdot \E_{s\sim \calD} \Big[ \chb(  \overline{H}_2^{(s)}) \Big] \\
	&= \ch(H_1, \pmb{\overline{H}_2}). 	\qedhere
	\end{align*}	
\end{proof}

\bibliographystyle{alpha}
\bibliography{conferencestrings,bib}

\newcommand{\etalchar}[1]{$^{#1}$}
\begin{thebibliography}{GGP{\etalchar{+}}15b}

\bibitem[ADSY12]{ADSY-OR12}
Shipra Agrawal, Yichuan Ding, Amin Saberi, and Yinyu Ye.
\newblock {Price of correlations in stochastic optimization}.
\newblock {\em Operations Research}, 60(1):150--162, 2012.

\bibitem[AGGN08]{AGGN-SODA08}
Barbara~M. Anthony, Vineet Goyal, Anupam Gupta, and Viswanath Nagarajan.
\newblock A plant location guide for the unsure.
\newblock In {\em Proceedings of the ACM-SIAM Symposium on Discrete Algorithms
  (SODA)}, 2008.

\bibitem[Ala14]{Alaei-SICOMP14}
Saeed Alaei.
\newblock Bayesian combinatorial auctions: Expanding single buyer mechanisms to
  many buyers.
\newblock {\em SIAM Journal on Computing (SICOMP)}, 43(2):930--972, 2014.

\bibitem[AT02]{acerbi2002expected}
Carlo Acerbi and Dirk Tasche.
\newblock Expected shortfall: a natural coherent alternative to value at risk.
\newblock {\em Economic notes}, 31(2):379--388, 2002.

\bibitem[BBC11]{BBC2011}
D.~Bertsimas, D.~Brown, and C.~Caramanis.
\newblock Theory and applications of robust optimization.
\newblock {\em SIAM Review}, 53(3):464--501, 2011.

\bibitem[BNT04]{bertsimas2004probabilistic}
Dimitris Bertsimas, Karthik Natarajan, and Chung-Piaw Teo.
\newblock Probabilistic combinatorial optimization: Moments, semidefinite
  programming, and asymptotic bounds.
\newblock {\em SIAM Journal on Optimization}, 15(1):185--209, 2004.

\bibitem[CGTS02]{CGTS0-JCSS02}
Moses Charikar, Sudipto Guha, {\'E}va Tardos, and David~B Shmoys.
\newblock A constant-factor approximation algorithm for the k-median problem.
\newblock {\em Journal of Computer and System Sciences (JCSS)}, 65(1):129--149,
  2002.

\bibitem[CS18]{CS-arXiv17}
Deeparnab Chakrabarty and Chaitanya Swamy.
\newblock Interpolating between k-median and k-center: Approximation algorithms
  for ordered k-median.
\newblock In {\em Proceedings of the International Colloquium on Automata,
  Languages and Programming (ICALP)}, pages 29:1--29:14, 2018.

\bibitem[DGRS05]{DGRS-FOCS05}
Kedar Dhamdhere, Vineet Goyal, R.~Ravi, and Mohit Singh.
\newblock How to pay, come what may: Approximation algorithms for demand-robust
  covering problems.
\newblock In {\em Proceedings of the Symposium on the Foundations of Computer
  Science (FOCS)}, pages 367--378, 2005.

\bibitem[DN17]{dhara2017polynomial}
Anulekha Dhara and Karthik Natarajan.
\newblock On the polynomial solvability of distributionally robust k-sum
  optimization.
\newblock {\em Optimization Methods and Software}, 32(4):738--753, 2017.

\bibitem[DRS05]{DRS-IPCO05}
Kedar Dhamdhere, R~Ravi, and Mohit Singh.
\newblock On two-stage stochastic minimum spanning trees.
\newblock In {\em Proceedings of International Conference on Integer
  Programming and Combinatorial Optimization (IPCO)}, pages 321--334, 2005.

\bibitem[DY10]{DelageYe2010}
Erick Delage and Yinyu Ye.
\newblock Distributionally robust optimization under moment uncertainty with
  application to data-driven problems.
\newblock {\em Operations Research}, 58(3):595--612, 2010.

\bibitem[GGP{\etalchar{+}}15a]{Golovin2015}
Daniel Golovin, Vineet Goyal, Valentin Polishchuk, R.~Ravi, and Mikko
  Sysikaski.
\newblock Improved approximations for two-stage min-cut and shortest path
  problems under uncertainty.
\newblock {\em Mathematical Programming}, 149(1):167--194, Feb 2015.

\bibitem[GGP{\etalchar{+}}15b]{GGPRS-MP15}
Daniel Golovin, Vineet Goyal, Valentin Polishchuk, R~Ravi, and Mikko Sysikaski.
\newblock Improved approximations for two-stage min-cut and shortest path
  problems under uncertainty.
\newblock {\em Mathematical Programming}, 149(1-2):167--194, 2015.

\bibitem[GNR10]{GNR-ICALP10}
Anupam Gupta, Viswanath Nagarajan, and R~Ravi.
\newblock Thresholded covering algorithms for robust and max-min optimization.
\newblock {\em Automata, Languages and Programming}, pages 262--274, 2010.

\bibitem[GPRS04]{GPRS-STOC04}
Anupam Gupta, Martin Pal, R~Ravi, and Amitabh Sinha.
\newblock Boosted sampling: approximation algorithms for stochastic
  optimization.
\newblock In {\em Proceedings of the Symposium on the Theory of Computing
  (STOC)}, pages 417--426, 2004.

\bibitem[GRS04]{GRS-FOCS04}
Anupam Gupta, R~Ravi, and Amitabh Sinha.
\newblock An edge in time saves nine: Lp rounding approximation algorithms for
  stochastic network design.
\newblock In {\em Proceedings of the Symposium on the Foundations of Computer
  Science (FOCS)}, pages 218--227, 2004.

\bibitem[GS10]{GohSim2010}
Joel Goh and Melvyn Sim.
\newblock Distributionally robust optimization and its tractable
  approximations.
\newblock {\em Operations Research}, 58(4-part-1):902--917, 2010.

\bibitem[IKMM04]{IKMM-SODA04}
Nicole Immorlica, David Karger, Maria Minkoff, and Vahab~S Mirrokni.
\newblock On the costs and benefits of procrastination: Approximation
  algorithms for stochastic combinatorial optimization problems.
\newblock In {\em Proceedings of the ACM-SIAM Symposium on Discrete Algorithms
  (SODA)}, pages 691--700, 2004.

\bibitem[JG18]{jiang2018risk}
Ruiwei Jiang and Yongpei Guan.
\newblock Risk-averse two-stage stochastic program with distributional
  ambiguity.
\newblock {\em Operations Research}, 66(5):1390--1405, 2018.

\bibitem[LR76]{lai1976maximally}
TL~Lai and Herbert Robbins.
\newblock Maximally dependent random variables.
\newblock {\em Proceedings of the National Academy of Sciences},
  73(2):286--288, 1976.

\bibitem[LS19]{LS-STOC19}
Andre Linhares and Chaitanya Swamy.
\newblock Approximation algorithms for distributionally-robust stochastic
  optimization with black-box distributions.
\newblock In {\em Proceedings of the Symposium on the Theory of Computing
  (STOC)}, 2019.

\bibitem[MN79]{meilijson1979convex}
Isaac Meilijson and Arthur N{\'a}das.
\newblock Convex majorization with an application to the length of critical
  paths.
\newblock {\em Journal of Applied Probability}, 16(3):671--677, 1979.

\bibitem[RS04]{RS-IPCO04}
R~Ravi and Amitabh Sinha.
\newblock Hedging uncertainty: Approximation algorithms for stochastic
  optimization problems.
\newblock In {\em Proceedings of International Conference on Integer
  Programming and Combinatorial Optimization (IPCO)}, pages 101--115, 2004.

\bibitem[Sca58]{Scarf1958}
Herbert Scarf.
\newblock A min-max solution of an inventory problem.
\newblock {\em Studies in the Mathematical Theory of Inventory and Production},
  pages 201--209, 1958.

\bibitem[Sri07]{S-SODA07}
Aravind Srinivasan.
\newblock Approximation algorithms for stochastic and risk-averse optimization.
\newblock In {\em Proceedings of the ACM-SIAM Symposium on Discrete Algorithms
  (SODA)}, pages 1305--1313, 2007.

\bibitem[SS06]{SS2006}
David~B. Shmoys and Chaitanya Swamy.
\newblock An approximation scheme for stochastic linear programming and its
  application to stochastic integer programs.
\newblock {\em Journal of the ACM (JACM)}, 53(6):978--1012, November 2006.

\bibitem[Swa11]{S-SODA11}
Chaitanya Swamy.
\newblock Risk-averse stochastic optimization: probabilistically-constrained
  models and algorithms for black-box distributions.
\newblock In {\em Proceedings of the ACM-SIAM Symposium on Discrete Algorithms
  (SODA)}, pages 1627--1646, 2011.

\bibitem[SZY09]{S-OR09}
Anthony Man-Cho So, Jiawei Zhang, and Yinyu Ye.
\newblock Stochastic combinatorial optimization with controllable risk aversion
  level.
\newblock {\em Mathematics of Operations Research}, 34(3):522--537, 2009.

\end{thebibliography}

\end{document}